\algrenewcommand\alglinenumber[1]{\tiny #1:}
\journal{Journal of Parallel and Distributed Computing}
\newcommand{\punt}[1]{}
\newcommand{\cmnt}[1]{}
\newtheorem{theorem}{Theorem}
\newtheorem{lemma}{Lemma}
\newcommand{\secref}[1]{Section~\ref{sec:#1}}
\newcommand{\figref}[1]{Figure~\ref{fig:#1}}
\newcommand{\tabref}[1]{Table~\ref{tab:#1}}
\newcommand{\thmref}[1]{Theorem~\ref{thm:#1}}
\newcommand{\eqnref}[1]{Eqn(\ref{eq:#1})}
\newcommand{\algoref}[1]{{Algorithm \ref{alg:#1}}}
\newcommand{\subsecref}[1]{SubSection~\ref{subsec:#1}}
\newcommand{\Lineref}[1]{Line~\ref{lin:#1}}
\newcommand{\remove}[1]{}
\newcommand{\ignore}[1]{}
\newcommand{\op} {operation}
\newcommand{\comm}[1] {\textit{committed(#1)}}
\newcommand{\aborted}[1] {\textit{aborted}(#1)}
\newcommand{\txns}[1] {\textit{txns}(#1)}
\newcommand{\evts}[1] {evts(#1)}
\newcommand{\ssch} {sub-history\xspace}
\newcommand{\valid} {valid}
\newcommand{\legal} {legal}
\newcommand{\tobj} {\emph{t-object}}
\newcommand{\opq} {opaque\xspace}
\newcommand{\opty} {opacity\xspace}
\newcommand{\sble} {serializable\xspace}
\newcommand{\sbty} {serializability\xspace}
\newcommand{\tseq} {t-sequential\xspace}
\newcommand{\mvto} {MVTO\xspace}
\newcommand{\cc} {correctness-criterion\xspace}
\newcommand{\readi} {\emph{$read_i$}\xspace}
\newcommand{\cl} {\emph{confList}\xspace}
\newcommand{\tryc} {STM.tryC\xspace}
\newcommand{\trya} {STM.tryA}
\newcommand{\cminer} {Concurrent Miner\xspace}
\newcommand{\cvalidator} {Concurrent Validator\xspace}
\newcommand{\conminer} {\emph{concurrent miner}\xspace}
\newcommand{\convalidator} {\emph{concurrent validator}\xspace}
\newcommand{\CG} {\emph{Conflict Graph}\xspace}
\newcommand{\cgraph} {\emph{conflict graph}\xspace}
\newcommand{\Cgraph} {\emph{Conflict graph}\xspace}
\newcommand{\bg} {block graph\xspace}
\newcommand{\aul} {\emph{auList[]}\xspace}
\newcommand{\confg} {\emph{confGraph}\xspace}
\newcommand{\gconfl} {\emph{getConfList}\xspace}
\newcommand{\vp} {\emph{vPred}\xspace}
\newcommand{\vc} {\emph{vCurr}\xspace}
\newcommand{\vn} {\emph{vNext}\xspace}
\newcommand{\ep} {\emph{ePred}\xspace}
\newcommand{\ec} {\emph{eCurr}\xspace}
\newcommand{\en} {\emph{eNext}\xspace}
\newcommand{\vl} {\emph{vList}\xspace}
\newcommand{\el} {\emph{eList}\xspace}
\newcommand{\inc} {\emph{inCnt}\xspace}
\newcommand{\txlog} {\emph{txlog}\xspace}
\newcommand{\adde} {\texttt{addEdge}\xspace}
\newcommand{\addv} {\texttt{addVert}\xspace}
\newcommand{\egn} {\emph{eNode}\xspace}
\newcommand{\vgn} {\emph{vNode}\xspace}
\newcommand{\enode} {\emph{eNode}\xspace}
\newcommand{\vnode} {\emph{vNode}\xspace}
\newcommand{\searchl} {\texttt{searchLocal}\xspace}
\newcommand{\searchg} {\texttt{searchGlobal}\xspace}
\newcommand{\bc} {blockchain\xspace}
\newcommand{\BC} {Blockchain\xspace}
\newcommand{\scontract} {smart contract\xspace}
\newcommand{\SContract} {smart contract\xspace}
\newcommand{\Scontract} {Smart contract\xspace}
\newcommand{\au} {atomic-unit\xspace}
\newcommand{\miner} {miner\xspace}
\newcommand{\Miner} {miner\xspace}
\newcommand{\Validator} {validator\xspace}
\newcommand{\exec} {\emph{executeCode}\xspace}
\newcommand{\nc} {\emph{nCount}\xspace}
\newcommand{\vt} {\emph{vTail}\xspace}
\newcommand{\vh} {\emph{vHead}\xspace}
\newcommand{\eh} {\emph{eHead}\xspace}
\newcommand{\et} {\emph{eTail}\xspace}
\newcommand{\tl} {\emph{thLog}\xspace}
\newcommand{\cachel} {\emph{cacheList}\xspace}
\newcommand{\begtrans} {\emph{begin}\xspace}
\newcommand{\lastw} {lastWrite}
\newcommand{\mvve} {MVVE\xspace}
\newcommand{\vie} {VE\xspace}
\newcommand{\ce} {CE\xspace}
\newcommand{\rs}{rset\xspace}
\newcommand{\ws}{wset\xspace}
\newcommand{\begt} {STM.begin\xspace}
\newcommand{\tread} {STM.read\xspace}
\newcommand{\twrite} {STM.write\xspace}
\newcommand{\commit}{\mathcal{C}}
\newcommand{\abort}{\mathcal{A}}
\newcommand{\mth} {method\xspace}
\newcommand {\incomp}[1] {#1.incomp}
\newcommand {\live}[1] {#1.live}
\newcommand{\shist}[2] {#2.subhist(#1)\xspace}
\newcommand\psnote[1]{\todo[size=\footnotesize,color=red!35]{PS: #1}}
\newcommand{\hbg}{happen-before~}
\algnewcommand\algorithmicswitch{\textbf{switch}}
\algnewcommand\algorithmiccase{\textbf{case}}
\algnewcommand\algorithmicassert{\texttt{assert}}
\algnewcommand\Assert[1]{\State \algorithmicassert(#1)}%
\newcommand{\blank}[1]{\hspace*{#1}}
\newcommand{\sk}[1]{{\color{blue}#1}}
\newcommand{\specbin} {Speculative Bin\xspace}
\newcommand{\blg} {BG\xspace}
\newcommand{\sctrn} {{AU}\xspace}
\begin{document}


\begin{frontmatter}

\title{{OptSmart}: A Space Efficient \underline{Opt}imistic Concurrent Execution of \underline{Smart} Contracts\tnoteref{mytitlenote}
}

\tnotetext[mytitlenote]{A preliminary version of this paper appeared in 27$^{th}$ Euromicro International Conference On Parallel, Distributed, and Network-Based Processing (PDP\cite{Anjana+:CESC:PDP:2019}) 2019, Pavia, Italy. A poster version of this work received \textbf{Best Poster Award} in ICDCN 2019 \cite{Anjana+Poster:ICDCN:2019}.}
\tnotetext[mytitlenote2]{This manuscript covers the exhaustive related work, detailed proposed mechanism with algorithms, optimizations on the size of the block graph, rigorous correctness proof, and additional experimental evaluations with state-of-the-art.\\
$^{***}$Author sequence follows lexical order of last names.}

\author{Parwat Singh Anjana$^\dagger$, Sweta Kumari$^\ddagger$, Sathya Peri$^\dagger$, Sachin Rathor$^\dagger$,\\and Archit Somani$^\ddagger$
}
\address{$^\dagger$Department of CSE, Indian Institute of Technology Hyderabad, Telangana, India\\$^\ddagger$Department of Computer Science, Technion, Israel}



\ead{cs17mtech11014@iith.ac.in, sweta@cs.technion.ac.il, sathya\_p@cse.iith.ac.in, cs18mtech01002@iith.ac.in, archit@cs.technion.ac.il}


\begin{abstract}
Popular blockchains such as Ethereum and several others execute complex transactions in blocks through user-defined scripts known as \emph{smart contracts}. Serial execution of smart contract transactions/atomic-units (AUs) fails to harness the multiprocessing power offered by the prevalence of multi-core processors. By adding concurrency to the execution of AUs, we can achieve better efficiency and higher throughput. 

In this paper, we develop a concurrent miner that proposes a block by executing the AUs concurrently using \emph{optimistic Software Transactional Memory systems (STMs)}. It captures the independent AUs in a \emph{concurrent bin} and dependent AUs in the \emph{block graph (BG)} efficiently. Later, we propose a concurrent validator that re-executes the same AUs concurrently and deterministically using a concurrent bin followed by BG given by the miner to verify the block. We rigorously prove the correctness of concurrent execution of AUs and show significant performance gain than state-of-the-art.


\end{abstract}

\begin{keyword}
Blockchain, Smart Contracts, Software Transactional Memory System, Multi-version, Concurrency Control, Opacity
\end{keyword}

\end{frontmatter}


\section{Introduction}
\label{sec:intro}
It is commonly believed that \bc{} is a revolutionary technology for doing business over the Internet. \BC is a decentralized, distributed database or ledger of records {that store the information in cryptographically linked blocks.} Cryptocurrencies such as Bitcoin \cite{Nakamoto:Bitcoin:2009} and Ethereum \cite{ethereum:url} were the first to popularize the \bc technology. \BC{s} are now considered for automating and securely storing user records such as healthcare, financial services, real estate, etc. \cmnt{\BC{s} ensure that the records are tamper-proof but publicly readable. With their amazing usefulness to revolutionize everyday life, \BC{s} are now considered for automating and securely storing user records such as healthcare, financial services, real estate, and supply chain management.}\BC network consists of multiple peers (or nodes) where peers do not necessarily trust each other. Each node maintains a copy of the distributed ledger. \emph{Clients}, users of the \bc, send requests or \emph{transactions} to the nodes of the \bc called as \emph{miners}. The miners collect multiple transactions from the clients and form a \emph{block}. Miners then propose these blocks to be added to the \bc. 
\cmnt{They follow a global consensus protocol to agree on which blocks are chosen to be added and in what order. While adding a block to the \bc, the miner incorporates the hash of the previous block into the current block. This makes it difficult to tamper with the distributed ledger. The resulting structure is in the form of a linked list or a chain of blocks and hence the name \bc.} 

The transactions sent by clients to miners are part of a larger code called as \emph{\scontract{s}} that provide several complex services such as managing the system state, ensuring rules, or credentials checking of the parties involved \cite{Dickerson+:ACSC:PODC:2017}. \Scontract{s} are like a `class' in programming languages that encapsulate data and methods which operate on the data. The data represents the state of the \scontract{} (as well as the \bc) and the \mth{s} (or functions) are the transactions that possibly can change contract state. \cmnt{A transaction invoked by a client is typically such a \mth or a collection of \mth{s} of the \scontract{s}.}Ethereum uses Solidity \cite{Solidity} while Hyperledger supports language such as Java, Golang, Node.js, etc. 

\cmnt{
	\sk{{\textbf{Listing 1: } Send function}}
	\begin{lstlisting}[escapechar=|]
	send(s_id, r_id, amount)
	{
		if(amount > bal[s_id]) |\label{line:condition}|
	  		throw;
		bal[s_id] -= amount;
		bal[r_id] += amount;
	}
	\end{lstlisting}

}

\noindent
\textbf{Motivation for Concurrent Execution of Smart Contracts: }
Dickerson et al. \cite{Dickerson+:ACSC:PODC:2017} observed that \scontract{} transactions are executed in two different contexts in Ethereum \bc{}. First, executed by miners while forming a block-- a miner selects a sequence of client requests (transactions), executes the smart contract code of these transactions in sequence, transforming the state of the associated contract in this process. The miner then stores the sequence of transactions, the resulting final state of the contracts, and the previous block hash in the block. After creating the block, the miner proposes it to be added to the blockchain through the consensus protocol. The other peers in the system, referred to as \emph{validators} in this context, validate the block proposed by the miner. They re-execute the \scontract{} transactions in the block \emph{serially} to verify the block's final states. If the final states match, then the block is accepted as valid, and the miner who appended this block is rewarded. Otherwise, the block is discarded. Thus the transactions are executed by every peer in the system. It has been observed that the validation runs several times more than the miner code \cite{Dickerson+:ACSC:PODC:2017}.

This design of \scontract{} execution is not efficient as it does not allow any concurrency. In today's world of multi-core systems, the serial execution does not utilize all the cores, resulting in lower throughput. This limitation is not specific only to Ethereum \bc{} but also applies to other popular \bc{s} as well. Higher throughput means more transaction execution per unit time, which clearly will be desired by both miners and validators.  


\ignore{
\figref{sece} illustrates the motivation behind the execution of smart contracts by concurrent miner over serial miner. Consider \figref{sece} (a) which consists of two transactions $T_1$, and $T_2$ executed by the serial miner. Here, $T_1$, and $T_2$ are writing on data-objects $x$, and $y$ respectively. Due to the  serial execution by miner, all the transactions are executing serially although they are working on different data-objects which tends to limit the throughput of miner. Whereas \figref{sece} (b) represents the concurrent execution by miner with same scenario as \figref{sece} (a) where $T_1$ and $T_2$ are running concurrently because they are working on different data-objects. Hence, concurrent execution by miner improves the throughput as compare to serial miner.
\begin{figure}
	\centerline{\scalebox{0.65}{\input{figs/sece.pdf_t}}}
	\caption{Efficient execution of smart contracts}
	\label{fig:sece}
\end{figure}
}

However, the concurrent execution of smart contract transactions is not straightforward. Because various transactions could consist of conflicting access to the shared data objects. Two contract transactions are said to be in \emph{conflict} if both of them access a shared data object, and at least one performs a write operation. Arbitrary execution of these smart contract transactions by the miners might result in the data-races leading to the inconsistent final state of the \bc. Unfortunately, it is impossible to statically identify conflicting contract transactions since contracts are developed in Turing-complete languages. The common solution for correct execution of concurrent transactions is to ensure that the execution is \emph{\sble} \cite{Papad:1979:JACM}. A usual \cc in databases, \sbty ensure that the concurrent execution is equivalent to some serial execution of the same transactions. Thus miners must ensure that their execution is \sble \cite{Dickerson+:ACSC:PODC:2017} or one of its variants as described later.

The concurrent execution of the \scontract{} transactions of a block by the validators, although highly desirable, can further complicate the situation. Suppose a miner ensures that the concurrent execution of the transactions in a block is \sble. Later a validator re-executes the same transactions concurrently. However, during the concurrent execution, the validator may execute two conflicting transactions in an order different from the miner. Thus the serialization order of the miner is different from the validator. These can result in the validator obtaining a final state different from what was obtained by the miner. Consequently, the validator may incorrectly reject the block although it is valid as depicted in \figref{conmv}.

\begin{figure}
	\centerline{\scalebox{0.4}{\input{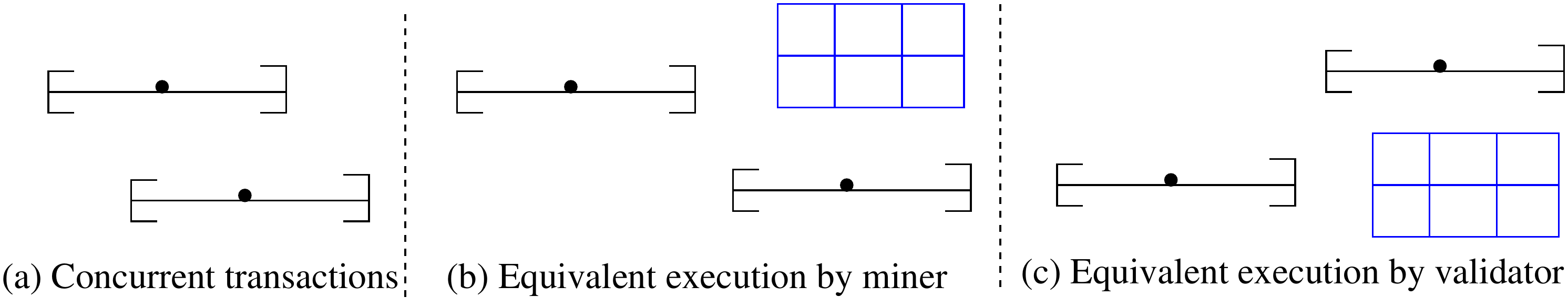_t}}}
	\vspace{-.2cm} \caption{\small (a) consists of two concurrent conflicting transactions $T_1$ and $T_2$ working on same shared data-objects $x$ which are part of a block. (b) represents the miner's concurrent execution with an equivalent serial schedule as $T_1$, $T_2$ and final state (or FS) as 20 from the initial state (or IS) 0. Whereas (c) shows the concurrent execution by a validator with an equivalent serial schedule as $T_2$, $T_1$, and the final state as 10 from IS 0, which is different from the final state proposed by the miner. Such a situation leads to the rejection of the valid block by the validator, which is undesirable.
	}
	\label{fig:conmv}
\end{figure}

Dickerson et al. \cite{Dickerson+:ACSC:PODC:2017} identified these issues and proposed a solution for concurrent execution by both miners and validators. The miner concurrently executes block transactions using abstract locks and inverse logs to generate a serializable execution. Then, to enable correct concurrent execution by the validators, the miners provide a \emph{\hbg}graph in the block. The \hbg graph is a direct acyclic graph over all the transactions of the block. If there is a path from a transaction $T_i$ to $T_j$ then the validator has to execute $T_i$ before $T_j$. Transactions with no path between them can execute concurrently. The validator using the \hbg graph in the block executes all the transactions concurrently using the fork-join approach. This methodology ensures that the final state of the \bc generated by the miners and the validators are the same for a valid block and hence not rejected by the validators. The presence of tools such as a \hbg graph in the block provides a greater enhancement to validators to consider such blocks. It helps them execute quickly through parallelization instead of a block that does not have any parallelization tools. This fascinates the miners to provide such tools in the block for concurrent execution by the validators. 

\ignore {

\figref{cminer}, illustrates the functionality of concurrent miner, which consists of six steps. It has two or more serial miners and one concurrent miner competing to propose a block in the blockchain. Whoever will propose a block first that miner has a chance to get the strong incentive. So the challenge here is to execute the task of the miner concurrently. All the miners are getting the set of transactions from distributed shared memory. As we discussed above, the serial miner executes the transactions one after another and propose the block. Whereas concurrent miner executes the non-conflicting transactions concurrently with Transactional Memory (TM) and finally proposes a block. Complete details about the \figref{cminer} presents in the \subsecref{cminer}.

\begin{figure}
	\centering
	\captionsetup{justification=centering}
	\centerline{\scalebox{0.45}{\input{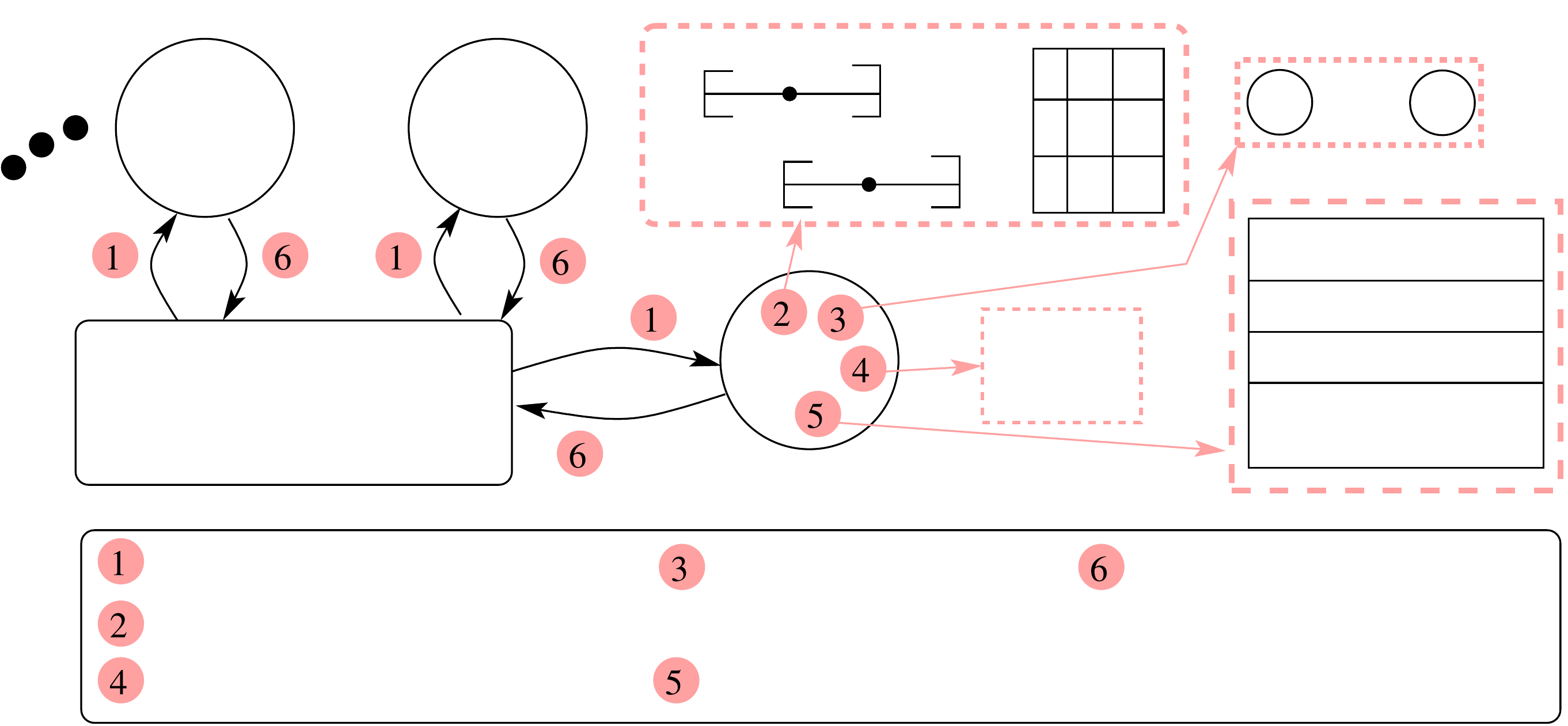_t}}}
	\caption{Execution of Concurrent Miner}
	\label{fig:cminer}
\end{figure}

}

\vspace{1mm}
\noindent
\textbf{Proposed Solution Approach - Optimistic Concurrent Execution and Lock-Free Graph: } Dickerson et al. \cite{Dickerson+:ACSC:PODC:2017} developed a solution to the problem of concurrent miner and validators using locks and inverse logs. It is well known that locks are \emph{pessimistic} in nature. So, in this paper, we propose a \emph{novel} and \emph{efficient} framework for concurrent miner using \emph{optimistic} Software Transactional Memory Systems (STMs). STMs are suitable for the concurrent executions of transactions without worrying about consistency issues.

The requirement of the miner, is to concurrently execute the \scontract{} transactions correctly and output a graph capturing dependencies among the transactions of the block such as \hbg graph. We denote this graph as \emph{\bg} (or BG). The miner uses an optimistic STM system to execute the \scontract{} transactions concurrently in the proposed solution. Since STMs also work with transactions, we differentiate between \scontract{} transactions and STM transactions. The STM transactions invoked by an STM system is a piece of code that it tries to execute atomically even in the presence of other concurrent STM transactions. If the STM system is not able to execute it atomically, then the STM transaction is aborted. 

{The expectation of a \scontract{} transaction is that it will be executed serially. Thus, when it is executed in a concurrent setting, it is expected to execute atomically (or serialized).} 
To differentiate between \scontract{} transaction from STM transaction, we denote \scontract{} transaction as \emph{\au} (\emph{AU}) and STM transaction as \emph{transaction} in the rest of the document. Thus the miner uses the STM system to invoke a transaction for each AU. In case the transaction gets aborted, then the STM repeatedly invokes new transactions for the same AU until a transaction invocation eventually commits. 

A popular correctness guarantee provided by STM systems is \emph{\opty} \cite{GuerKap:Opacity:PPoPP:2008} which is stronger than \sbty. Opacity like \sbty requires that the concurrent execution, including the aborted transactions, be equivalent to some serial execution. This ensures that even aborted transaction reads consistent value until the point of abort. As a result, a miner using an STM does not encounter any undesirable side-effects such as crash failures, infinite loops, divide by zero, etc. STMs provide this guarantee by executing optimistically and support atomic (\opq) reads, writes on \emph{transactional objects} (or \tobj{s}). 

Due to simplicity, we have chosen two timestamp based STMs in our design: (1) \emph{Basic Timestamp Ordering} or \emph{BTO} STM \cite[Chap 4]{WeiVoss:TIS:2002:Morg}, maintains only one version for each \tobj. (2) \emph{Multi-Version Timestamp Ordering} or \emph{\mvto} STM \cite{Kumar+:MVTO:ICDCN:2014}, maintains multiple versions corresponding to each \tobj{} which further reduces the number of aborts and improves the throughput. 

The advantage of using timestamp-based STM is that the equivalent serial history is ordered based on the transactions' timestamps. Thus using the timestamps, the miner can generate the BG of the AUs. We call it as \emph{STM approach}. Dickerson et al. \cite{Dickerson+:ACSC:PODC:2017}, developed the BG in a serial manner. Saraph and Herlihy~\cite{VikramHerlihy:EmpSdy-Con:Tokenomics:2019} proposed a simple \emph{bin-based two-phase speculative} approach to execute AUs concurrently in the Ethereum blockchain without storing the BG in the block. We analyzed that the bin-based approach reduces the size of the block but fails to exploits the concurrency. We name this approach as \emph{Speculative Bin} (Spec Bin) approach. So, in our proposed approach, we combined spec bin-based approach \cite{VikramHerlihy:EmpSdy-Con:Tokenomics:2019} with the STM approach \cite{Anjana+:CESC:PDP:2019} for the optimal storage of BG in a block and exploit the concurrency. Concurrent miner generates an efficient BG in concurrent and lock-free \cite{HerlihyShavit:Progress:Opodis:2011} manner. 

The concurrent miner applies the STM approach to generate two bins while executing AUs concurrently, a concurrent bin and a sequential bin. AUs which can be executed concurrently (without any conflicts) are stored in the concurrent bin. While the AUs having conflicts are stored in a sequential bin in the BG form to record the conflicts. This combined technique reduces the size of the BG than \cite{Anjana+:CESC:PDP:2019} while storing the graph of only sequential bin \sctrn{s} instead of all \sctrn{s}.

We propose a concurrent validator that creates multiple threads. Each of these threads parses the concurrent bin followed by efficient BG provided by the concurrent miner and re-execute the AUs for validation. The BG consists of only dependent AUs. Each validator thread claims a node that does not have any dependency, i.e., a node without any incoming edges by marking it. After that, it executes the corresponding AUs deterministically. Since the threads execute only those nodes with no incoming edges, the concurrently executing AUs will not have any conflicts. Hence the validator threads need not have to worry about synchronization issues. We denote this approach adopted by the validator as a \emph{decentralized approach} as the multiple threads are working on BG concurrently in the absence of a master thread. 

The approach adopted by Dickerson et al. \cite{Dickerson+:ACSC:PODC:2017}, works on \emph{fork-join} in which a master thread allocates different tasks to slave threads. The master thread identifies AUs that do not have any incoming dependencies in the BG and allocates them to different slave threads. In this paper, we compare the performance of both these approaches with the serial validator. 


\ignore{
executing \au does not have any conflicts, it 
At any time, the validator executes only those \au{s} concurrently which don't have any dependency as shown by the block graph. We consider the concurrent execution by validator in two different manners. The first one is inspired by Dickerson et al. \cite{Dickerson+:ACSC:PODC:2017} called as fork-join validator. The proposed fork-join validator works on master-slave concept in which a master or central thread allocates the task to slave threads. We propose a concurrent validator using decentralized approach (or Decentralized Validator) in which multiple threads are working on block graph concurrently and deterministically in the absence of master thread. 

In order to execute \scontract{} by concurrent miner, we start with the well known protocol of STMs which is Basic Timestamp Ordering (or STM\_BTO) protocol. STM\_BTO identifies the conflicts between two transactions at run-time and abort one of them and retry again for aborted transaction. It ensures serial order of concurrent execution of transactions which is equivalent to the increasing order of transaction's timestamp. It has been observed by Kumar et al., \cite{Kumar+:MVTO:ICDCN:2014} that storing multiple versions, more concurrency can be gained. So, we motivated towards another popular protocol of STMs which is Multi-Version Timestamp Ordering (or STM\_MVTO) protocol. It store multiple versions corresponding to each data-object so, STM\_MVTO reduces the number of aborts and improves the throughput. STM\_MVTO protocol is also ensuring equivalent serial order as STM\_BTO.

Now, we propose the concurrent validator which is re-executing the same \SContract{} that has been executed by concurrent miner. But it may construct different serial order and different final state than serial order and final state produced by concurrent miner. That may lead to reject the correct proposed block. In order to solve this issue, concurrent miner maintains conflict graph in the form of adjacency-list. Once the transaction commit it adds itself as a vertex into the conflict graph, which is having an edge belonging to each conflicting transactions. 

So, concurrent validator executes the transactions deterministically and concurrently with the help of conflict graph given by the miner. It applies the topological sort on the graph and identify the vertex whose indegree is 0. It then execute the \SContract{} concurrently corresponding to identified vertex and compute the final state. 
Eventually, it compare the final state present in the block proposed by the miner with its computed final state corresponding to each data-object. If its same then block is appended into the blockchain and concurrent miner rewarded with strong incentive. Otherwise, block is discarded.
}

\noindent
\textbf{The significant contributions of the paper are as follows:}
\begin{itemize}[noitemsep]
\item Introduce a novel way to execute the AUs by concurrent miner using \emph{optimistic} STMs (\secref{pm}). 
We implement the concurrent miner using BTO and MVTO STM, but it is generic to any STM protocol.

\item We propose a \emph{lock-free} and concurrent graph library to generate the \emph{efficient} BG which contains only dependent \au{s} and optimize the size of the block than \cite{Anjana+:CESC:PDP:2019} (see \secref{pm}).
\item We propose concurrent validator that re-executes the AUs deterministically and efficiently with the help of \emph{concurrent bin} followed by \emph{efficient} BG given by concurrent miner (see \secref{pm}).

\item To make our proposed approach storage optimal and efficient, we have optimized the BG size (see \secref{pm}). 


\item We rigorously prove that the concurrent miner and validator satisfies correctness criterion as \emph{opacity} (see \secref{correctness}). 

\item We achieve $4.49\times$ and $5.21\times$ average speedups for optimized concurrent miner using BTO and MVTO STM protocol, respectively. Optimized concurrent BTO and MVTO decentralized validator outperform average $7.68\times$ and $8.60\times$ than serial validator, respectively (\secref{opt-result}). 

\end{itemize}

\secref{relatedwork} presents the related work on concurrent execution of smart contract transactions. While, \secref{model} includes the notions related to STMs and execution model used in the paper. The conclusion with several future directions is presented in \secref{con}.

\cmnt{
\noindent
\textbf{Summary of differences than \cite{Anjana+:CESC:PDP:2019} are as follows:}
\begin{itemize}
	\item We provide the extended system model in \secref{model}.
	\item We propose an efficient and storage optimal BG which in turn reduces the size of the block than \cite{Anjana+:CESC:PDP:2019}.
	\item \sk{We have also prove that the execution by concurrent validator is equivalent to that of the concurrent miner in \secref{ }.}
	\item \sk{We rigorously prove that the concurrent miner satisfies correctness criterion as opacity in \secref{correctness}.}
	\item We have performed the extensive experiments while varying the various parameters and analyse the performance benefits than state-of-the-art executions in \secref{opt-result}.
	\item We have appended the exhaustive related work and compared them with our proposed approach  in \secref{relatedwork}.  
\end{itemize}}

\cmnt{
\noindent
\textbf{Related Work:} The first \emph{blockchain} concept has been given by Satoshi Nakamoto in 2009 \cite{Nakamoto:Bitcoin:2009}. He proposed a system as bitcoin \cite{Nakamoto:Bitcoin:2009} which performs electronic transactions without the involvement of the third party. The term \SContract{} \cite{Nick:PublicN:journals:1997} has been introduced by Nick Szabo. \emph{Smart contract} is an interface to reduce the computational transaction cost and provides secure relationships on public networks. There exist few paper in the literature that works on safety and security concern of smart contracts. Luu et al. \cite{Luu+:DIC:CCS:2015} addresses the waste part of the computational effort by miner that can be utilized and lead to award the incentives. Delmolino et al.\cite{Delmolino+:SSTCSC:FC:2016} document presents the common pitfall made while designing a secure smart contract. Nowadays, ethereum \cite{ethereum:url} is one of the most popular smart contract platform which supports a built-in Turing-complete programming language. Ethereum virtual machine (EVM) uses Solidity \cite{Solidity} programming language. Luu et al.\cite{Luu+:MSC:CCS:2016} addresses several security problems and proposed an enhanced mechanism to make the ethereum smart contracts less vulnerable.

Sergey et al. \cite{SergeyandHobor:ACP:2017} elaborates a new perspective between smart contracts and concurrent objects. Zang et al. \cite{ZangandZang:ECSC:WBD:2018} uses any concurrency control mechanism for concurrent miner which delays the read until the corresponding writes to commit and ensures conflict-serializable schedule. Basically, they proposed concurrent validators using MVTO protocol with the help of write sets provided by the concurrent miner. Dickerson et al. \cite{Dickerson+:ACSC:PODC:2017} introduces a speculative way to execute smart contracts by using concurrent miner and concurrent validators. They have used pessimistic software transactional memory systems (STMs) to execute concurrent smart contracts which use rollback if any inconsistency occurs and prove that schedule generated by concurrent miner is \emph{serializable}. We proposed an efficient framework for concurrent execution of the smart contracts using optimistic software transactional memory systems. So, the updates made by a transaction will be visible to shared memory only on commit hence, rollback is not required. Our approach ensures correctness criteria as opacity \cite{GuerKap:Opacity:PPoPP:2008} which considers aborted transactions as well because it read correct values.

Weikum et al. \cite{WeiVoss:TIS:2002:Morg} proposed concurrency control techniques that maintain single-version and multiple versions corresponding to each data-object. STMs \cite{HerlMoss:1993:SigArch,ShavTou:1995:PODC} are alternative to locks for addressing synchronization and concurrency issues in multi-core systems. STMs are suitable for the concurrent executions of smart contracts without worrying about consistency issues. Single-version STMs protocol store single version corresponding to each data-object as BTO STM. It identifies the conflicts between two transactions at run-time and abort one of them and retry again for the aborted transaction. Kumar et al. \cite{Kumar+:MVTO:ICDCN:2014} observe that storing multiple versions corresponding to each data-object reduces the number of aborts and provides greater concurrency that leads to improving the throughput.

\vspace{1mm}
\noindent
\textbf{Related Work:} The first \emph{blockchian} concept has been given by Satoshi Nakamoto in 2009 \cite{Nakamoto:Bitcoin:2009}. He proposed a system as bitcoin \cite{Nakamoto:Bitcoin:2009} which performs electronic transactions without the involvement of the third party. The term \SContract{} \cite{Nick:PublicN:journals:1997} has been introduced by Nick Szabo. \emph{Smart contract} is an interface to reduce the computational transaction cost and provides secure relationships on public networks. 
Nowadays, ethereum \cite{ethereum} is one of the most popular smart contract platform which supports a built-in Turing-complete programming language such as Solidity \cite{Solidity}.

Sergey et al. \cite{SergeyandHobor:ACP:2017} elaborates a new perspective between smart contracts and concurrent objects. Zang et al. \cite{ZangandZang:ECSC:WBD:2018} uses any concurrency control mechanism for concurrent miner which delays the read until the corresponding writes to commit and ensures conflict-serializable schedule. Basically, they proposed concurrent validators using MVTO protocol with the help of write sets provided by concurrent miner. Dickerson et al. \cite{Dickerson+:ACSC:PODC:2017} introduces a speculative way to execute smart contracts by using concurrent miner and concurrent validators. They have used pessimistic software transactional memory systems (STMs) to execute concurrent smart contracts which use rollback, if any inconsistency occurs and prove that schedule generated by concurrent miner is \emph{serializable}. We propose an efficient framework for the execution of concurrent smart contracts using optimistic software transactional memory systems. So, the updates made by a transaction will be visible to shared memory only on commit hence, rollback is not required. Our approach ensures correctness criteria as opacity \cite{GuerKap:Opacity:PPoPP:2008, tm-book} by Guerraoui \& Kapalka, which considers aborted transactions as well because it read correct values.

}
\section{Related Work}
\label{sec:relatedwork}
This section presents the related work on concurrent execution on \bc{s} in line with the proposed approach.

{The interpretation of \emph{Blockchain} was introduced by Satoshi Nakamoto in 2009 as Bitcoin~\cite{Nakamoto:Bitcoin:2009} to perform electronic transactions without third party interference. Nick Szabo~\cite{Nick:PublicN:journals:1997} introduced \SContract{s} in 1997, adopted by Ethereum \bc{} in 2015 to expand \bc{} functionalities beyond financial transactions (cryptocurrencies).
} A smart contract is an interface to reduce the computational transaction cost and provides secure relationships on distributed networks. There exist several papers \cite{Luu+:DIC:CCS:2015, Delmolino+:SSTCSC:FC:2016, Luu+:MSC:CCS:2016} in the literature that works on the safety and security concern of smart contracts, which is out of the scope of this paper. We mainly focus on the concurrent execution of AUs. {A concise summary of closely related works is given in \tabref{relatedwork}.}

\begin{table*}[!tb]
    \caption{Related Work Summary}\vspace{-.25cm}
    \centering
    \label{tab:relatedwork}
    \resizebox{1\columnwidth}{!}{%
        \begin{tabular}{|c|c|c|c|c|c|c|}
            \hline
            \textbf{ } 
            & \textbf{Miner Approach}
            & \textbf{Locks}
            & \textbf{Require Block Graph}
            & \textbf{Validator Approach}

            & \textbf{\begin{tabular}[c]{@{}c@{}}Blockchain Type\end{tabular}}

            \\ \hline\hline

            Dickerson et al.~\cite{Dickerson+:ACSC:PODC:2017}
            & Pessimistic ScalaSTM 
            & Yes 
            & Yes 
            & Fork-join
            & Permissionless
            \\ 
            
            Zhang and Zhang~\cite{ZangandZang:ECSC:WBD:2018} 
            & - 
            & - 
            & Read, Write Set 
            & MVTO Approach 
            & Permissionless
            \\ 
            
            Anjana et al.~\cite{Anjana+:CESC:PDP:2019} 
            & Optimistic RWSTM 
            & No 
            & Yes 
            & Decentralized 
            & Permissionless
            \\ 

            Amiri et al.~\cite{amiri2019parblockchain} 
            & Static Analysis
            & -
            & Yes
            & -
            & Permissioned
            \\ 
            
            Saraph and Herlihy~\cite{VikramHerlihy:EmpSdy-Con:Tokenomics:2019} 
            & Bin-based Approach 
            & Yes 
            & No 
            & Bin-based 
            & Permissionless
            \\ 
            
            Anjana et al.~\cite{anjana:ObjSC:Netys:2020} 
            & Optimistic ObjectSTM 
            & No 
            & Yes 
            & Decentralized 
            & Permissionless
            \\ 

            \textbf{Proposed Approach} 
            & \textbf{Bin+Optimistic RWSTM} 
            & \textbf{No} 
            & \textbf{No (if no dependencies) / Yes} 
            & \textbf{Decentralized} 
            & \textbf{Permissionless}
            \\ \hline

        \end{tabular}%
    }
\end{table*}


Dickerson et al.~\cite{Dickerson+:ACSC:PODC:2017} introduced concurrent executions of AUs in the \bc{}. They observed that miners and validators could execute \sctrn{s} simultaneously to exploit concurrency offered by ubiquitous multi-core processors. {The approach of this work is given in \secref{intro}.}

\cmnt{They proposed a concurrent miner approach using \emph{pessimistic STM} and abstract locks to execute \sctrn{s} concurrently. The concurrent miner generates the \emph{Block Graph (\blg{})} while executing the \sctrn{s} concurrently to record the dependencies among \sctrn{s}. Later, they proposed a \emph{Fork-Join Validator} (FJ-Validator) approach to execute \sctrn{s} concurrently and deterministically using \blg{} appended in the block. A master validator thread allocates independent \sctrn{s} (the \sctrn{s} which does not have any path or \emph{dependency} from other \sctrn{s}) of \blg{} to the slave validator threads to execute them concurrently.}

Zhang and Zhang~\cite{ZangandZang:ECSC:WBD:2018} proposed a concurrent miner using a pessimistic concurrency control protocol, which delays the read until the corresponding writes to commit and ensures a conflict-serializable schedule. {The proposed concurrent validator uses MVTO protocol to execute transactions concurrently using the write sets provided by the concurrent miner in the block.}
Anjana et al.~\cite{Anjana+:CESC:PDP:2019} proposed optimistic \emph{Read-Write STM }(RWSTM) using BTO and MVTO based protocols. The timestamp-based protocols are used to identify the conflicts between \sctrn{s}. The miner executes the \sctrn{s} using RWSTM and constructs the \blg{} dynamically at the runtime using the timestamps. Later, a concurrent \emph{Decentralized Validator} (Dec-Validator) executes the \sctrn{s} in the block in a decentralized manner. The Decentralized Validator is efficient than the Fork-Join Validator since there is no master validator thread to allocate the \sctrn{s} to the slave validator threads to execute. Instead, all the validator threads identify the source vertex (a vertex with indegree 0) in the \blg{} independently and claim the source node to execute the corresponding \sctrn{}. 

Amiri et al.~\cite{amiri2019parblockchain} proposed \emph{ParBlockchain}-- an 
approach for concurrent execution of transactions in the block for permissioned blockchain. They developed an \emph{OXII paradigm}\footnote{A paradigm in which transactions are first ordered for concurrent execution then executed by both miners and validators~\cite{amiri2019parblockchain}.} to support distributed applications. The OXII paradigm orders the block transactions based on the agreement between the orderer nodes using static analysis or speculative execution to obtain the read-set and write-set of each transaction, then generates the \blg{} and constructs the block. The executors from respective applications (similar to the executors in fabric channels) execute the transactions concurrently and then validate them by re-executing the transaction. So, the nodes of the ParBlockchain execute the transactions in two phases using the OXII paradigm. A block with \blg{} based on the transaction conflicts is generated in the first phase, known as the \emph{ordering phase}. The second phase, known as the \emph{execution phase}, executes the block transactions concurrently using the \blg{} appended with block.


Saraph and Herlihy~\cite{VikramHerlihy:EmpSdy-Con:Tokenomics:2019} proposed a simple \emph{bin-based two-phase speculative} approach to execute \sctrn{s} concurrently in the Ethereum blockchain. They empirically validated the possible benefit of their approach by evaluating it on historical transactions from the Ethereum. In the first phase, the miner uses locks and executes \sctrn{s} in a block concurrently by rolling back those \sctrn{s} that lead to the conflict(s). All the aborted \sctrn{s} are then kept into a sequential bin and executed in the second phase sequentially. The miner gives concurrent and sequential bin hints in the block to the validator to execute the same schedule as executed by the miner. The validator executes the concurrent bin \sctrn{s} concurrently while executes the sequential bin \sctrn{s} sequentially. Instead of \blg{}, giving hints about bins takes less space. However, it does not harness the maximum concurrency available within the block.  
\cmnt{We name this approach as \emph{Speculative Bin} (Spec Bin) approach. }


Later, Anjana et al.~\cite{anjana:ObjSC:Netys:2020} proposed an approach that uses optimistic single-version and multi-version \emph{Object-based STMs (OSTMs)} for the concurrent execution of \sctrn{s} by the miner. The OSTMs operate at a higher (object) level rather than page (read-write) level and constructs the BG. 
However, the \blg{} is still quite significantly large in the existing approaches and needs higher bandwidth to broadcast such a large block for validation. 

In contrast, we propose an efficient framework for concurrent execution of the AUs using optimistic STMs. 
We combine the benefits of both Spec Bin-based and STM-based approaches to optimize the storage aspects (efficient storage optimal \blg{}), which further improves the performance. Due to its optimistic nature, the updates made by a transaction will be visible to shared memory only on commit; hence, rollback is not required. Our approach ensures correctness criteria as opacity~\cite{GuerKap:Opacity:PPoPP:2008}. The proposed approach gives better speedup over state-of-the-art and serial execution of \sctrn{s}.

\section{System Model}
\label{sec:model}


{In this section, we will present the notions related to STMs and the execution model used in the proposed approach.}
\cmnt{
Blockchain is a distributed and highly secure technology which stores the records into the block. It consists of multiple peers (or nodes), and each peer maintains decentralize distributed ledger that makes it publicly readable but tamper-proof. Peer executes some functions in the form of transactions. A transaction is a set of instructions executing in the memory. Bitcoin is a blockchain system which only maintains the balances while transferring the money from one account to another account in the distributed manner. Whereas, the popular blockchain system such as Ethereum maintains the state information as well. Here, transactions execute the atomic code known as a function of \scontract{}. Smart contract consists of one or more atomic-units or functions. In this paper, the atomic-unit contains multiple steps that have been executed by an efficient framework which is optimistic STMs.

\noindent
\textbf{Smart Contracts:} The transactions sent by clients to miners are part of a larger code called as \emph{\scontract{s}} that provide several complex services such as managing the system state, ensures rules, or credentials checking of the parties involved, etc. \cite{Dickerson+:ACSC:PODC:2017}. 
For better understanding of smart contract, we describe a simple auction contract from Solidity documentation \cite{Solidity}.\\
\textbf{Simple Auction Contract:} The functionality of simple auction contract is shown in \algoref{sa}. Where \Lineref{sa1} declares the contract, followed by public state variables as \emph{highestBidder, highestBid,} and \emph{pendingReturn} which records the state of the contract. A single owner of the contract initiates the auction by executing constructor \texttt{SimpleAuction()} method (omitted due to lack of space) in which function initialize bidding time as auctionEnd (\Lineref{sa3}). 
There can be any number of participants to bid. The bidders may get their money back whenever the highest bid is raised. For this, a public state variable declared at \Lineref{sa7} (\emph{pendingReturns}) uses solidity built-in complex data type mapping to maps bidder addresses with unsigned integers (withdraw amount respective to bidder). Mapping can be seen as a hash table with key-value pair. This mapping uniquely identifies account addresses of the clients in the Ethereum blockchain. A bidder withdraws the amount of their earlier bid by calling \texttt{withdraw()} method \cite{Solidity}.

At \Lineref{sa8}, a contract function \texttt{bid()} is declared, which is called by bidders to bid in the auction. Next, \emph{auctionEnd} variable is checked to identify whether the auction already called off or not. Further, bidders \emph{msg.value} check to identify the highest bid value at \Lineref{sa11}. Smart contract methods can be aborted at any time via throw when the auction is called off, or bid value is smaller than current \emph{highestBid}. When execution reaches to \Lineref{sa14}, the \texttt{bid()} method recovers the current highest bidder data from mapping through the \emph{highestBidder} address and updates the current bidder pending return amount. Finally, at \Lineref{sa16} and \Lineref{sa17}, it updates the new highest bidder and highest bid amount.

\begin{algorithm}
	\scriptsize
	\caption{SimpleAuction(): It allows every bidder to send their bids throughout the bidding period.}	\label{alg:sa} 
	\setlength{\multicolsep}{0pt}
	\begin{multicols}{2}
	\begin{algorithmic}[1]
		\makeatletter\setcounter{ALG@line}{0}\makeatother
		\Procedure{\emph{SimpleAuction()}}{} \label{lin:sa1}
		\State address public beneficiary;\label{lin:sa2}
		\State uint public auctionEnd;\label{lin:sa3}
		\State /*current state of the auction*/\label{lin:sa4}
		\State address public highestBidder;\label{lin:sa5}
		\State uint public highestBid;\label{lin:sa6}
		\State mapping(address $=>$ uint) pendingReturns; \label{lin:sa7} 
		\Function {}{}bid() public payable \label{lin:sa8}
		
		\If{(now $\geq$ auctionEnd)} 
		\State throw;\label{lin:sa10}
		\EndIf
		\If{(msg.value $<$ highestBid)} \label{lin:sa11}
		\State thorw;\label{lin:sa12}
		\EndIf
		\If{(highestBid != 0)}\label{lin:sa13}
		\State pendingReturns[highestBidder] += highestBid;\label{lin:sa14}
		\EndIf  \label{lin:sa15}
		\State highestBidder = msg.sender;\label{lin:sa16}
		\State highestBid = msg.value;\label{lin:sa17}
				\EndFunction
		\State // more operation definitions\label{lin:sa18}
		\EndProcedure

	\end{algorithmic}
	\end{multicols}
\end{algorithm}
}

Following~\cite{tm-book,KuznetsovPeri:Non-interference:TCS:2017}, we assume a system of $n$ processes/threads, $p_1,\ldots,p_n$ that access a collection of \emph{transactional objects} or \tobj{s} via atomic \emph{transactions}. Each transaction has a unique identifier. Within a transaction, processes can perform \emph{transactional operations or \mth{s}}: 
\begin{itemize}
	\item \texttt{\begt{()}}-- begins a transaction. 
	\item \texttt{\twrite}$(x,v)$ (or $w(x, v)$)-- updates a \tobj{} $x$ with value $v$ in its local memory. 
	\item \texttt{\tread}$(x, v)$ (or $r(x, v)$)-- tries to read  $x$ and returns value as $v$.
	\item \texttt{\tryc}$()$-- tries to commit the transaction and returns $commit$ (or $\commit$) if succeeds. 
	\item \texttt{\trya}$()$-- aborts the transaction and returns $\abort$. 
\end{itemize}
Operations \texttt{\tread{()}} and \texttt{\tryc}$()$ may return $\abort{}$. 
Transaction $T_i$ starts with the first operation and completes when any of its operations return $\abort$ or $\commit$. 
For a transaction $T_k$, we denote all the \tobj{s} accessed by its read \op{s} and write operations as $\rs_k$ and $\ws_k$, respectively.  We denote all the \op{s} of a transaction $T_k$ as $\evts{T_k}$ or $evts_k$.

\vspace{.2cm}
\noindent
\textbf{History:}
A \emph{history} is a sequence of \emph{events}, i.e., a sequence of 
invocations and responses of transactional operations. The collection of events is denoted as $\evts{H}$. For simplicity, we consider \emph{sequential} histories, i.e., the invocation of each transactional operation is immediately followed by a matching response. Therefore, we treat each transactional operation as one atomic event and let $<_H$ denote the total order on the transactional operations incurred by $H$. 
We identify a history $H$ as tuple $\langle \evts{H},<_H \rangle$. 

Further, we consider \emph{well-formed} histories, i.e., no transaction of a process begins before the previous transaction invocation has completed (either $commits$ or $aborts$). We also assume that every history has an initial \emph{committed} transaction $T_0$ that initializes all the t-objects with value $0$. The set of transactions that appear in $H$ is denoted by $\txns{H}$. The set of \emph{committed} (resp., \emph{aborted}) transactions in $H$ is denoted by $\comm{H}$ (resp., $\aborted{H}$). The set of \emph{incomplete} or \emph{live} transactions in $H$ is denoted by $\incomp{H} = \live{H} = (\txns{H}-\comm{H}-\aborted{H})$. 

We construct a \emph{complete history} of $H$, denoted as $\overline{H}$, by inserting $\trya_k(\abort)$ immediately after the last event of every transaction $T_k\in \live{H}$. But for $\tryc_i$ of transaction $T_i$, if it released the lock on first \tobj{} successfully that means updates made by $T_i$ is consistent so, $T_i$ will immediately return commit.

\cmnt{
\noindent
\textbf{Sub-history:} A \textit{sub-history} ($SH$) of a history ($H$) 
denoted as the tuple $\langle \evts{SH},$ $<_{SH}\rangle$ and is defined as: 
(1) $<_{SH} \subseteq <_{H}$; (2) $\evts{SH} \subseteq \evts{H}$; (3) If an 
event of a transaction $T_k\in\txns{H}$ is in $SH$ then all the events of $T_k$ 
in $H$ should also be in $SH$. 

For a history $H$, let $R$ be a subset of $\txns{H}$. Then $\shist{R}{H}$ denotes  the \ssch{} of $H$ that is formed  from the \op{s} in $R$.
}

\vspace{.2cm}
\noindent
\textbf{\textit{Transaction Real-Time and Conflict order:}} For two transactions $T_k,T_m \in \txns{H}$, we say that  $T_k$ \emph{precedes} $T_m$ in the \emph{real-time order} of $H$, denoted as $T_k\prec_H^{RT} T_m$, if $T_k$ is complete in $H$ and the last event of $T_k$ precedes the first event of $T_m$ in $H$. If neither $T_k \prec_H^{RT} T_m$ nor $T_m \prec_H^{RT} T_k$, then $T_k$ and $T_m$ \emph{overlap} in $H$. We say that a history is \emph{serial} (or \emph{\tseq}) if all the transactions are ordered by real-time order. 
We say that $T_k, T_m$ are in conflict, denoted as $T_k\prec_H^{Conf} T_m$, if

(1) $\tryc_k()<_H \tryc_m()$ and $wset(T_k) \cap wset(T_m) \neq\emptyset$; 

(2) $\tryc_k()<_H r_m(x,v)$, $x \in wset(T_k)$ and $v \neq \abort$;

(3) $r_k(x,v)<_H \tryc_m()$, $x\in wset(T_m)$ and $v \neq \abort$. 

Thus, it can be seen that the conflict order is defined only on \op{s} that have successfully executed. We denote the corresponding \op{s} as conflicting.

\vspace{.2cm}
\noindent
\textbf{Valid and Legal histories:} A successful read $r_k(x, v)$ (i.e., $v \neq \abort$)  in a history $H$ is said to be \emph{\valid} if there exist a transaction $T_j$ that wrote $v$ to $x$ and \emph{committed} before $r_k(x,v)$. 
History $H$ is \valid{} if all its successful read \op{s} are \valid. 

We define $r_k(x, v)$'s \textit{\lastw{}} as the latest commit event $\commit_i$ preceding $r_k(x, v)$ in $H$ such that $x\in wset_i$ ($T_i$ can also be $T_0$). A successful read \op{} $r_k(x, v)$ (i.e., $v \neq \abort$), is said to be \emph{\legal{}} if the transaction containing $r_k$'s \lastw{} also writes $v$ onto $x$. 
 The history $H$ is \legal{} if all its successful read \op{s} are \legal.  From the definitions we get that if $H$ is \legal{} then it is also \valid.

%

\vspace{.2cm}
\noindent
\textbf{Notions of Equivalence:} Two histories $H$ and $H'$ are \emph{equivalent} if they have the same set of events. We say two histories $H, H'$ are \emph{multi-version view equivalent} \cite[Chap. 5]{WeiVoss:TIS:2002:Morg} or \emph{\mvve} if 

(1) $H, H'$ are valid histories and 

(2) $H$ is equivalent to $H'$. 

\noindent
Two histories $H, H'$ are \emph{view equivalent} \cite[Chap. 3]{WeiVoss:TIS:2002:Morg} or \emph{\vie} if 

(1) $H, H'$ are legal histories and 

(2) $H$ is equivalent to $H'$. By restricting to \legal{} histories, view equivalence does not use multi-versions. 

\noindent
Two histories $H, H'$ are \emph{conflict equivalent} \cite[Chap. 3]{WeiVoss:TIS:2002:Morg} or \emph{\ce} if 

(1) $H, H'$ are legal histories and 

(2) conflict in $H, H'$ are the same, i.e., $conf(H) = conf(H')$.

Conflict equivalence like view equivalence does not use multi-versions and restricts itself to \legal{} histories. 

\vspace{.2cm}
\noindent
\textbf{VSR, MVSR, and CSR:} A history $H$ is said to VSR (or View Serializable) \cite[Chap. 3]{WeiVoss:TIS:2002:Morg}, if there exist a serial history $S$ such that $S$ is view equivalent to $H$. But this notion considers only single-version corresponding to each \tobj{}.

MVSR (or Multi-Version View Serializable) maintains multiple version corresponding to each \tobj. A history $H$ is said to MVSR \cite[Chap. 5]{WeiVoss:TIS:2002:Morg}, if there exist a serial history $S$ such that $S$ is multi-version view equivalent to $H$. It can be proved that verifying the membership of VSR as well as MVSR in databases is NP-Complete \cite{Papad:1979:JACM}. To circumvent this issue, researchers in databases have identified an efficient sub-class of VSR, called CSR based
on the notion of conflicts. The membership of CSR can be verified in polynomial time using conflict graph characterization. 

A history $H$ is said to CSR (or Conflict Serializable) \cite[Chap. 3]{WeiVoss:TIS:2002:Morg}, if there exist a serial history $S$ such that $S$ is conflict equivalent to $H$.

\vspace{.2cm}
\noindent
\textbf{Serializability and Opacity:} 
Serializability \cite{Papad:1979:JACM} is a commonly used criterion in databases. But it is not suitable for STMs as it does not consider the correctness of \emph{aborted} transactions as shown by Guerraoui and Kapalka \cite{GuerKap:Opacity:PPoPP:2008}. Opacity, on the other hand, considers the correctness of \emph{aborted} transactions as well.

\noindent
A history $H$ is said to be \textit{opaque} \cite{GuerKap:Opacity:PPoPP:2008,tm-book} if it is \valid{} and there exists a t-sequential legal history $S$ such that 

(1) $S$ is equivalent to complete history $\overline{H}$ and 

(2) $S$ respects $\prec_{H}^{RT}$, i.e., $\prec_{H}^{RT} \subset \prec_{S}^{RT}$. 

By requiring $S$ being equivalent to $\overline{H}$, opacity treats all the incomplete transactions as aborted. 
Similar to view-serializability, verifying the membership of \opty is NP-Complete \cite{Papad:1979:JACM}. To address this issue, researchers have proposed another popular correctness-criterion \emph{co-opacity} whose membership is polynomial time verifiable.

\vspace{.2cm}
\noindent
\textbf{Co-opacity:} A history $H$ is said to be \textit{co-opaque} \cite{KuznetsovPeri:Non-interference:TCS:2017} if it is \valid{} and there exists a t-sequential legal history $S$ such that 

(1) $S$ is equivalent to complete history $\overline{H}$ and 

(2) $S$ respects $\prec_{H}^{RT}$, i.e., $\prec_{H}^{RT} \subset \prec_{S}^{RT}$. 

(3) S preserves conflicts (i.e. $\prec^{Conf}_{H}\subseteq\prec^{Conf}_{S}$). 

\cmnt{
Along same lines, a \valid{} history $H$ is said to be \textit{strictly serializable} if $\shist{\comm{H}}{H}$ is opaque. Unlike opacity, strict serializability does not include aborted or incomplete transactions in the global serialization order. An opaque history $H$ is also strictly serializable: a serialization of $\shist{\comm{H}}{H}$ is simply the subsequence of a serialization of $H$ that only contains transactions in $\comm{H}$. 
}

\cmnt{
\noindent 
\textbf{VSR, MVSR, and CSR:} VSR (or View Serializability) is a correctness-criterion similar to opacity but does not consider correctness of aborted transactions. When the protocol (such as MVTO) maintains multiple versions corresponding to each \tobj{} then a commonly used correctness criterion in databases is MVSR (or Multi-Version View Serializability). It can be proved that verifying the membership of VSR as well as MVSR in databases is NP-Complete \cite{Papad:1979:JACM}. To circumvent this issue, researchers in databases have identified an efficient sub-class of VSR, called CSR (or conflict-serializability), based
on the notion of conflicts. The membership of CSR can be verified in polynomial time using conflict graph characterization. 
}

\vspace{.2cm}
\noindent
\textbf{Linearizability:} A history $H$ is linearizable \cite{HerlihyandWing:1990:LCC:ACM} if 

(1) The invocation and response events can be reordered to get a valid sequential history.

(2) The generated sequential history satisfies the object’s sequential specification. 

(3) If a response event precedes an invocation event in the original history, then this should be preserved in the sequential reordering.

\vspace{.2cm}
\noindent
\textbf{Lock Freedom:} An algorithm is said to be lock-free \cite{HerlihyShavit:Progress:Opodis:2011} if the program threads are run for a sufficiently long time, at least one of the threads makes progress. It allows individual threads to starve but guarantees system-wide throughput.

\ignore{


\begin{enumerate}
	\item BTO 
	\begin{itemize}
		\item Conflicts.
	\end{itemize}
	\item MVTO 
	\begin{itemize}
		\item Conflicts.
	\end{itemize}
\end{enumerate}
}
\section{Proposed Mechanism}
\label{sec:pm}
This section presents the methods of lock-free concurrent block graph library followed by concurrent execution of AUs by miner and validator.
\subsection{Lock-free Concurrent Block Graph}
\label{subsec:bg}
\noindent
\textbf{Data Structure of Lock-free Concurrent Block Graph:} We use the \emph{adjacency list} to maintain the block graph BG(V, E), as shown in \figref{confg} (a). Where V is a set of vertices (or \vnode{s}) which are stored in the vertex list (or \vl{}) in increasing order of timestamp between two sentinel node \vh{} (-$\infty$) and \vt{} (+$\infty$). Each vertex node (or \vnode) contains $\langle ts = i, AU_{id} = id, \inc{} = 0, \vn{} = nil, \en{} = nil\rangle$. Where $i$ is a unique timestamp (or $ts$) of transactions $T_i$. $AU_{id}$ is the $id$ of a \au{} executed by transaction $T_i$. To maintain the indegree count of each \vnode{}, we initialize \inc{} as 0. \vn{} and \en{} initialize as $nil$. 

\begin{figure}
	\centering
		\scalebox{.42}{\input{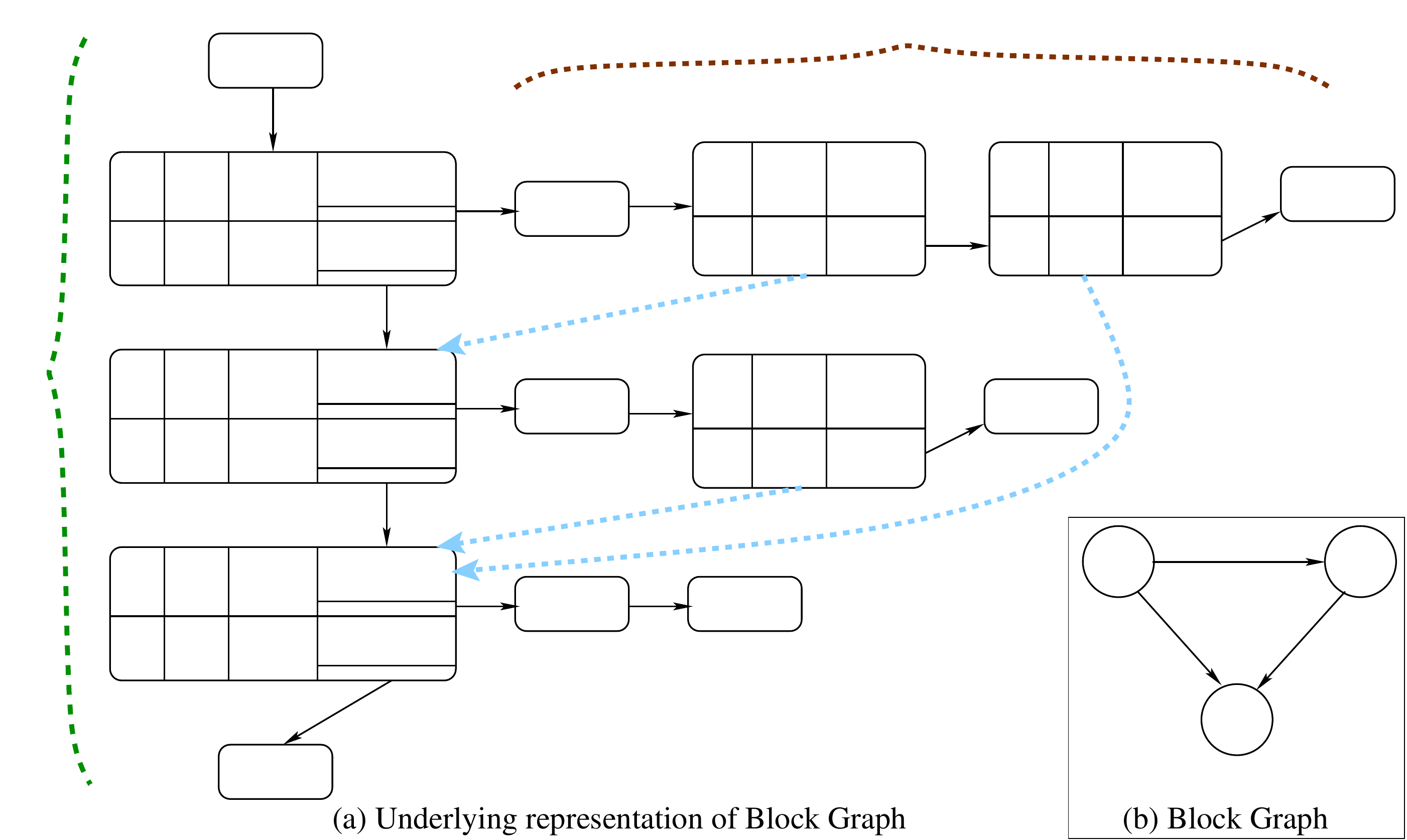_t}}
		\centering
		\caption{Pictorial representation of Block Graph}
		\label{fig:confg}
\end{figure}

\cmnt{
\begin{figure*}[!htb]
   \begin{minipage}{0.49\textwidth}
     \centering
     \scalebox{.31}{\input{figs/graphs.pdf_t}}\vspace{-.3cm}
     \caption{Pictorial representation of Block Graph}\label{fig:confg}
   \end{minipage}
   \begin{minipage}{0.49\textwidth}
     \centering
     \scalebox{.31}{\input{figs/cminer.pdf_t}}\vspace{-.3cm}
     \caption{Execution of Concurrent Miner}\label{fig:cminer}
   \end{minipage}
\end{figure*}
}

\cmnt{
\begin{figure*}[!htb]
     \centering
     \scalebox{.31}{\input{figs/graphs.pdf_t}}\vspace{-.3cm}
     \caption{Pictorial representation of Block Graph}\label{fig:confg}
\vspace{.35cm}
     \centering
     \scalebox{.31}{\input{figs/cminer.pdf_t}}\vspace{-.3cm}
     \caption{Execution of Concurrent Miner}\label{fig:cminer}
\end{figure*}

\begin{figure*}
	\centering
	\begin{minipage}[b]{0.3\textwidth}
		\scalebox{.31}{\input{figs/graphs.pdf_t}}
		\centering
		\caption{Pictorial representation of Block Graph}
		\label{fig:confg}
	\end{minipage}
	\hfill
	\begin{minipage}[b]{0.5\textwidth}
		\scalebox{.32}{\input{figs/cminer.pdf_t}}
		\centering
		\caption{Execution of Concurrent Miner}
		\label{fig:cminer}
	\end{minipage}   
\end{figure*}
}

While E is a set of edges which maintains all conflicts of \vnode{} in the edge list (or \el), as shown in \figref{confg} (a). \el{} stores \enode{s} (or conflicting transaction nodes, say $T_j$) in increasing order of timestamp between two sentinel nodes \eh{} (-$\infty$) and \et{} (+$\infty$). Edge node (or \enode{}) contains $\langle$\emph{ts = j, vref}, \en{} = $nil$$\rangle$. Here, $j$ is a unique timestamp (or $ts$) of \emph{committed} transaction $T_j$ having a conflict with $T_i$ and $ts(T_i)$ is less than $ts(T_j)$. We add conflicting edges from lower timestamp to higher timestamp transactions to maintain the acyclicity in the BG i.e., conflict edge is from $T_i$ to $T_j$ in the BG. \figref{confg} (b) illustrates this using three transactions with timestamp 0, 5, and 10, which maintain the acyclicity while adding an edge from lower to higher timestamp. To make it search efficient, \emph{vertex node reference (or vref)} keeps the reference of its own vertex which is present in the \vl and \en{} initializes as $nil$. 

The block graph (BG) generated by the concurrent miner helps to execute the validator concurrently and deterministically through lock-free graph library methods. Lock-free graph library consists of five methods as follows: \texttt{addVert(), addEdge(), searchLocal(), searchGlobal()} and \texttt{decInCount()}. 

\noindent
\textbf{Lock-free Graph Library Methods Accessed by Concurrent Miner:} The concurrent miner uses \texttt{addVert()} and \texttt{addEdge()} methods of lock-free graph library to build a BG. When concurrent miner wants to add a node in the BG, it first calls the \texttt{addVert()} method. The  \texttt{addVert()} method identifies the correct location of that node (or \vgn{}) in the \vl{} at \Lineref{addv2}. If \vgn{} is not part of \vl{}, it creates the node and adds it into \vl{} at \Lineref{addv5} in a lock-free manner using atomic compare and swap (CAS) operation. Otherwise, \vgn{} is already present in \vl{} at \Lineref{addv10}.\vspace{.2cm}

\cmnt{
\noindent
\textbf{Lock-free Graph Library Methods Accessed by Concurrent Miner:} Concurrent miner uses \texttt{addVert()} and \texttt{addEdge()} methods of lock-free graph library to build a BG. When concurrent miner wants to add a node in the BG then first it calls \texttt{addVert()} method. The \texttt{addVert()} method identifies the correct location of that node (or \vgn{}) in the \vl{}. If \vgn{} is not part of \vl{} then it creates the node and adds it into \vl{} in lock-free manner with the help of atomic compare and swap operation. 

After successful addition of \vnode{} in the BG concurrent miner calls \texttt{addEdge()} method to add the conflicting node (or \egn{}) corresponding to \vnode{} in the \el{}. First, \texttt{addEdge()} method identifies the correct location of \egn{} in the \el{} of corresponding \vnode{}. If \egn{} is not part of \el{} then it creates the node and adds it into \el{} of \vnode{} in lock-free manner with the help of atomic compare and swap operation. After successful addition of \enode{} in the \el{} of \vnode{}, it increment the \inc{} of \enode.$vref$ (to maintain the indegree count) node which is present in the \vl{}.

\noindent
\textbf{Lock-free Graph Library Methods Accessed by Concurrent Validator:} Concurrent validator uses \texttt{searchLocal(), searchGlobal()} and \texttt{decInCount()} methods of lock-free graph library. First, concurrent validator thread calls \texttt{searchLocal()} method to identify the source node (having indegree (or \inc) 0) in its local \cachel{} (or thread local memory). If any source node exist in the local \cachel{} with \inc{} 0 then it sets \inc{} field to be -1 atomically to claim the ownership of the node.

If the source node does not exists in the local \cachel{} then concurrent validator thread calls \texttt{searchGlobal()} method to identify the source node in the BG. If any source node exists in the BG then it will do the same process as done by \texttt{searchLocal()}. After that validator thread calls the \texttt{decInCount()} to decreases the \inc{} of all the conflicting nodes atomically which are present in the \el{} of corresponding source node. While decrementing the \inc{} of each conflicting nodes in the BG, it again checks if any conflicting node became a source node then it adds that node into its local \cachel{} to optimize the search time of identifying the next source node. Due to lack of space, please refer accompanying technical report \cite{Parwat+:BC:Corr:2018} to get the complete details with the algorithm of lock-free graph library methods. 
}
\begin{algorithm}[H]
	\scriptsize
	\label{alg:cg} 
	\caption{BG(\emph{vNode}, STM): It generates a BG for all the atomic-unit nodes.}
	\setlength{\multicolsep}{0pt}
	\begin{multicols}{2}
		\begin{algorithmic}[1]
			\makeatletter\setcounter{ALG@line}{0}\makeatother
			\Procedure{BG(\emph{vNode}, STM)}{} \label{lin:cg1}
			\State /*Get the \cl{} of transaction $T_i$ from STM*/\label{lin:cg2}
			\State clist $\gets$ STM.\gconfl(\emph{vNode}.$ts_i$);\label{lin:cg3}
			\State /*$T_i$ conflicts with $T_j$ and $T_j$ existes in conflict list \blank{.3cm} of $T_i$*/\label{lin:cg4}
			\ForAll{($ts_j$ $\in$ clist)}\label{lin:cg5}
			\State \addv(\emph{$ts_j$}); \label{lin:cg6}
			\State \addv(\emph{vNode.$ts_i$});\label{lin:cg7}
			\If{($ts_j$  $<$ \emph{vNode}.$ts_i$)}\label{lin:cg8}
			\State \adde($ts_j$, \emph{vNode}.$ts_i$);\label{lin:cg9}
			\Else\label{lin:cg10}
			\State \adde(\emph{vNode}.$ts_i$, $ts_j$);\label{lin:cg11}
			\EndIf  \label{lin:cg12}
			\EndFor\label{lin:cg13}
			\EndProcedure\label{lin:cg14}
		\end{algorithmic}
	\end{multicols}
\end{algorithm}

\begin{algorithm}[H]
	\scriptsize
	\label{alg:addv} 	
	\caption{\emph{\addv}{($ts_i$)}: It adds the vertex in the BG for $T_i$. 
	} 
	\setlength{\multicolsep}{0pt}
	\begin{multicols}{2}
		\begin{algorithmic}[1]
			\makeatletter\setcounter{ALG@line}{14}\makeatother
			\Procedure{\emph{\addv}{($ts_i$)}}{} \label{lin:addv1}
			\State Identify $\langle$\vp, \vc{}$\rangle$ of \vgn{} of $ts_i$ in \vl{};\label{lin:addv2}
			\If{(\vc.$ts_i$ $\neq$ \vgn.$ts_i$)}\label{lin:addv3}
			\State Create new Graph Node (\vgn) of $ts_i$ in \vl{};\label{lin:addv4}
			\If{(\vp.\vn.CAS(\vc, \vgn))}\label{lin:addv5}
			
			\State return$\langle$\emph{Vertex added}$\rangle$; 
			\label{lin:addv6}
			\EndIf\label{lin:addv7}
			\State goto \Lineref{addv2}; /*Start with the \vp{} to identify \blank{.7cm} the new $\langle$\vp, \vc{}$\rangle$*/ \label{lin:addv8}
			\Else\label{lin:addv9}
			\State return$\langle$\emph{Vertex already present}$\rangle$; 
			\label{lin:addv10}
			\EndIf\label{lin:addv11}
			\EndProcedure\label{lin:addv12}
		\end{algorithmic}
	\end{multicols}
\end{algorithm}

\begin{algorithm}[H]
	\scriptsize
	\label{alg:adde} 	
	\caption{\emph{\adde{(fromNode, toNode)}}: It adds an  edge from \emph{fromNode} to \emph{toNode}.}
	\setlength{\multicolsep}{0pt}
	\begin{multicols}{2}	
		\begin{algorithmic}[1]
			\makeatletter\setcounter{ALG@line}{26}\makeatother
			\Procedure{\emph{\adde{(fromNode, toNode)}}}{}\label{lin:adde1}
			\State Identify the $\langle$\ep, \ec{}$\rangle$ of \emph{toNode} in \el{} of \blank{.3cm} the \emph{fromNode} vertex in $BG$;\label{lin:adde4}
			\If{(\ec.$ts_i$ $\neq$ toNode.$ts_i$)}\label{lin:adde5}
			\State Create new Graph Node (or \egn) in \el{};\label{lin:adde6} 
			\If{(\ep.\en.CAS(\ec, \egn))}\label{lin:adde7}
			\State Increment the \inc{} atomically of \blank{1.1cm} \egn.\emph{vref} in \vl{};\label{lin:adde8}
			\State return$\langle$\emph{Edge added}$\rangle$; 
			\label{lin:adde9}
			\EndIf\label{lin:adde10}
			\State goto \Lineref{adde4}; /*Start with the \ep{} to identify \blank{.7cm} the new $\langle$\ep, \ec{}$\rangle$*/\label{lin:adde11}
			\Else\label{lin:adde12}
			\State return$\langle$\emph{Edge already present}$\rangle$; 
			\label{lin:adde13}
			\EndIf\label{lin:adde14}
			\EndProcedure\label{lin:adde15}
		\end{algorithmic}
	\end{multicols}
\end{algorithm}

\begin{algorithm}[H]
	\scriptsize
	\caption{\emph{\searchl{(cacheVer, $AU_{id}$)}}: Thread searches source node in \cachel{}.}
	\setlength{\multicolsep}{0pt}
	\begin{multicols}{2}	
		\begin{algorithmic}[1]
			\makeatletter\setcounter{ALG@line}{39}\makeatother
			\Procedure{\emph{\searchl{($cacheVer$)}}}{}
			\If{(cacheVer.\inc.CAS(0, -1))} \label{lin:sl1} 
			\State \nc{} $\gets$ \nc{}.$get\&Inc()$; \label{lin:sl2}
			\State $AU_{id}$ $\gets$ cacheVer.$AU_{id}$;
			\State return$\langle$cacheVer$\rangle$;\label{lin:sl5}
			\Else\label{lin:sl6}
			\State return$\langle nil \rangle$;\label{lin:sl7}
			\EndIf\label{lin:sl8}
			\EndProcedure
		\end{algorithmic}
	\end{multicols}
\end{algorithm}

\begin{algorithm}[H]
	\scriptsize
	\caption{\emph{\searchg{(BG, $AU_{id}$)}}: Thread searches the source node in BG.}
	\setlength{\multicolsep}{0pt}
	\begin{multicols}{2}
		\begin{algorithmic}[1]
			\makeatletter\setcounter{ALG@line}{48}\makeatother
			\Procedure{\emph{\searchg{(BG, $AU_{id}$)}}}{}
			\State \vnode{} $\gets$ BG.\vh;	
			\While{(\vnode.\vn{} $\neq$ BG.\vt)} 
			\If{(\vnode.\inc.CAS(0, -1))}\label{lin:sg1}
			\State \nc{} $\gets$ \nc{}.$get\&Inc()$; \label{lin:sg2}
			\State $AU_{id}$ $\gets$ \vnode.$AU_{id}$;
			\State return$\langle \vnode \rangle$;\label{lin:sg5}
			\EndIf\label{lin:sg8}
			\State \vnode $\gets$ \vnode.\vn;	
			\EndWhile
			\State return$\langle nil \rangle$;\label{lin:sg7}
			\EndProcedure
		\end{algorithmic}
	\end{multicols}
\end{algorithm}

\begin{algorithm}[H]
	\scriptsize
	\caption{\emph{decInCount(remNode)}: Decrement the \inc{} of each conflicting node.}
	\setlength{\multicolsep}{0pt}
	\begin{multicols}{2}
		\begin{algorithmic}[1]
			\makeatletter\setcounter{ALG@line}{60}\makeatother
			\Procedure{\emph{decInCount(remNode)}}{}
			\While{(remNode.\en $\neq$ remNode.\et)}	
			\State Decrement the \emph{inCnt} atomically of \blank{.7cm} remNode.\emph{vref} in the \vl{}; \label{lin:ren1}
			\If{(remNode.\emph{vref}.\inc{} == 0)}\label{lin:ren2}
			\State Add remNode.\emph{verf} node into \cachel{} of \blank{1.1cm} thread local log, \tl{};\label{lin:ren3}
			\EndIf\label{lin:ren4}
			\State remNode $\gets$ remNode.\en.\emph{verf};		
			\State return$\langle$remNode$\rangle$;
			\EndWhile
			\State return$\langle nil \rangle$;
			\EndProcedure		
		\end{algorithmic}
	\end{multicols}
\end{algorithm}

\begin{algorithm}[H]
	\scriptsize
	\label{alg:exec} 
	\caption{\exec{\emph{(curAU)}}: Execute the current atomic-units.}
	\setlength{\multicolsep}{0pt}
	\begin{multicols}{2}
		\begin{algorithmic}[1]
			\makeatletter\setcounter{ALG@line}{71}\makeatother
			\Procedure{\exec{($curAU$)}}{}
			\While{(curAU.steps.hasNext())} /*Assume that \blank{.35cm} curAU is a list of steps*/
			\State curStep = currAU.steps.next(); 
			\Switch{(curStep)}
			\EndSwitch
			\Case{read($x$):}
			\State Read data-object $x$ from a shared memory;
			\EndCase
			\Case{write($x, v$):} 
			\State Write $x$ in shared memory with value $v$;
			\EndCase
			\Case{default:}
			\State /*Neither read or write in shared memory*/;
			\State execute curStep;
			\EndCase
			\EndWhile			
			\State return $\langle void \rangle$	
			\EndProcedure			
		\end{algorithmic}
	\end{multicols}
\end{algorithm}

After successfully adding \vnode{} in the BG, concurrent miner calls \texttt{addEdge()} method to add the conflicting node (or \egn{}) corresponding to \vnode{} in the \el{}. First, the \texttt{addEdge()} method identifies the correct location of \egn{} in the \el{} of corresponding \vnode{} at \Lineref{adde4}. If \egn{} is not part of \el{}, it creates and adds it into \el{} of \vnode{} at \Lineref{adde7} in a lock-free manner using atomic CAS operation. After successful addition of \enode{} in the \el{} of \vnode{}, it increments the \inc{} of \enode.$vref$ (to maintain indegree count) node, which is present in the \vl{} at \Lineref{adde8}.

\noindent
\textbf{Lock-free Graph Library Methods Accessed by Concurrent Validator:} Concurrent validator uses \texttt{searchLocal(), searchGlobal()} and \texttt{decInCount()} methods of lock-free graph library. First, concurrent validator thread calls \texttt{searchLocal()} method to identify the source node (having indegree (or \inc) 0) in its local \cachel{} (or thread-local memory). If any source node exists in the local \cachel{} with \inc{} 0, then to claim that node, it sets the \inc{} field to -1 at \Lineref{sl1} atomically. 

If the source node does not exist in the local \cachel{}, then the concurrent validator thread calls \texttt{searchGlobal()} method to identify the source node in the BG at \Lineref{sg1}. If a source node exists in the BG, it sets \inc{} to -1 atomically to claim that node and calls the \texttt{decInCount()} method to decreases the \inc{} of all conflicting nodes atomically, which are present in the \el{} of corresponding source node at \Lineref{ren1}. While decrementing \inc{s}, it checks if any conflicting node became a source node, then it adds that node into its local \cachel{} to optimize the search time of identifying the next source node at \Lineref{ren3}. 


\subsection{Concurrent Miner}
\label{subsec:cminer}
Smart contracts in \bc{} are executed in two different contexts. First, the \Miner{} proposes a new block. Second, multiple \Validator{s} re-execute to verify and validate the block proposed by the \Miner. In this subsection, we describe how miner executes the \SContract{s} concurrently.

\setlength{\intextsep}{0pt}
\begin{algorithm}[!t]
	\scriptsize 
	\caption{\emph{\cminer{(\aul, STM)}}: Concurrently $m$ threads are executing atomic-units from \aul{} (or list of atomic-units) with the help of STM.}
	\label{alg:cminer}	
	\setlength{\multicolsep}{0pt}
	\begin{multicols}{2}
		\begin{algorithmic}[1]
			\makeatletter\setcounter{ALG@line}{85}\makeatother
			\Procedure{\emph{\cminer}{(\aul, STM)}}{}\label{lin:cminer1}
			\State {/*Add all AUs in the Concurrent Bin (\emph{concBin[]})*/
				\State \emph{concBin[]} $\gets$ \aul;} \label{lin:cminer111}
			\State /*curAU is the current AU taken from \aul*/
			\State curAU $\gets$ $curInd$.$get\&Inc(\aul)$; \label{lin:cminer2}
			\State /*Execute until all AUs successfully completed*/\label{lin:cminer3}
			\While{(curAU $<$ size\_of(\aul))}\label{lin:cminer4}
			\State $T_i$ $\gets$ STM.\begtrans{()};
			\label{lin:cminer5}
			\While{(curAU.steps.hasNext())} 
			\label{lin:cminer6}
			\State curStep = currAU.steps.next(); 
			\label{lin:cminer7}
			\Switch{(curStep)}\label{lin:cminer8}\EndSwitch
			\Case{read($x$):}\label{lin:cminer9}
			\State $v$ $\gets$ STM.\readi{($x$)}; 
			\label{lin:cminer10}
			\If{($v$ == $abort$)}\label{lin:cminer11}
			\State goto \Lineref{cminer5};\label{lin:cminer12}
			\EndIf\label{lin:cminer13}
			\EndCase
			\Case{write($x, v$):} \label{lin:cminer14}
			\State STM.$write_i$($x, v$); \label{lin:cminer16}
			\EndCase
			\Case{default:}\label{lin:cminer17}
			\State /*Neither read or write in memory*/\label{lin:cminer18}
			\State execute curStep;\label{lin:cminer19}
			\EndCase
			\EndWhile\label{lin:cminer20}
			\State /*Try to commit the current transaction $T_i$ and \blank{.7cm} update the \cl{[i]}*/
			\label{lin:cminer21}
			\State $v$ $\gets$ \tryc{$_i$()}; \label{lin:cminer22}
			\If{($v == abort$)}\label{lin:cminer23}
			\State goto \Lineref{cminer5};\label{lin:cminer24}
			\EndIf			\label{lin:cminer25}
			
			\If{($\cl[i] == nil$)}\label{lin:cminer251}
			\State {curAU doesn't have dependencies with other \blank{1cm} AUs. So, no need to create a node in BG.}
			\Else
			\State {create a nodes with respective dependencies \blank{1cm} from curAU to all AUs $\in$ $\cl[i]$ in BG \blank{1cm} and remove curAU and AUs from \emph{concBin[]}} \label{lin:cminer2511}
			\State Create \vnode{} with $\langle$\emph{$i$, $AU_{id}$, 0, nil, nil}$\rangle$ as \blank{1cm} a vertex of Block Graph;	\label{lin:cminer26}
			\State BG(\emph{vNode}, STM);		\label{lin:cminer27}
			\EndIf
			\State curAU $\gets$ $curInd$.$get\&Inc(\aul)$; \label{lin:cminer28}
			\EndWhile	\label{lin:cminer29}
			\EndProcedure\label{lin:cminer30}
			
		\end{algorithmic}
	\end{multicols}
\end{algorithm}

\cmnt{\tikz \node[circle,scale=.5,color=black, fill=pink]{\textbf{1}};}A \emph{concurrent miner} gets the set of transactions from the blockchain network. Each transaction is associated with a method (\au{}) of smart contracts. To run the \SContract{s} concurrently, we have faced the challenge of identifying the conflicting transactions at run-time because \SContract{} languages are Turing-complete. Two transactionsconflict if they access a shared data-objects and at least one of them perform write operation. \cmnt{\tikz \node[circle,scale=.5,color=black, fill=pink]{\textbf{2}};}In \conminer{}, conflicts are identified at run-time using an efficient framework provided by the optimistic software transactional memory system (STMs). STMs access the shared data-objects called as \tobj{s}. Each shared \tobj{} is initialized to an initial state (or IS). The \au{s} may modify the IS to some other valid state. Eventually, it reaches the final state (or FS) at the end of block-creation. As shown in \algoref{cminer}, the concurrent miner first copies all the AUs in the concurrent bin at \Lineref{cminer111}. Each transaction $T_i$ gets the unique timestamp $i$ from \texttt{STM.begin()} at \Lineref{cminer5}. Then transaction $T_i$ executes the \au{} of \SContract{s}. \emph{Atomic-unit} consists of multiple steps such as $reads$ and $writes$ on shared \tobj{s} as $x$. Internally, these $read$ and $write$ steps are handled by the \texttt{STM.read()} and \texttt{STM.write()}, respectively. At \Lineref{cminer9}, if current \au{} step (or curStep) is $read(x)$ then it calls the \texttt{STM.read(x)}. Internally, \texttt{STM.read()} identify the shared \tobj{} $x$ from transactional memory (or TM) and validate it. If validation is successful, it gets the value as $v$ at \Lineref{cminer10} and executes the next step of \au{}; otherwise, re-executed the \au{} if $aborted$ at \Lineref{cminer11}. 

\cmnt{
\begin{figure}
	\scalebox{.32}{\input{figs/cminer.pdf_t}}
	\centering
	\caption{Execution of Concurrent Miner}
	\label{fig:cminer}
\end{figure}
}

\cmnt{
\begin{figure*}
	\centering
	\centerline{\scalebox{0.45}{\input{figs/cminer.pdf_t}}}
	\caption{Execution of Concurrent Miner}
	\label{fig:cminer}
\end{figure*}
}

If curStep is $write(x)$ at \Lineref{cminer14} then it calls the \texttt{STM.write(x)}. Internally, \texttt{STM.write()} stores the information of shared \tobj{} $x$ into local log (or \txlog) in write-set (or $wset_i$) for transaction $T_i$. We use an optimistic approach in which the transaction's effect will reflect onto the TM after the successful \texttt{STM.tryC()}. If validation is successful for all the $wset_i$ of transaction $T_i$ in \texttt{STM.tryC()}, i.e., all the changes made by the $T_i$ are consistent, then it updates the TM; otherwise, re-execute the \au{} if $aborted$ at \Lineref{cminer23}. After successful validation of \texttt{STM.tryC()}, it also maintains the conflicting transaction of $T_i$ into the conflict list in TM.

\cmnt{
\begin{figure*}
	\centerline{\scalebox{0.45}{\input{figs/graphs.pdf_t}}}
	\caption{Pictorial representation of \confg{}, CG}
	\label{fig:confg}
\end{figure*}
}
\cmnt{
Once the transaction commits it stores its conflicts in the form of \CG{}. In order to store the \emph{Conflict Graph} or \confg \emph{(CG(V, E))}, we are maintaining \emph{adjacency list}. In which all the vertices (or \vnode{s}) are stored in the vertex list (or \vl{}) in increasing order of timestamp between the two sentinal node \vh{} (-$\infty$) and \vt{} (+$\infty$). The structure of the vertex node (or \vnode) is as $\langle ts_i, AU_{id}, \inc{}, \vn, \en\rangle$. Where $ts$ is a unique timestamp $i$ of committed transactions $T_i$ assign at the beginning of it. $AU_{id}$ is the $id$ of \au{} which is executed by transaction $T_i$. To maintain the indegree count of each \vnode{} we use \inc{} initialize as 0. \vn{} points to the next vertex of the \vl. For corresponding to each \vnode{}, it maintains all the conflicts in the edge list (\el) as shown in \figref{confg}. \el{} stores \enode{} (or conflicting transaction nodes) belonging to \emph{vNode} stores in increasing order of timestamp (or $ts$) in between the two sentinal nodes \eh{} (-$\infty$) and \et{} (+$\infty$). The structure of \enode{} is as $\langle$\emph{$ts_j$, vref}, \en{} $\rangle$. $ts_j$ is a unique timestamp $j$ of \emph{committed} transaction $T_j$ which had a conflict with $T_i$. \emph{Vertex node reference (or vref)} keeps the reference of its own vertex which is present in the \vl. \en{} points the next \enode{} of the \el{}.
}


\cmnt{\tikz \node[circle,scale=.5,color=black, fill=pink]{\textbf{3}};} 
If the conflict list is \emph{nil} (\Lineref{cminer251}), there is no need to create a node in the BG. Otherwise, create the node with respective dependencies in the BG and remove those AUs from the concurrent bin (\Lineref{cminer2511}). To maintain the BG, it calls \texttt{addVert()} and \texttt{addEdge()} methods of the lock-free graph library. The details of \texttt{addVert()} and \texttt{addEdge()} methods are explained in \subsecref{bg}.
\cmnt{\tikz \node[circle,scale=.5,color=black, fill=pink]{\textbf{4}};} Once the transactions successfully executed the \au{s} and done with BG construction, the \conminer{} computes the hash of the previous block. Eventually, \cmnt{\tikz \node[circle,scale=.5,color=black, fill=pink]{\textbf{5}};}\conminer{} proposes a block consisting of a set of transactions, BG, the final state of each shared \tobj{s}, previous block hash, and\cmnt{\tikz \node[circle,scale=.5,color=black, fill=pink]{\textbf{6}};} sends it to all other network peers to validate. 


\subsection{Concurrent Validator}
\label{subsec:cvalidator}
The concurrent validator validates the block proposed by the concurrent miner. 
It executes the block transactions concurrently and deterministically in two phases using a concurrent bin and BG given by the \conminer{}. In the first phase, validator threads execute the independent AUs of concurrent bin concurrently (\Lineref{val11} to \Lineref{val112}). Then in the second phase, it uses BG to executes the dependent AUs by \texttt{executeCode()} method at \Lineref{sl4} and \Lineref{sl41} using \texttt{searchLocal()}, \texttt{searchGlobal()} and \texttt{decInCount()} methods of lock-free graph library at \Lineref{val5}, \Lineref{val16} and (\Lineref{val8}, \Lineref{val20}), respectively. BG consists of dependency among the conflicting transactions that restrict them to execute serially.
The functionality of lock-free graph library methods is explained earlier in \subsecref{bg}. 



\begin{algorithm}[!t]
	\scriptsize
	\caption{\emph{\cvalidator}{(\aul, \emph{BG)}}: Concurrently $V$ threads are executing AUs with the help of concurrent bin followed by the BG given by the miner.}
	\setlength{\multicolsep}{0pt}
	\begin{multicols}{2}
		\begin{algorithmic}[1]
			\makeatletter\setcounter{ALG@line}{122}\makeatother
			\Procedure{\emph{\cvalidator}{(\aul, BG)}}{}		
			\State /*Execute until all AUs successfully completed*/ \label{lin:val1}
			
			{
				\State /*\textbf{Phase-1}: Concurrent Bin AUs execution.*/
				\While{(concCount $<$ size\_of(\emph{concBin[]}))}\label{lin:val11}
				\State count $\gets$ concCount.$get\&Inc(\aul)$;
				\State $AU_{id}$ $\gets$ \emph{concBin[count]};
				\State  \exec{($AU_{id}$)}; 
				\EndWhile \label{lin:val112}
				\State /*\textbf{Phase-2}: Block Graph AUs execution.*/
			}
			\While{(\nc{} $<$ size\_of(\aul))} \label{lin:val2}
			\While{(\cachel{}.hasNext())} 
			\label{lin:val3}
			\State cacheVer $\gets$ \cachel{}.next(); \label{lin:val4}
			\State cacheVertex $\gets$  \searchl{(cacheVer, \blank{1cm} AU$_{id}$)};\label{lin:val5}
			\State  \exec{($AU_{id}$)};\label{lin:sl4} 
			\While{(cacheVertex)}\label{lin:val7}
			\State cacheVertex $\gets$ decInCount(cacheVertex);\label{lin:val8}
			\EndWhile\label{lin:val10}
			\State Remove the current node (or cacheVertex) \blank{1cm} from local \cachel; \label{lin:val12}
			\EndWhile\label{lin:val13}
			\State vexNode $\gets$ \searchg{(BG, $AU_{id}$)}; 
			\label{lin:val16}
			\State  \exec{($AU_{id}$)};\label{lin:sl41} 
			\While{(verNode)}\label{lin:val19}
			\State verNode $\gets$ decInCount(verNode);\label{lin:val20}
			\EndWhile\label{lin:val22}
			\EndWhile\label{lin:val27}
			\EndProcedure
		\end{algorithmic}
	\end{multicols}
	
\end{algorithm}

After the successful execution of all the \au{s}, the \convalidator{} compares its computed final state with the final states given by the \conminer{}. If the final state matches for all the shared data-objects, then the block proposed by the \conminer{} is valid. Finally, {based on consensus between network peers, the block is appended to the blockchain, and the respective \conminer{} is rewarded.}


\cmnt{
\begin{algorithm}
	\scriptsize
	\label{alg:cvalidator} 	
	\caption{\cvalidator(): Concurrently $V$ threads are executing atomic units of smart contract with the help of $CG$ given by the miner.}
	\begin{algorithmic}[1]
		\makeatletter\setcounter{ALG@line}{69}\makeatother
		\Procedure{\cvalidator()}{} \label{lin:cvalidator1}
		\State /*Execute until all the atomic units successfully completed*/\label{lin:cvalidator2}
		\While{(\nc{} $<$ size\_of(\aul))}\label{lin:cvalidator3}
		\State \vnode{} $\gets$ $CG$.\vh;\label{lin:cvalidator4}
		\State \searchl();/*First search into the thread local \cachel*/\label{lin:cvalidator5}
		\State \searchg(\vnode);/*Search into the \confg*/\label{lin:cvalidator6}
		\EndWhile \label{lin:cvalidator7}
		\EndProcedure\label{lin:cvalidator8}
	\end{algorithmic}
\end{algorithm}

\begin{algorithm}
	\scriptsize
	\label{alg:searchl} 	
	\caption{\searchl(): First thread search into its local \cachel{}.}
	\begin{algorithmic}[1]
		\makeatletter\setcounter{ALG@line}{77}\makeatother
		\Procedure{\searchl()}{}\label{lin:searchl1}
		\While{(\cachel{}.hasNext())}/*First search into the local nodes list*/\label{lin:searchl2}
		\State cacheVer $\gets$ \cachel{}.next(); \label{lin:searchl3} 
		\If{( cacheVer.\inc.CAS(0, -1))} \label{lin:searchl4}
		\State \nc{} $\gets$ \nc{}.$get\&Inc()$; \label{lin:searchl5}
		\State /*Execute the atomic unit of cacheVer (or cacheVer.$AU_{id}$)*/ \label{lin:searchl6}
		\State  \exec(cacheVer.$AU_{id}$);\label{lin:searchl7}
		\While{(cacheVer.\eh.\en $\neq$ cacheVer.\et)} \label{lin:searchl8}
		\State Decrement the \emph{inCnt} atomically of cacheVer.\emph{vref} in the \vl{}; \label{lin:searchl9} 
		\If{(cacheVer.\emph{vref}.\inc{} == 0)}\label{lin:searchl10}
		\State Update the \cachel{} of thread local log, \tl{}; \label{lin:searchl11}
		\EndIf\label{lin:searchl12}
		\State cacheVer $\gets$ cacheVer.\en;\label{lin:searchl13}
		\EndWhile\label{lin:searchl14}
		\Else\label{lin:searchl15}
		\State Remove the current node (or cacheVer) from the list of cached nodes; \label{lin:searchl16}
		\EndIf\label{lin:searchl17}
		
		\EndWhile\label{lin:searchl18}
		\State return $\langle void \rangle$;\label{lin:searchl19}
		\EndProcedure\label{lin:searchl20}
	\end{algorithmic}
\end{algorithm}

\begin{algorithm}
	\scriptsize
	\label{alg:searchg} 	
	\caption{\searchg(\vnode): Search the \vnode{} in the \confg{} whose \inc{} is 0.}
	\begin{algorithmic}[1]
		\makeatletter\setcounter{ALG@line}{97}\makeatother
		\Procedure{\searchg(\vnode)}{} \label{lin:searchg1}
		\While{(\vnode.\vn{} $\neq$ $CG$.\vt)}/*Search into the \confg*/ \label{lin:searchg2}
		\If{( \vnode.\inc.CAS(0, -1))} \label{lin:searchg3}
		\State \nc{} $\gets$ \nc{}.$get\&Inc()$; \label{lin:searchg4}
		\State /*Execute the atomic unit of \vnode (or \vnode.$AU_{id}$)*/\label{lin:searchg5}
		\State  \exec(\vnode.$AU_{id}$);\label{lin:searchg6}
		\State \enode $\gets$ \vnode.\eh;\label{lin:searchg7}
		\While{(\enode.\en{} $\neq$ \enode.\et)}\label{lin:searchg8}
		\State Decrement the \emph{inCnt} atomically of \enode.\emph{vref} in the \vl{};\label{lin:searchg9} 
		\If{(\enode.\emph{vref}.\inc{} == 0)}\label{lin:searchg10}
		\State /*\cachel{} contains the list of node which \inc{} is 0*/\label{lin:searchg11}
		\State Add \enode.\emph{verf} node into \cachel{} of thread local log, \tl{}; \label{lin:searchg12}
		\EndIf \label{lin:searchg13}
		\State \enode $\gets$ \enode.\en; \label{lin:searchg14}
		\EndWhile\label{lin:searchg15}
		\State \searchl();\label{lin:searchg16}
		\Else\label{lin:searchg17}
		\State \vnode $\gets$ \vnode.\vn;\label{lin:searchg18}
		\EndIf\label{lin:searchg19}
		\EndWhile\label{lin:searchg20}
		\State return $\langle void \rangle$;\label{lin:searchg21}
		\EndProcedure\label{lin:searchg22}
	\end{algorithmic}
\end{algorithm}
}

\subsection{Optimizations}
\label{subsec:opt}
{
To make the proposed approach storage optimal and efficient, this subsection explains the key change performed on top of the solution proposed by Anjana et al.~\cite{Anjana+:CESC:PDP:2019}.


In Anjana et al.~\cite{Anjana+:CESC:PDP:2019}, there is a corresponding vertex node in the \bg{} (BG) for every \sctrn{s} in the block. We observed that all the \sctrn{s} in the block need not have dependencies. Adding a vertex node for such \sctrn{s} takes additional space in the block. This is the first optimization our approach provides. In our approach, only the dependent \sctrn{s} have a vertex in the BG, while the independent \sctrn{s} are stored in the concurrent bin, which does not need any additional space. During the execution, a concurrent miner thread does not add a vertex to the BG if it identifies that the currently executed \sctrn{} does not depend on the \sctrn{s} already executed. However, suppose any other miner thread detects any dependence during the remaining \sctrn{s} execution. That thread will add the dependent \sctrn{s} vertices in the BG.


For example, let say we have $n$ \sctrn{s} in a block and a vertex node size is $\approx m$ kb to store in the BG, then it needs a total of $n*m$ kb of vertex node space for Anjana et al.~\cite{Anjana+:CESC:PDP:2019}. Suppose from $n$ \sctrn{s}, only $\frac{n}{2}$ have the dependencies, then a total of $\frac{n}{2}*m$ kb vertex space needed in the BG. In the proposed approach, the space optimization can be 100\% in the best case when all the \sctrn{s} are independent. While in the worst case, it can be 0\% when all the \sctrn{s} are dependent. However, only a few \sctrn{s} in a block have dependencies. Space-optimized BG helps to improve the network bandwidth and reduces network congestion.

Further, our approach combines the benefit of both \specbin{}-based approach \cite{VikramHerlihy:EmpSdy-Con:Tokenomics:2019} and STM-based approach~\cite{Anjana+:CESC:PDP:2019} to yield maximum speedup that can be achieved by validators to execute \sctrn{s}. So, another optimization is at the validators side; due to the concurrent bin in the block, the time taken to traverse the BG will decrease; hence, speedup increases. The concurrent validators execution is modified and divided into two phases. First, it concurrently executes \sctrn{s} of the concurrent bin using multiple threads, since \sctrn{s} in the concurrent bin will be independent. While in the second phase, dependent \sctrn{s} are stored in the BG and concurrently executed using BG to preserve the transaction execution order as executed by the miner.

}

\section{Correctness}
\label{sec:correctness}
{The correctness of concurrent BG, miner, and validator is described in this section. We first list the linearization points (LPs) of the \bg{} library methods as follows:}

\begin{enumerate}
	\item \addv{(\vgn)}: (\vp.\vn.CAS(\vc, \vgn)) in \Lineref{addv5} is the LP point of \addv{()} method if \vnode{} is not exist in the BG. If \vnode{} is exist in the BG then (\vc.$ts_i$ $\neq$ \vgn.$ts_i$) in \Lineref{addv3} is the LP point.
	\item \adde{\emph{(fromNode, toNode)}}: 		(\ep.\en.CAS(\ec, \egn)) in \Lineref{adde7} is the LP point of \adde{()} method if \enode{} is not exist in the BG. If \enode{} is exist in the BG then 		(\ec.$ts_i$ $\neq$ toNode.$ts_i$) in \Lineref{adde5} is the LP point.
	\item \searchl{(cacheVer, $AU_{id}$)}: (cacheVer.\inc.CAS(0, -1)) in \Lineref{sl1} is the LP point of \searchl{()} method.
	\item \searchg{(BG, $AU_{id}$)}: (\vnode.\inc.CAS(0, -1)) in \Lineref{sg1} is the LP point of \searchg{()} method.
	\item \texttt{decInCount(remNode)}: \Lineref{ren1} is the LP point of \texttt{decInCount()} method.
\end{enumerate}

\begin{theorem}
	Any history $H_m$ generated by the concurrent miner using the BTO protocol satisfies co-opacity.
\end{theorem}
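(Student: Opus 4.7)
The plan is to exhibit a t-sequential legal serialization $S$ of the complete history $\overline{H_m}$ whose order is given by the BTO timestamps assigned at $\begt()$, and then verify each clause of co-opacity against this $S$. Specifically, order all committed transactions of $\overline{H_m}$ in increasing order of their begin-timestamps, and for each aborted/incomplete transaction $T_k$ (recall $\overline{H_m}$ inserts $\trya_k$ after $T_k$'s last event), splice $T_k$ into $S$ immediately before the step that justified its abort in BTO (i.e., at the position where its timestamp-order check failed). This yields a t-sequential candidate history $S$ with the same event set as $\overline{H_m}$.

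First I would establish validity and legality. Because the concurrent miner is built on top of BTO, every successful $\tread_i(x,v)$ returns the value written by exactly the committed writer $T_j$ with the largest $ts_j$ such that $ts_j < ts_i$, and BTO aborts $T_i$ otherwise; this is precisely the condition that in the timestamp ordering $S$, the \lastw{} of $r_i(x,v)$ is $T_j$ and wrote value $v$ to $x$. So $S$ is legal by construction, and in particular valid. For aborted transactions, validity follows because BTO never returns a value to a read whose legality check has failed: the read either returns $\abort$ or a value that was legal at the splice point chosen above.

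Next I would verify the real-time and conflict clauses. For real-time order: if $T_i \prec^{RT}_{H_m} T_j$, then $T_i$ has already completed before $T_j$'s $\begt()$ event, so $ts_i < ts_j$; hence $T_i$ precedes $T_j$ in $S$. For conflict preservation, take any conflict $T_i \prec^{Conf}_{H_m} T_j$ (of any of the three flavors in the excerpt's definition). In each case BTO's abort rules guarantee that, for both transactions to have executed their conflicting operations without abort, the read/write ordering is compatible with $ts_i < ts_j$; otherwise one of them would have been aborted and retried with a fresh timestamp rather than committed. Thus $\prec^{Conf}_{H_m} \;\subseteq\; \prec^{Conf}_S$. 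Together with validity, legality, equivalence, and real-time preservation, these establish co-opacity.

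The main obstacle I anticipate is handling the aborted and live transactions cleanly, because BTO typically only reasons about the committed projection. The delicate point is choosing the splice location for each $T_k \in \live{H_m} \cup \aborted{H_m}$ so that every read of $T_k$ that returned a non-$\abort$ value is still legal at that location in $S$. I would argue this by induction on the prefix of $T_k$'s executed steps: at the moment BTO allowed $\tread_k(x,v)$ to succeed, the value $v$ was the legal value at $T_k$'s timestamp position, and no later committed write with timestamp between the last committed writer of $x$ and $ts_k$ can exist (BTO would have aborted $T_k$ retroactively through its write-timestamp bookkeeping). A secondary obstacle is coordinating this argument with the lock-free BG construction in $\addv$/$\adde$: I need to confirm that the dependency edges added at the linearization points in \Lineref{addv5} and \Lineref{adde7} respect the same timestamp order, so that the BG does not witness a conflict cycle that would contradict the $S$ just constructed; this follows because every edge is oriented from lower to higher timestamp by \Lineref{cg8}--\Lineref{cg11}, making the BG an acyclic witness consistent with $S$.
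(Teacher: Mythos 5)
Your proof is a genuinely different route from the paper's. The paper disposes of this theorem in three sentences by citation: it observes that the miner's history $H_m$ is produced by the BTO protocol and that BTO is already known to generate only co-opaque histories (citing the textbook treatment of timestamp ordering and the OSTM paper for the co-opacity claim), so $H_m$ inherits the property. You instead reprove the underlying fact about BTO from scratch: constructing the serialization $S$ by begin-timestamp order, checking legality via the last-writer/timestamp correspondence, deriving real-time preservation from monotonic timestamp assignment, and deriving conflict preservation from BTO's abort rules. Your argument is essentially the standard correctness proof of timestamp ordering lifted to co-opacity, and it is sound in outline; the one place where it is looser than it should be is the treatment of aborted and live transactions --- "splice $T_k$ immediately before the step that justified its abort" does not by itself describe a position between whole transactions in a t-sequential history (the natural and correct choice is $T_k$'s own timestamp position, with legality of its earlier successful reads guaranteed by the read-timestamp bookkeeping aborting any intervening writer, not by aborting $T_k$ "retroactively" as you phrase it). What your approach buys is a self-contained argument that does not depend on an external reference, and it makes explicit exactly which features of BTO (timestamp assignment at begin, the $rts/wts$ checks) each clause of co-opacity rests on; what the paper's approach buys is brevity and a clean separation of concerns, since the miner adds nothing to the history beyond what BTO produces, so no new proof obligation actually arises. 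Your closing remark about the block graph is unnecessary for this theorem --- the BG edges are derived from the already-committed history and play no role in whether $H_m$ itself is co-opaque --- though it is relevant to the later theorems relating $H_m$ to the validator's history $H_v$.
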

\begin{proof}
	Concurrent miner executes \sctrn{s} concurrently using BTO protocol and generate a concurrent history $H_m$. The underlying BTO protocol ensures the correctness of concurrent execution of $H_m$. The BTO protocol \cite[Chap 4]{WeiVoss:TIS:2002:Morg} proves that any history generated by it satisfies co-opacity \cite{Peri+:OSTM:Netys:2018}. So, implicitly BTO proves that the history $H_m$ generated by concurrent miner using BTO satisfies co-opacity. 
\end{proof}

\begin{theorem}
	Any history $H_m$ generated by the concurrent miner using the MVTO protocol satisfies opacity.
\end{theorem}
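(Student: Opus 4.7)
The plan is to reduce the statement to the known correctness guarantee of the MVTO protocol, in the same style as the proof of the preceding BTO theorem. The concurrent miner executes each atomic unit inside an STM transaction whose transactional operations (\begt{}, \tread{}, \twrite{}, \tryc{}, \trya{}) are delegated entirely to the underlying MVTO implementation, and the miner's additional bookkeeping (updating the concurrent bin, constructing the BG via \addv{} and \adde{}) happens strictly after a transaction successfully commits or is globally decided to abort. Thus the sub-history of $H_m$ consisting only of transactional events is exactly a history produced by MVTO.

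First I would make this reduction precise: let $H_m^{stm}$ denote the projection of $H_m$ onto the events generated by STM operations invoked in \algoref{cminer} (\Lineref{cminer5}, \Lineref{cminer10}, \Lineref{cminer16}, \Lineref{cminer22}). Because the BG-construction steps (\Lineref{cminer26}--\Lineref{cminer27}) touch only miner-local graph structures and are executed only after $\tryc_i$ returned $\commit$, they neither read nor write any \tobj{} and therefore do not appear in $H_m^{stm}$ and do not alter the real-time order between transactional events. Hence $H_m^{stm}$ is a well-formed MVTO history, and $H_m$ is opaque if and only if $H_m^{stm}$ is opaque.

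Next I would invoke the opacity guarantee of MVTO established by Kumar et al.~\cite{Kumar+:MVTO:ICDCN:2014}: every history admitted by MVTO is \valid{} and admits a t-sequential \legal{} (in the multi-version sense) equivalent witness $S$ ordered by commit timestamps that respects $\prec^{RT}_{H_m^{stm}}$. Since aborted/live transactions in $H_m^{stm}$ are precisely those that were retried by the miner's \texttt{goto} at \Lineref{cminer12} and \Lineref{cminer24}, completing $\overline{H_m^{stm}}$ by appending $\trya$ to every live transaction (as in \secref{model}) yields the same witness $S$, so the opacity conditions (equivalence to $\overline{H_m}$ and respecting real-time order) carry over to $H_m$ unchanged.

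The main obstacle I anticipate is the justification that BG maintenance can be cleanly factored out without breaking opacity, in particular arguing that the \texttt{STM.getConfList} call used in \algoref{cg} does not expose inconsistent state to other transactions. I would handle this by noting that \gconfl{} only reads the per-transaction conflict list that MVTO has already finalized at commit time and is invoked after $\tryc_i$ has returned, so it is causally after the linearization point of $T_i$'s commit and cannot affect the MVTO-computed equivalent serial order. With that observation the witness $S$ given by MVTO serves as the required t-sequential legal history for $H_m$, completing the proof.
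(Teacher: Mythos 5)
Your proposal is correct and takes essentially the same route as the paper: both reduce the claim to the opacity guarantee of the underlying MVTO protocol established by Kumar et al.~\cite{Kumar+:MVTO:ICDCN:2014}. The paper states this reduction in three sentences without elaboration, whereas you additionally justify why the miner's BG-construction and conflict-list bookkeeping can be factored out of the transactional history — a worthwhile strengthening of the same argument, not a different one.
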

\begin{proof}
	Concurrent miner executes \sctrn{s} concurrently using MVTO protocol and generate a concurrent history $H_m$. The underlying MVTO protocol ensures the correctness of concurrent execution of $H_m$. The MVTO protocol \cite{Kumar+:MVTO:ICDCN:2014} proves that any history generated by it satisfies opacity \cite{GuerKap:Opacity:PPoPP:2008}. So, implicitly MVTO proves that the history $H_m$ generated by concurrent miner using MVTO satisfies opacity.
\end{proof}

\begin{theorem}
	All the dependencies between the conflicting nodes are captured in BG.
\end{theorem}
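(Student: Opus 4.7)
The plan is to show that for every pair of conflicting committed transactions $T_i, T_j$ in the miner's concurrent history $H_m$, a directed edge (from the lower-timestamp vertex to the higher-timestamp vertex) is eventually present in BG. I will fix an arbitrary conflicting pair with $i < j$ and trace the execution of the thread that commits the later transaction, then handle the concurrent bin optimization as a separate case.

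First I would invoke the correctness of the underlying timestamp-based STM. By the two theorems just established, the committing transactions in $H_m$ are serialized in timestamp order. A standard invariant of BTO/MVTO is that when $T_j$ successfully returns from \tryc$_j()$ in \Lineref{cminer22}, every committed $T_i$ ($i<j$) whose access of a t-object $x$ conflicts with an access of $T_j$ on $x$ has been recorded in the conflict list returned by STM.\gconfl{$(j)$}. I would make this precise by case-analysis on the three conflict types in the conflict-order definition from \secref{model} and citing the validation steps of BTO/MVTO that detect each. Thus, at the linearization point of $T_j$'s successful \tryc{}, $T_i \in \cl[j]$.

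Next I would trace the control flow from \Lineref{cminer251} onward. Since $\cl[j]$ is non-nil, the executing thread creates a vertex for $T_j$, then invokes BG(\vnode, STM) which iterates over $\cl[j]$ and, for each $T_i \in \cl[j]$, calls \addv{$(i)$}, \addv{$(j)$}, and \adde{$(i,j)$} (using the $ts_j<ts_i$ branch in \Lineref{cg8} only to orient the edge from smaller to larger timestamp, preserving acyclicity). I would then appeal to the LPs listed at the start of this section: the CAS in \Lineref{addv5} guarantees that after \addv{} returns, the vertex is in \vl{}, and the CAS in \Lineref{adde7} guarantees that after \adde{} returns, the edge is in the \el{} of the source vertex. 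If a concurrent miner thread has already inserted the same vertex or edge, the non-CAS branch (\Lineref{addv10}, \Lineref{adde13}) reports success without loss of the structural invariant. Hence at the linearization point of $T_j$'s call to \adde{}, the conflict edge $(i,j)$ is in BG.

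The main obstacle, and the place where care is needed, is reconciling the above with the concurrent-bin optimization of \subsecref{opt}. An AU initially sits in \emph{concBin[]} and is moved out only when some later transaction finds it in its conflict list and executes \Lineref{cminer2511}. The hard part will be proving the invariant that after every successful \tryc{} linearization point, every conflict among committed transactions observed so far is witnessed either by an existing BG edge or by the in-progress move at \Lineref{cminer2511} by the thread currently committing. I would establish this by induction on the sequence of successful \tryc{} events in $H_m$: the base case is trivial (no conflicts yet), and for the step I would combine (i) the STM guarantee that $T_i\in\cl[j]$ whenever $T_i,T_j$ conflict and $i<j$, with (ii) the lock-freedom of \addv{} and \adde{}, which ensures the move-out and edge-insertion performed by the committing thread completes, so no conflict is lost. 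Concluding the induction yields the theorem.
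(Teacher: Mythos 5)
Your proposal is correct and follows essentially the same route as the paper's proof: rely on the correctness of the underlying BTO/MVTO protocol to guarantee that every conflict appears in the committing transaction's conflict list, and on the linearization points of \addv{} and \adde{} to guarantee that the corresponding vertex and edge land in the BG. You go further than the paper in one respect---the paper's argument is a short appeal to STM correctness and the LPs and never explicitly addresses the concurrent-bin optimization, whereas your induction over successful \tryc{} events handles the case where a conflicting AU must first be moved out of \emph{concBin[]}; that added detail closes a step the paper leaves implicit rather than taking a different approach.
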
 
\begin{proof}
	Dependencies between the conflicting nodes are captured in the BG using LP points of lock-free graph library methods defined above. Concurrent miner constructs the lock-free BG using BTO and MVTO protocol in \subsecref{bg}. BG consists of vertices and edges, where each committed \sctrn{} act as a vertex  and edges (or dependencies) represents the conflicts of the respective STM protocol (BTO and MVTO). As we know, STM protocols BTO \cite[Chap 4]{WeiVoss:TIS:2002:Morg} and MVTO \cite{Kumar+:MVTO:ICDCN:2014} used in this paper for the concurrent execution are correct, i.e., these protocols captures all the dependencies correctly between the conflicting nodes. Hence, all the dependencies between the conflicting nodes are captured in the BG.
\end{proof}

\begin{theorem}
	\label{thm:hmve}
	A history $H_m$ generated by the concurrent miner using BTO protocol and a history $H_v$ generated by a concurrent validator are view equivalent.
\end{theorem}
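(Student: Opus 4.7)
The plan is to reduce view equivalence to conflict equivalence by leveraging the Block Graph (BG) as the bridge between $H_m$ and $H_v$. First I would observe that, since the validator only replays committed AUs, we may restrict attention to $\overline{H}_m$ with its aborted transactions removed; the BTO protocol's co-opacity (Theorem~1) already guarantees that this committed sub-history is equivalent to some t-sequential legal history $S_m$ that preserves the conflict order $\prec^{Conf}_{H_m}$. By Theorem~3, every edge of BG corresponds to a conflict captured during the miner's execution, and every such conflict induces an edge; hence BG is precisely the conflict graph of $S_m$, and $S_m$ is a topological order of BG.

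Next I would characterize $H_v$. Each validator thread either picks an AU from the concurrent bin (which, by construction in \subsecref{opt}, contains only AUs with no dependencies and therefore no conflicts with any other AU in the block) or claims a source vertex of BG via the CAS at \Lineref{sg1}, executes the corresponding AU to completion, and then atomically decrements the in-counts of its out-neighbors before any of those neighbors can become sources. Because a vertex is executed only after all its in-edges have been discharged, the real-time order of AU completions in $H_v$ is a topological order of BG. Moreover, two conflicting AUs cannot run concurrently in $H_v$: they are connected by a BG edge, so the later one becomes a source only after the earlier one has finished. Thus $H_v$ itself is a t-sequential legal history with respect to the conflict order encoded in BG.

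The key step is then to argue that $S_m$ and $H_v$ agree on the order of every pair of conflicting transactions. Both are topological orders of BG, and for any conflicting pair $(T_k, T_m)$ the edge in BG is directed from the lower-timestamp transaction to the higher-timestamp one (by the \adde{} logic of \Lineref{cg8}--\Lineref{cg11}), so their relative order is forced to coincide in any topological linearization. Consequently $S_m$ and $H_v$ are conflict equivalent, and since both are legal serial histories over the same set of committed AUs, their \lastw{} functions agree on every read, which gives view equivalence. By transitivity through $S_m$, the histories $H_m$ and $H_v$ are themselves view equivalent.

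The main obstacle will be the last coincidence-of-\lastw{} step: I need to rule out the possibility that a pair of non-conflicting transactions writing the same t-object end up in different relative positions in $S_m$ and $H_v$, which could change the \lastw{} seen by a subsequent reader. This is handled by noting that any two writers of a common t-object are, by definition of $\prec^{Conf}$, conflicting, so BG contains an edge between them and the order is in fact forced; a careful case analysis of the three clauses in the definition of conflict (write--write, write--read, read--write) will close this gap.
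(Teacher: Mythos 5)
Your proposal is correct and follows essentially the same route as the paper: the Block Graph serves as the bridge, $H_v$ is obtained as a topological order of BG, all conflicts (and hence the read-from relation) are preserved, and legality plus equivalence gives view equivalence. The paper's own proof is much terser and simply asserts these steps; your version supplies the details it leaves implicit, in particular the observation that conflicting pairs are ordered identically in every topological linearization of BG and the case analysis showing that agreement on the conflict order forces agreement on each read's \lastw{}, as well as the careful restriction to the committed sub-history so that the two histories have the same event set.
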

\begin{proof}
	A concurrent miner executes the \sctrn{s} of $H_m$ concurrently using BTO protocol, captures the dependencies of $H_m$ in the BG, and proposes a block $B$. Then it broadcasts the block $B$ along with BG to concurrent validators to verify the block $B$. The concurrent validator applies the topological sort on the BG and obtained an equivalent serial schedule $H_v$. Since the BG constructed from $H_m$ considers all the conflicts and $H_v$ obtained from the topological sort on the BG. So, $H_v$ is equivalent to $H_m$. Similarly, $H_v$ also follows the \emph{read from} relation of $H_m$. Hence, $H_v$ is legal. Since $H_v$ and $H_m$ are equivalent to each other, and $H_v$ is legal. So, $H_m$ and $H_v$ are view equivalent. 
\end{proof}

\begin{theorem}
	A history $H_m$ generated by the concurrent miner using MVTO protocol and a history $H_v$ generated by a concurrent validator are multi-version view equivalent.
\end{theorem}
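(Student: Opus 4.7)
The plan is to mirror the structure of \thmref{hmve} but replace legality and view equivalence with validity and multi-version view equivalence, since MVTO maintains multiple versions per \tobj{}. First I would unpack the definitions: to show $H_m$ and $H_v$ are MVVE, I must establish (i) both are \valid{} histories (every successful read returns a value written by some committed transaction), and (ii) $H_m$ is equivalent to $H_v$ in the sense of having the same set of events.

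For equivalence of event sets, I would argue that the concurrent validator re-executes exactly the AUs in the block that were committed by the miner in $H_m$. Since the BG contains a vertex for every dependent committed AU and the concurrent bin contains every independent committed AU, the validator executes each AU of $H_m$ exactly once. Hence $\evts{H_v}=\evts{H_m}$, giving equivalence.

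For validity, I would invoke the already-established theorem that the concurrent miner using MVTO produces an opaque history, which by definition is \valid. Then, using the theorem that BG captures all dependencies of $H_m$ together with the topological-sort based serialization done by the validator, I would argue that for every read $r_k(x,v)$ in $H_v$ there exists a committed transaction $T_j$ in $H_v$ that wrote $v$ to $x$ and precedes $r_k$ in $H_v$. The key observation is that MVTO's read-from relation is encoded among the conflict edges captured in BG: for a successful read $r_k(x,v)$ in $H_m$ that obtains its value from some version written by $T_j$, the edge from $T_j$ to $T_k$ (or the appropriate version-order edge) is present in BG, so any topological ordering places $T_j$ before $T_k$ in $H_v$. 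Thus $H_v$ is also \valid. Combining equivalence with validity of both sides yields MVVE.

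The main obstacle will be carefully arguing the validity preservation in the multi-version setting: unlike the BTO (single-version/legality) case, in MVTO a read need not return the value of the \emph{most recent} committed write in $H_v$; it only needs to return \emph{some} committed write's value. I expect the cleanest way to handle this is to show that the \texttt{addEdge} calls of the miner (triggered at each conflict detected by MVTO, including the version-order conflicts between writers whose version lies between a reader's chosen version and the write) insert exactly the edges needed to force the topological order of $H_v$ to agree with the version order used in $H_m$, so each reader in $H_v$ finds a matching committed writer earlier in $H_v$. Once that correspondence is made precise, validity of $H_v$ follows and the MVVE conclusion is immediate.
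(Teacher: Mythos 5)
Your proposal follows essentially the same route as the paper's own proof: establish event-set equivalence because the validator re-executes exactly the miner's committed AUs in a topological order of the BG, and establish validity of $H_v$ by arguing that the BG preserves MVTO's read-from relation so each reader's writer commits before it in $H_v$. Your treatment is more careful than the paper's (which simply asserts that BG maintains the read-from relations), but the decomposition and key ideas are the same.
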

\begin{proof}
	Similar to the proof of \thmref{hmve}, the concurrent miner executes the \sctrn{s} of $H_m$ concurrently using MVTO protocol, captures the dependencies in the BG, proposes a block $B$, and broadcasts it to the concurrent validators to verify it. MVTO maintains multiple-version corresponding to each shared object. Later, concurrent validator obtained $H_v$ by applying topological sort on the BG provided by the concurrent miner. Since, $H_v$ obtained from topological sort on the BG so, $H_v$ is equivalent to $H_m$. Similarly, the BG maintains the \emph{read from} relations of $H_m$. So, from MVTO protocol if $T_j$ reads a value for shared object $k$ say $r_j(k)$  from $T_i$ in $H_m$ then $T_i$ committed before $r_j(k)$ in $H_v$. Therefore, $H_v$ is valid. Since $H_v$ and $H_m$ are equivalent to each other and $H_v$ is valid. So, $H_m$ and $H_v$ are multi-version view equivalent. 
\end{proof}

\cmnt{

\begin{lemma}
History $H_m$ generated by BTO protocol and $H_v$ are view equivalent.
\end{lemma}

Concurrent execution of \SContract{s} may lead to inconsistent state, if it is not done carefully. In the concurrent execution of \miner, multiple threads are running concurrently and they can run in any order. But, if we achieve any equivalent serial execution of the concurrent execution then we can ensure that execution done by \conminer{} is consistent. 

So, we use an efficient framework, \emph{Software Transactional Memory system (STMs)} for the concurrent execution of \SContract{s} in optimistic manner by \miner. STMs are popular programming paradigm which take care of synchronization issues among the transactions and ensure atomicity. Being a programmer who is using the STM library, need not have to worry about consistency issues because STM library ensures the consistency of concurrent execution which is equivalent to some serial execution. We have started with one of the fashionable protocol of STMs as \emph{Basic Timestamp Ordering (STM\_BTO)} which executes non-conflicting transactions concurrently.
\begin{lemma}
	Any concurrent execution of transactions generated by STM\_BTO protocol produces conflict serializable schedule. \cite{WeiVoss:TIS:2002:Morg}
\end{lemma}
If two transactions $T_i$ and $T_j$ are accessing any common shared \tobj{} say $x$ and $ts_i$ is less than $ts_j$  but $T_j$ committed before $T_i$ then STM\_BTO protocol aborts the $T_i$ and retry it again. STM\_BTO protocol produces conflict serializable schedule which follows increasing order of transactions timestamp. It ensures deadlock-freedom by accessing the shared \tobj{s} in increasing order. 

Now, how can we ensure that the execution order done by \conminer{} are same as execution done by \convalidator{s}? To ensure not to reject the correct block by \convalidator{s}, we have used the concept of \cgraph{}. While executing the \SContract{s} by \conminer{} using optimistic STMs, it also maintains all the relevant conflicts in \cgraph{} concurrently. \Cgraph{} captures the dependency among the conflicting transactions and says what all transactions can run concurrently. Finally, \conminer{} proposes a block which consist of set of transactions, \cgraph, hash of previous block and final state of each shared \tobj{s} and send it to the \convalidator{s} to validate it. Later, the \convalidator{s} re-execute the same \SContract{} concurrently and deterministically with the help of \cgraph{} given by \conminer{} and verify the final state. If the final state matches then proposed block appended into the blockchain and miner gets incentive otherwise discard the proposed block.

To improve the concurrency further, we use an another prominent STM protocol as \emph{Multi-Version Timestamp Ordering (STM\_MVTO)}. It also follow the increasing order of timestamp to generate the multi-version conflict serializable schedule and ensure the deadlock-freedom same as STM\_BTO. 
\begin{lemma}
	Any concurrent execution of transactions generated by STM\_MVTO protocol produces multi-version conflict serializable schedule. \cite{Kumar+:MVTO:ICDCN:2014}
\end{lemma}

\cmnt{
\section{Requirements of the Concurrent Miner and Validator}
\label{sec:reqminerval}
The section describes the requirements of concurrent Miner and validator.
\begin{theorem}
Any history $H_m$ generated Concurrent miner should satisfy opacity.
\end{theorem}

Here, miner executes the smart contract concurrently with the help of optimistic STM protocols (BTO and MVTO). Internally, BTO and MVTO \cite{Kumar+:MVTO:ICDCN:2014} protocol ensures opacity. So, history $H_m$ generated Concurrent miner satisfies opacity.

Consider the history $H_m$ generated by BTO protocol and constructs block graph, $BG$ in which each committed transaction $T_i$ consider as vertices and edges between them as follows:
\begin{itemize}
\item r-w: After $T_i$ reads $x$ from $T_k$, $T_j$ writes on $x$ data-object then r-w edge will be from $T_i$ to $T_j$.  
\item w-r: If $T_i$ reads $x$ from the value written by $T_j$ then w-r edge will be from $T_i$ to $T_j$. 
\item w-w:
\end{itemize}
Concrrent miner provides $BG$ to concurrent validators to ensure the correct output by validators. After that concurrent validator apply topological sort on $BG$ and generates a history $H_v$.
\begin{lemma}
History $H_m$ generated by BTO protocol and $H_v$ are view equivalent.
\end{lemma}

\begin{theorem}
History $H_m$ generated by MVTO protocol and $H_v$ are multi-version view equivalent.
\end{theorem}
}

\subsection{The Linearization Points of Lock-free Graph Library Methods}

Here, we list the linearization points (LPs) of each method. Note that each method can return either true or false. So,
we define the LP for five methods:

\begin{enumerate}
\item addVertex()
\end{enumerate}

}

\section{Experimental Evaluation}
\label{sec:opt-result}


We aim to increase the efficiency of the miners and validators by employing concurrent execution of \sctrn{s} while optimizing the size of the BG appended by the miner in the block. To assess the efficiency of the proposed approach, we performed simulation on the series of benchmark experiments with Ethereum~\cite{ethereum:url} smart contracts from Solidity documentation~\cite{Solidity}. Since multi-threading is not supported by the Ethereum Virtual Machine (EVM)~\cite{ethereum:url, Dickerson+:ACSC:PODC:2017}, we converted the Ethereum smart contracts into C++. We evaluated the proposed approach with the state-of-the-art approaches \cite{Anjana+:CESC:PDP:2019, Dickerson+:ACSC:PODC:2017, VikramHerlihy:EmpSdy-Con:Tokenomics:2019} over baseline serial execution on three different workloads by varying the number of \sctrn{s}, the number of threads, and the number of shared objects. {The benchmark experiments are conservative and consist of one or fewer smart contracts \sctrn{s} in a block, }
which leads to a higher degree of conflicts than actual conflicts in practice where a block consists of \sctrn{s} from different contracts ($\approx$ 1.5 million deployed smart contracts \cite{EthereumByNumbers}). Due to fewer conflicts in the actual blockchain, the proposed approach is expected to provide greater concurrency. We structure our experimental evaluation to answer the following questions: 
\begin{enumerate}
	\item How much speedup is achieved with varying \sctrn{s} by concurrent miners and validators when fixing the number of threads and shared objects? As conflicts increase with increasing \sctrn{s}, we expect a decrease in speedup.
	
	\item How does speedup change when increasing the number of threads with a fixed number of \sctrn{s} and shared objects? We expect to see the speedup increase with increasing threads confined by logical threads available within the system. 
	\item How does speedup shift over different shared objects with fixed \sctrn{s} and threads? We expect an increase in speedup due to conflict deterioration with objects increase. So, we anticipate concurrent miners and validators overweigh serial miners and validators with fewer conflicts. 
\end{enumerate}

\subsection{Contract Selection and Benchmarking}

{This section provides a comprehensive overview of benchmark contracts coin, ballot, and simple auction from Solidity Documentation~\cite{Solidity} selected as real-world examples for evaluating the proposed approach.}
\cmnt{The selected contacts are converted to C++ for concurrent execution. We chose different contracts to demonstrate the wide variety of use-cases, such as a financial application using a coin contract, a collaborative use-case using a ballot, and a bidding application using an auction contract.}The \sctrn{s} in a block for the coin, ballot, and auction benchmark operate on the same contract, i.e., consists of the transaction calls of one or more methods of the same contract. In practice, a block consists of the \sctrn{s} from different contracts; hence we designed another benchmark contract called \emph{mix contract} consisting of contract transactions from coin, ballot, and auction in equal proportion in a block. 
The benchmark contracts and respective methods are as follows:

\noindent
\textbf{Coin Contract:} The coin contract is the simplest form of sub-currency. \cmnt{It implements three distinct functions identical to the standard ERC-20 token contract standard.}The users involved in the contract have accounts, and \emph{accounts} are shared objects. It implements methods such as \texttt{mint()}, \texttt{transfer()/send()}, and \texttt{getbalance()} which represent the \sctrn{s} in a block. The contract deployer uses the \texttt{mint()} method to give initial coins/balance to each account with the same fixed amount. We initialized the coin contract's initial state with a fixed number of accounts on all benchmarks and workloads. Using \texttt{transfer()}, users can transfer coin from one account to other account. The \texttt{getbalance()} is used to check the coins in a user account. For the experiments a block consists of 75\% \texttt{getbalance()}, and 25\% \texttt{transfer()} calls. A conflict between \sctrn{s} occurs if they access a common object (account), and at least one of them performs a \texttt{transfer()} operation.



\noindent
\textbf{Ballot Contract:} The ballot contract is an electronic voting contract in which \emph{voters} and \emph{proposals} are shared objects. The \texttt{vote()}, \texttt{delegate()}, and \texttt{winningproposal()} are the methods of ballot contract. The voters use the \texttt{vote()} method to cast their vote to a specific proposal. Alternatively, a voter can delegate their vote to other voter using \texttt{delegate()} method. A voter can cast or delegate their vote only once. At the end of the ballot, the \texttt{winningproposal()} is used to compute the winner. We initialized the ballot contract's initial state with a fixed number of proposals and voters for benchmarking on different workloads for experiments. The proposal to voter ratio is fixed to 5\% to 95\% of the total shared objects. A block consists of 90\% \texttt{vote()}, and a 10\% \texttt{delegate()} method calls followed by a \texttt{winningproposal()} call for the experiments. The \sctrn{s} will conflict if they operate on the same object. So, if two voters \texttt{vote()} for the same proposal simultaneously, then they will conflict.

\noindent
\textbf{Simple Auction Contract:}
It is an online auction contract in which bidders bid for a commodity online. In the end, the amount from the maximum bidder is granted to the owner of the commodity. The \emph{bidders}, \emph{maximum bid}, and \emph{maximum bidder} are the shared object. In our experiments, the initial contract state is a fixed number of bidders with a fixed initial account balance and a fixed period of the auction to end. In the beginning, the maximum bidder and bid are set to null (the base price and the owner can be set accordingly). The bidder uses the contract method \texttt{bid()} to bid for the commodity with their bid amount—the max bid amount and the bidder changes when a bid is higher than the current maximum. A bidder uses the \texttt{withdraw()} method to move the balance of their previous bid into their account. The bidder uses \texttt{bidEnd()} method to know if the auction is over. Finally, when the auction is ended, the maximum bidder (winner) amount is transferred to the commodity owner, and commodity ownership is transferred to the max bidder. For benchmarking in our experiments a block consist of 8\% \texttt{bid()}, 90\% \texttt{withdraw()}, and 2\% \texttt{bidEnd()} method calls. The max bidder and max bid are the conflict points whenever a new bid with the current highest amount occurs.

\noindent
\textbf{Mix Contract:} In this contract, we combine the \sctrn{s} in equal proportion from the above three contracts (coin, ballot, and auction). Therefore, our experiment block consists of an equal number of corresponding contract transactions with the same initial state as initialized in the above contracts.

\subsection{Experimental Setup and Workloads} 
We ran our experiments on a large-scale 2-socket Intel(R) Xeon(R) CPU E5-2690 V4 @ 2.60 GHz with a total of 56 hyper-threads (14 cores per socket and two threads per core) with 32 GB of RAM running Ubuntu 18.04.

In our experiments, we have noticed that speedup varies from contract to contract on different workloads. The speedup on various contracts is not for comparison between contracts. Instead, it demonstrates the proposed approach efficiency on several use-cases in the blockchain. We have considered the following three workloads for performance evaluation:

\begin{enumerate}
	\item In workload 1 (W1), a block consists of \sctrn{s} varies from 50 to 400, fixed 50 threads, and shared objects of 2K. The \sctrn{s} per block in Ethereum blockchain is on an average of 100, while the actual could be more than 200~\cite{Dickerson+:ACSC:PODC:2017}, however a theoretical maximum of $\approx400$~\cite{EthereumGasLimit2020} after a recent increase in the gas limit. Over time, the number of \sctrn{s} per block is increasing. In practice, one block can have less \sctrn{s} than the theoretical cap, which depends on the gas limit of the block and the gas price of the transactions. We will see that in a block, the percentage of data conflicts increase with increasing \sctrn{s}. The conflict within a block is described by different \sctrn{s} accessing a common shared object, and at least one of them performs an update. We have found that the data conflict varies from contract to contract and has a varied effect on speedup.
	\item In workload 2 (W2), we varied the number of threads from 10 to 60 while fixed the \sctrn{s} to 300 and shared objects to 2K. Our experiment system consists of a maximum of 56 hardware threads, so we experimented with a maximum of 60 threads. We observed that the speedup of the proposed approach increases with an increasing number of threads limited by logical threads. 
	\item The number of \sctrn{s} and threads in workload 3 (W3) are 50 and 300, respectively, although the shared objects range from 1K to 6K. This workload is used with each contract to measure the impact of the number of participants involved. Data conflicts are expected to decrease with an increasing number of shared objects; however, the search time may increases. The speedup depends on the execution of the contract; but, it increases with an increasing number of shared objects.
	
\end{enumerate}

\subsection{Analysis}
In our experiments, blocks of \sctrn{s} were generated for each benchmark contract on three workloads: W1 (varying \sctrn{s}), W2 (varying threads), and W3 (varying shared objects). Then, concurrent miners and validators execute the blocks concurrently. The corresponding blocks serial execution is considered as a baseline to compute the speedup of proposed concurrent miners and validators. The running time is collected for 15 iterations (times) with 10 blocks per iteration, and 10 validators validate each block. The first block of each iteration is left as a warm-up run, and a total of 150 blocks are created for each reading. So, each block execution time is averaged by 9. Further, the total time taken by all iterations is averaged by the number of iteration for each reading; the \eqnref{time} is used to compute a reading time.
\begin{equation}
	\alpha_t = \frac{\displaystyle\sum_{i=1}^{n}\displaystyle\sum_{b=1}^{m-1} \beta_t }{n*(m-1)}    
	\label{eq:time}
\end{equation}
Where $\alpha_t$ is an average time for a reading, $n$ is the number of iterations, $m$ is the number of blocks, and $\beta_t$ is block execution time. 


In all plots, figure (a), (b), and (c) correspond to workload W1, W2, and W3, respectively. \figref{miner-speedup-coin} to \figref{miner-speedup-mix} show the speedup achieved by proposed and state-of-the-art concurrent miners over serial miners for all benchmarks and workloads. \figref{decval-speedup-coin} to \figref{decval-speedup-mix} show the speedup achieved by proposed and state-of-the-art concurrent decentralized validators over serial validators for all benchmarks and workloads. \figref{fj-speedup-coin} to \figref{fj-speedup-mix} show speedup achieved by proposed and state-of-the-art concurrent fork-join validators over serial validators. \figref{bgCoin} to \figref{bgMix} show the average number of edges (dependencies) and vertices (\sctrn{s}) in the block graph for respective contracts on all workloads. While \figref{incSizeCoin} to \figref{incSizeMix} show the percentage of additional space required to store the block graph in Ethereum block. A similar observation has been found \cite{DBLP:journals/corr/abs-1809-01326} for the fork-join validator, the average number of dependencies, and space requirement on other contracts.

We observed that speedup for all benchmark contracts follows the roughly same pattern. In the read-intensive benchmarks (coin and mix), speedup likely to increase on all the workloads, while in the write-intensive benchmark (ballot and auction), speedup drop downs on high contention. We also observed that there might not be much speedup for concurrent miners with fewer \sctrn{s} (less than 100) in the block, conceivably due to multi-threading overhead. However, the speedup for concurrent validators generally increases across all the benchmarks and workloads. Fork-join concurrent validators on W2 is an exception in which speedup drops down with an increase in the number of threads since fork-join follows a master-slave approach where a master thread becomes a performance bottleneck. We also observed that the concurrent validators achieve a higher speedup than the concurrent miners. Because the concurrent miner executes the \sctrn{s} non-deterministically, finds conflicting \sctrn{s}, creates concurrent bin and an efficient BG for the validators to execute the \sctrn{s} deterministically.

Our experiment results also show the BG statics and additional space required to store BG in a block of Ethereum blockchain, which shows the space overhead. We compare our proposed approach with the existing speculative bin (Spec Bin) based approach~\cite{VikramHerlihy:EmpSdy-Con:Tokenomics:2019}, the fork-join approach (FJ-Validator)~\cite{Dickerson+:ACSC:PODC:2017} and the approach proposed in~\cite{Anjana+:CESC:PDP:2019} (we call it default/Def approach). The proposed approach combines the benefit of both bin-based and the STM approaches to get maximum benefit for concurrent miners and validators. The proposed approach\footnote{In the figures, legend items in bold.} produces an optimal BG, reduces the space overhead, and outperforms the state-of-the-art approaches.

\figref{miner-speedup-coin}(a) to \figref{miner-speedup-mix}(a) show the speedup for concurrent miner on W1. 
As shown in \figref{miner-speedup-coin}(a) and \figref{miner-speedup-mix}(a) for read-intensive contracts such as in coin and mix contract, the speedup increases with an increase in \sctrn{s}, respectively. While in write-intensive contracts such as ballot and auction contract the speedup does not increase with an increase in \sctrn{s}; instead, it may drop down if \sctrn{s} increases, as shown in \figref{miner-speedup-ballot}(a) and \figref{miner-speedup-auction}(a), respectively. This is because contention increases with an increase in \sctrn{s}.

\begin{figure}[H]
	\centering
	
	\includegraphics[width=1\textwidth]{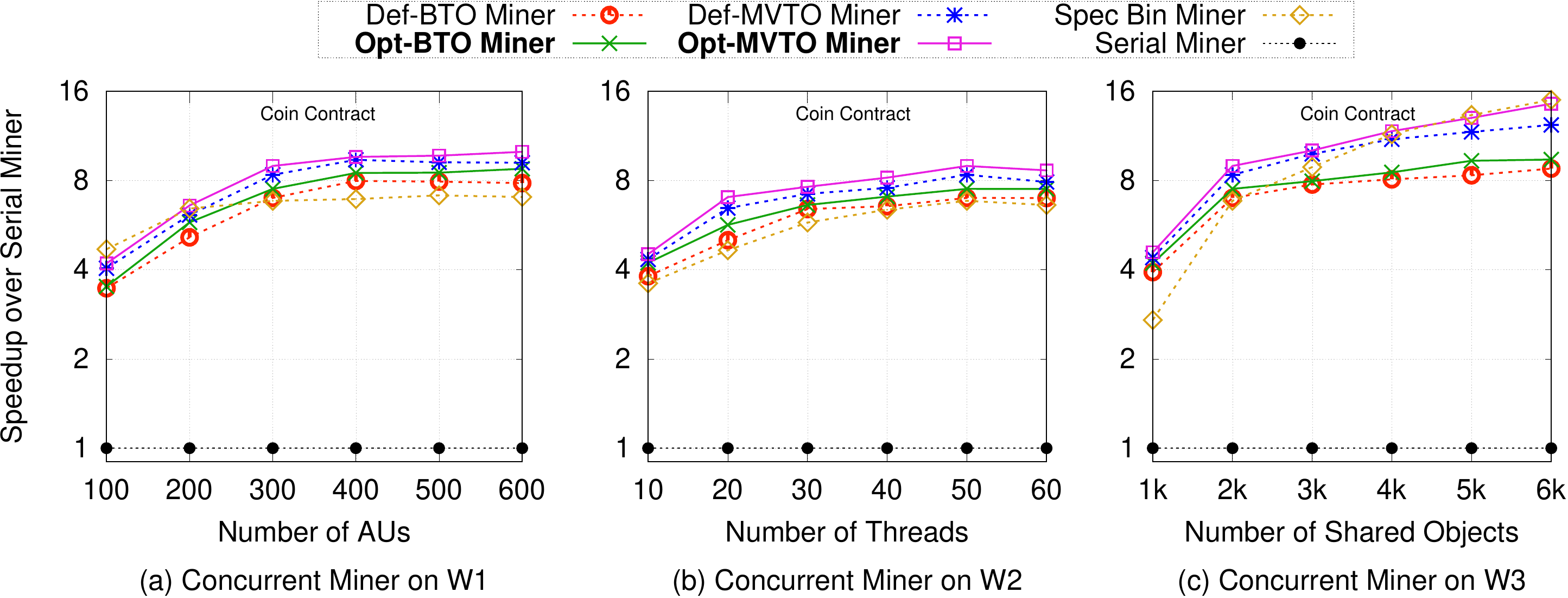}\vspace{-.4cm}
	\caption{Concurrent miner speedup over serial miner for coin contract.}
	\label{fig:miner-speedup-coin}
	
	
	\includegraphics[width=1\textwidth]{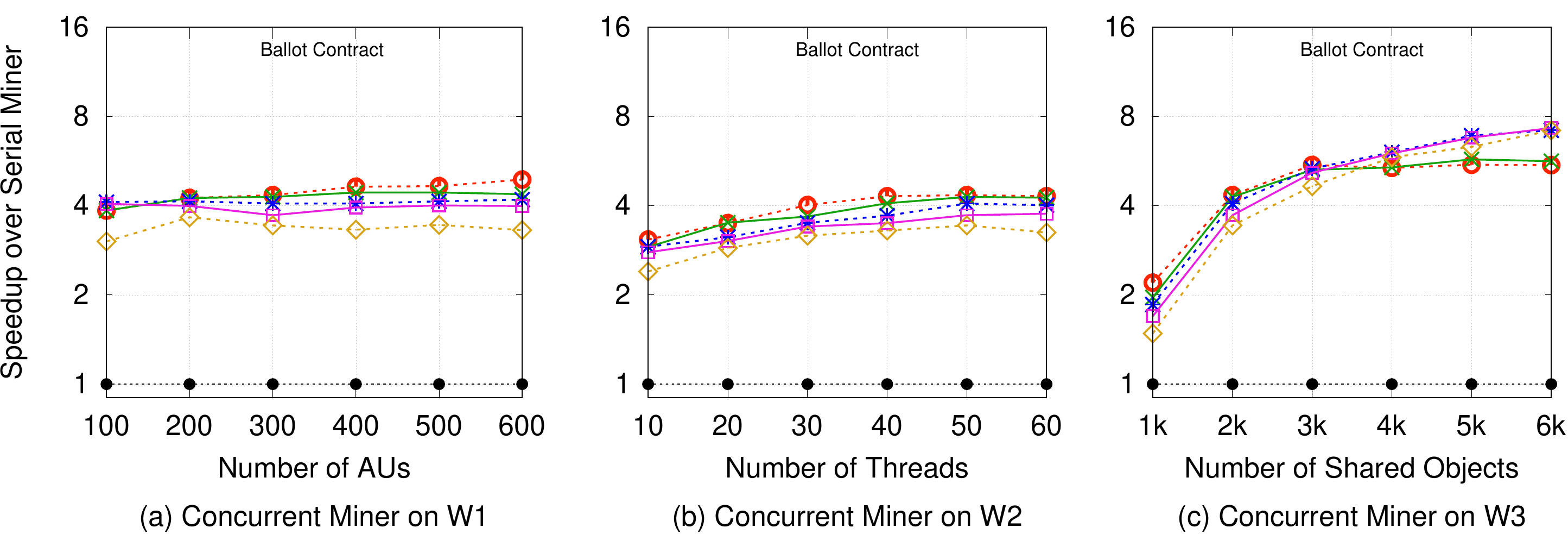}\vspace{-.4cm}
	\caption{ Concurrent miner speedup over serial miner for ballot contract.}
	\label{fig:miner-speedup-ballot}
	
	
	\includegraphics[width=1\textwidth]{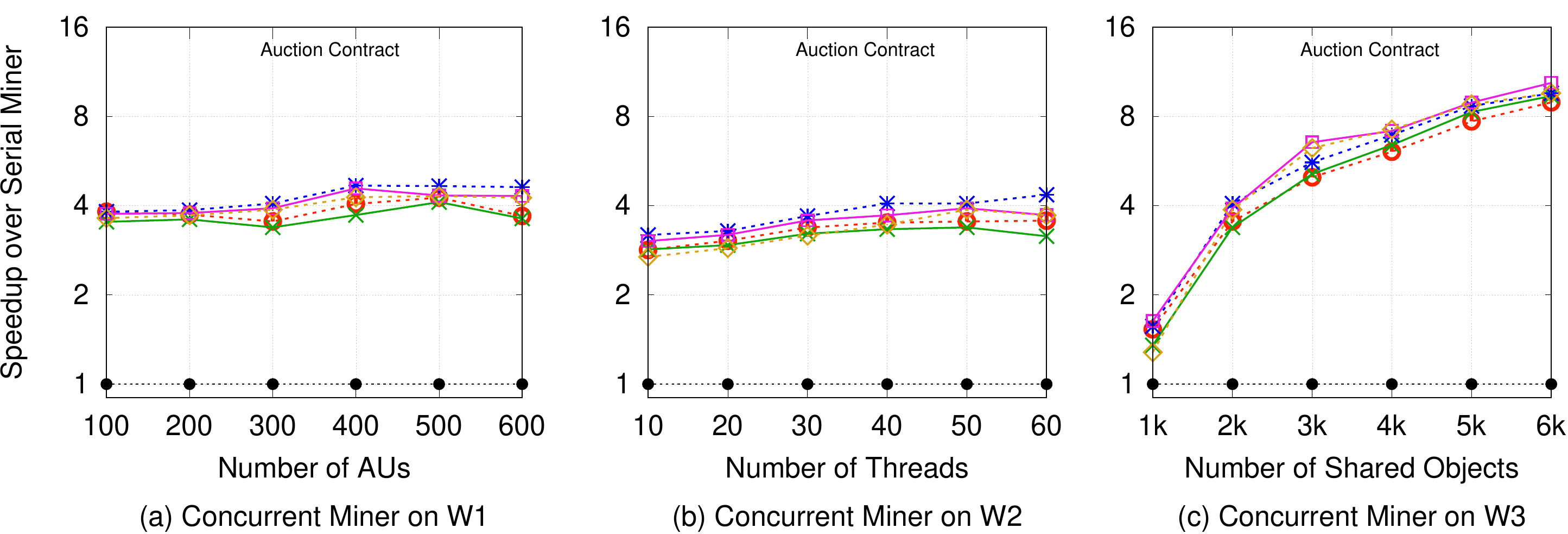}\vspace{-.4cm}
	\caption{Concurrent miner speedup over serial miner for auction contract.}
	\label{fig:miner-speedup-auction}
	
	
	\includegraphics[width=1\textwidth]{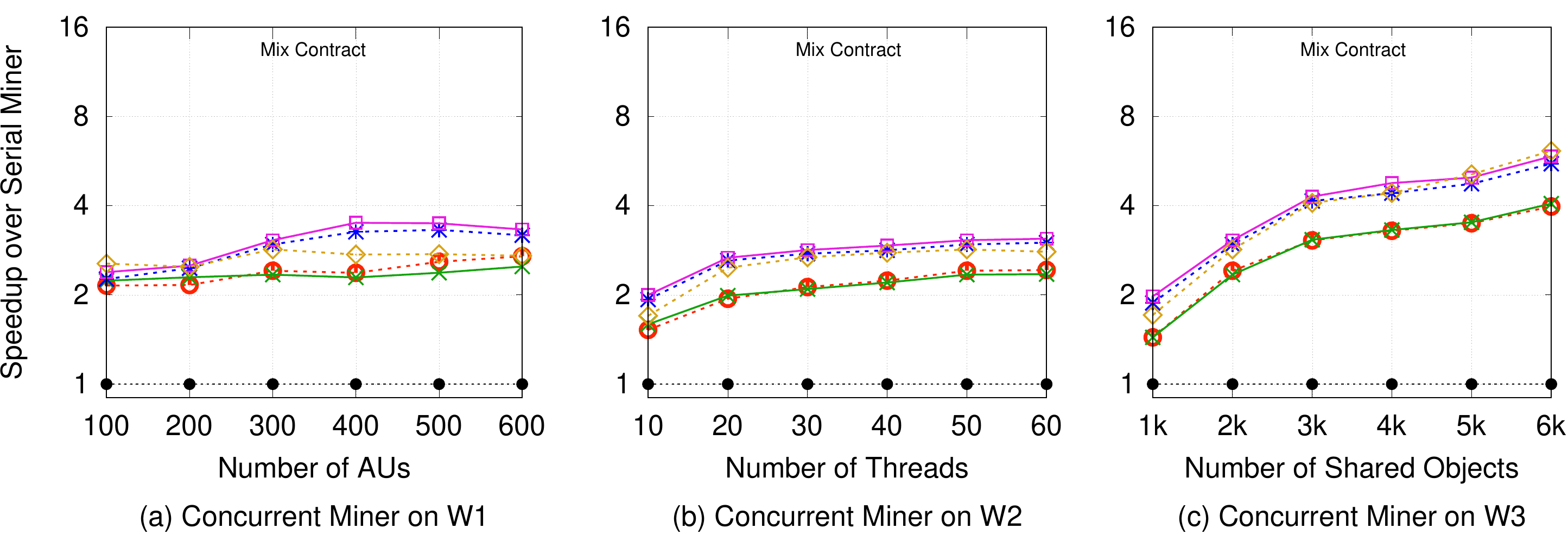}\vspace{-.4cm}
	\caption{Concurrent miner speedup over serial miner for mix contract.}
	\label{fig:miner-speedup-mix}
\end{figure}

\begin{figure}[H]
	\centering
	\includegraphics[width=1\textwidth]{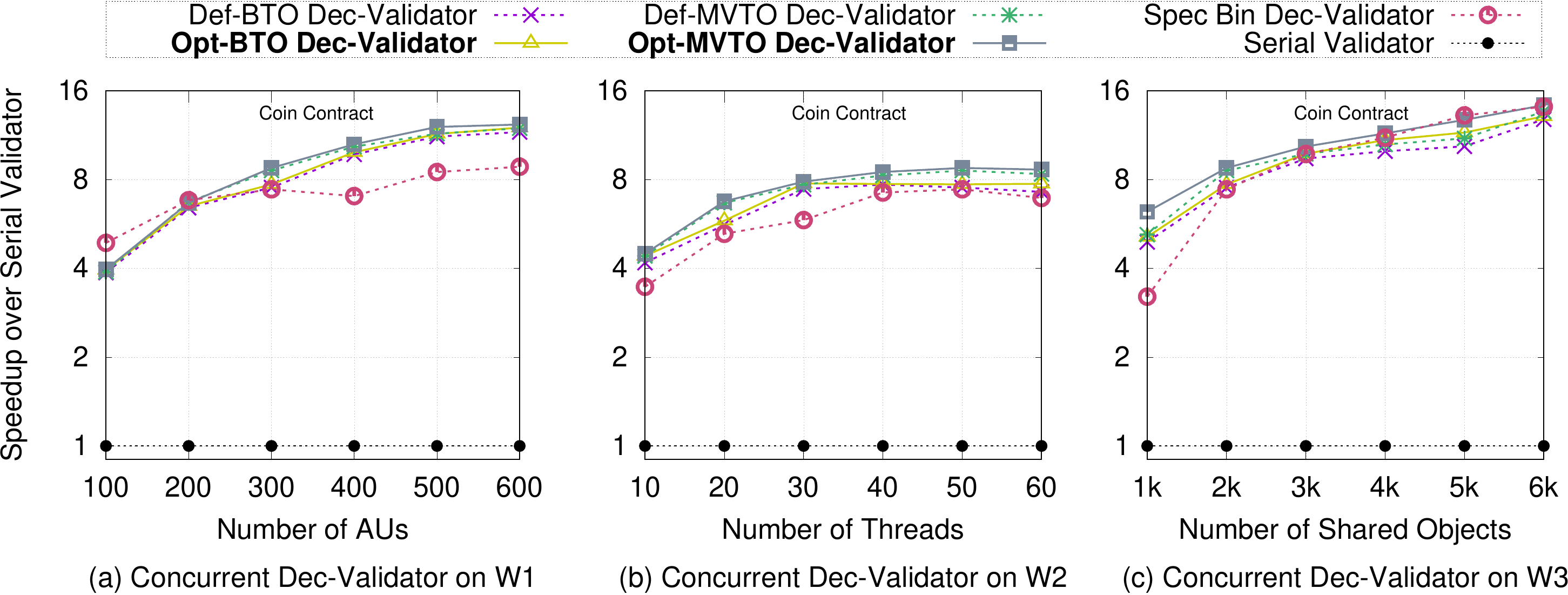}\vspace{-.4cm}
	\caption{Concurrent decentralized validator speedup over serial validator for coin contract.}
	\label{fig:decval-speedup-coin}
	
	
	\includegraphics[width=1\textwidth]{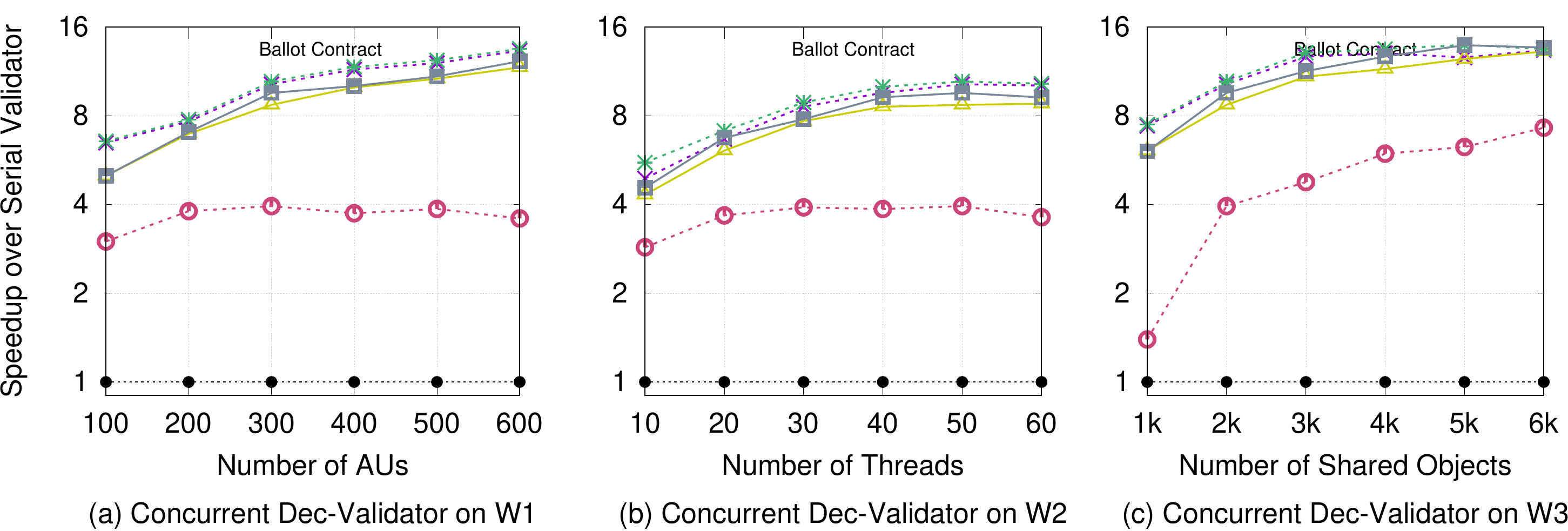}\vspace{-.4cm}
	\caption{Concurrent decentralized validator speedup over serial validator for ballot contract.}
	\label{fig:decval-speedup-ballot}
	
	
	\includegraphics[width=1\textwidth]{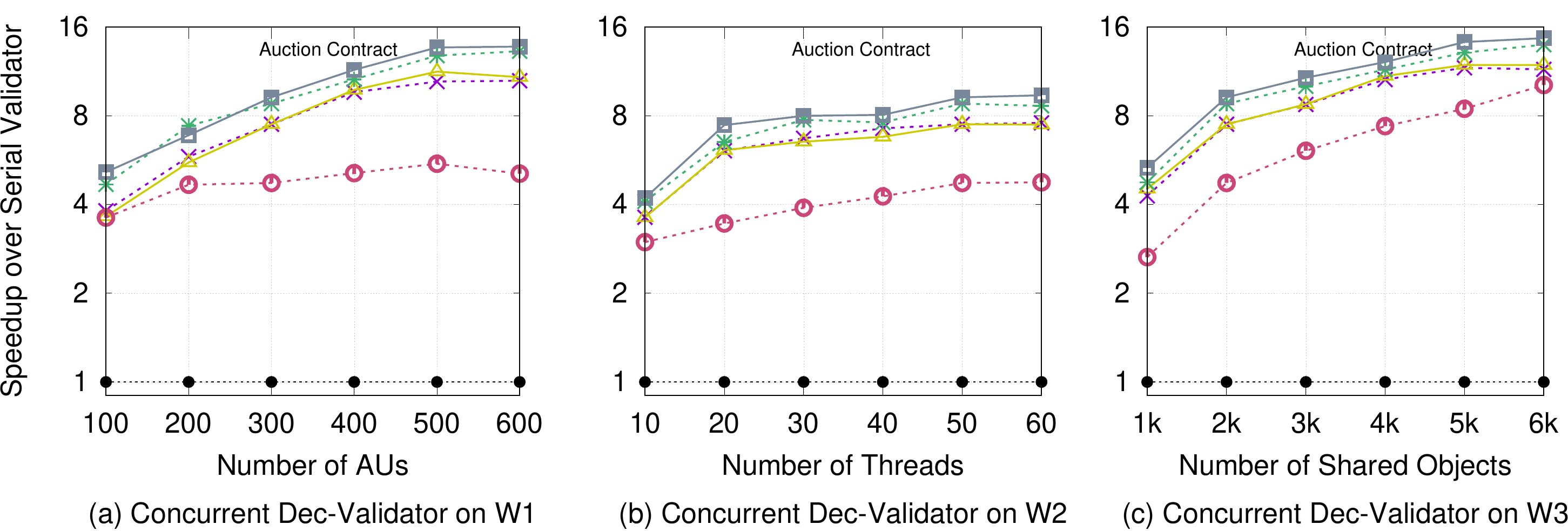}\vspace{-.4cm}
	\caption{Concurrent decentralized validator speedup over serial validator for auction contract.}
	\label{fig:decval-speedup-auction}
	
	
	\includegraphics[width=1\textwidth]{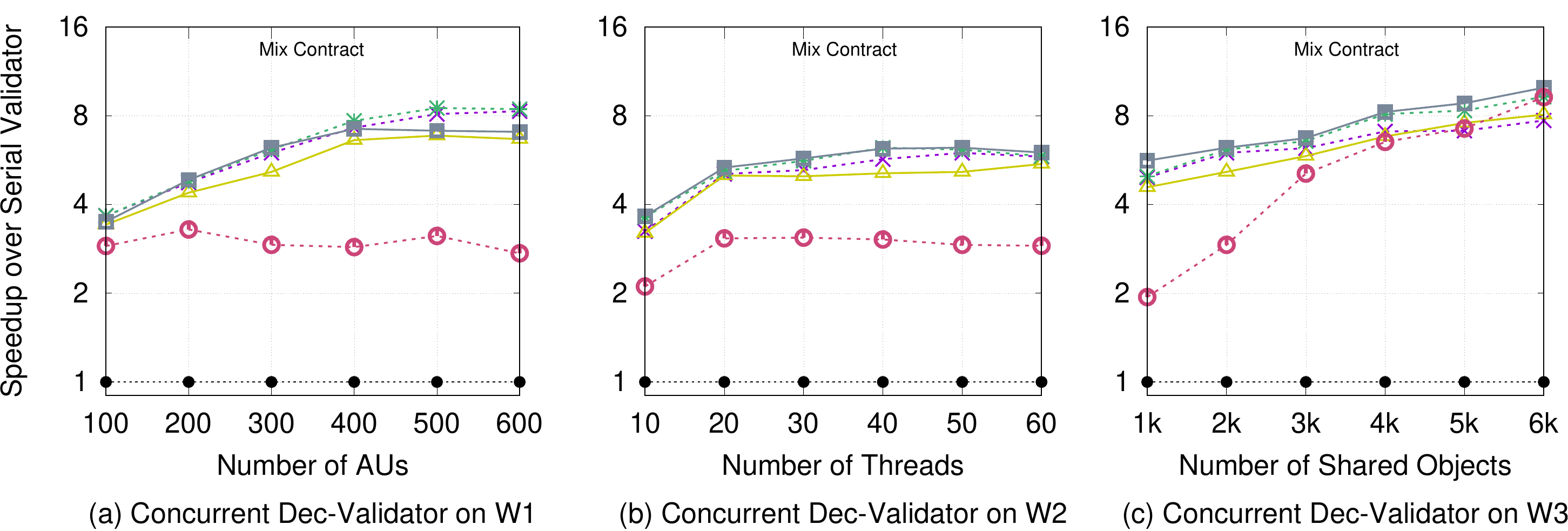}\vspace{-.4cm}
	\caption{Concurrent decentralized validator speedup over serial validator for mix contract.}
	\label{fig:decval-speedup-mix}
\end{figure}

\begin{figure}[H]
	\centering
		\includegraphics[width=1\textwidth]{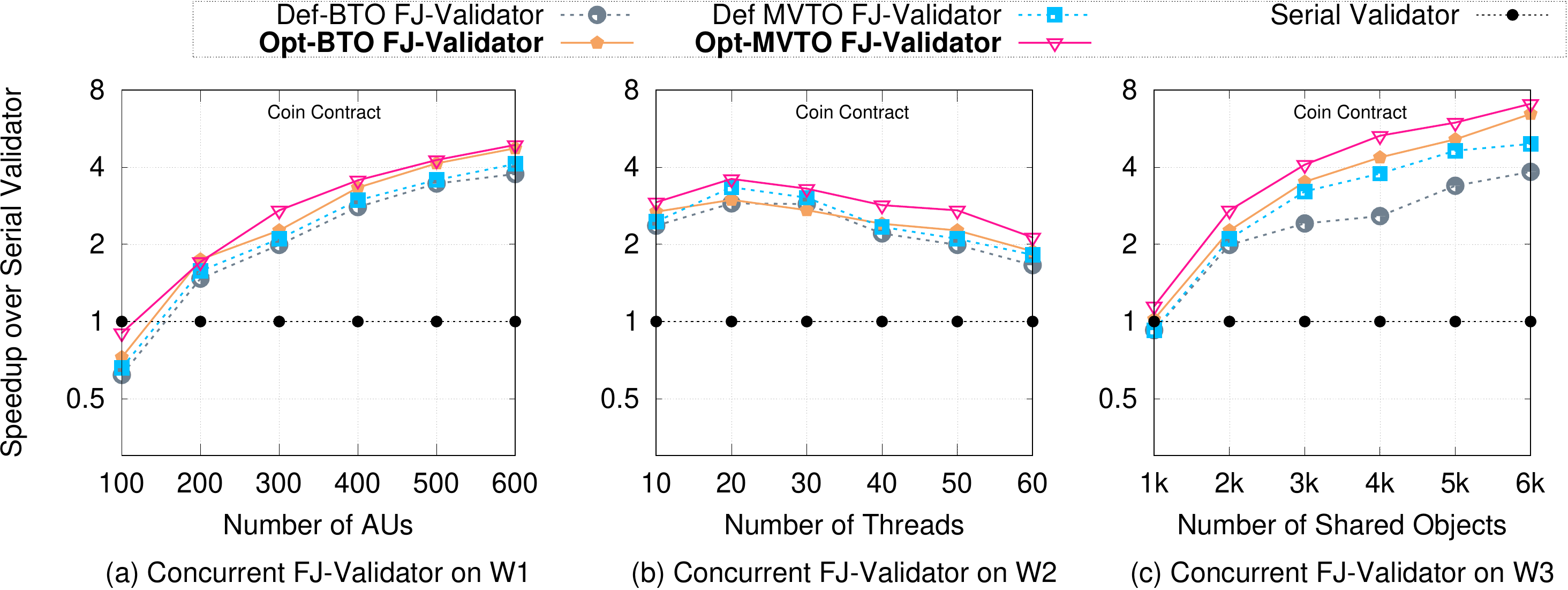}\vspace{-.4cm}
		\caption{Concurrent fork join validator speedup over serial validator for coin contract.}
		\label{fig:fj-speedup-coin}
		
		\vspace{.2cm}
		
		\includegraphics[width=1\textwidth]{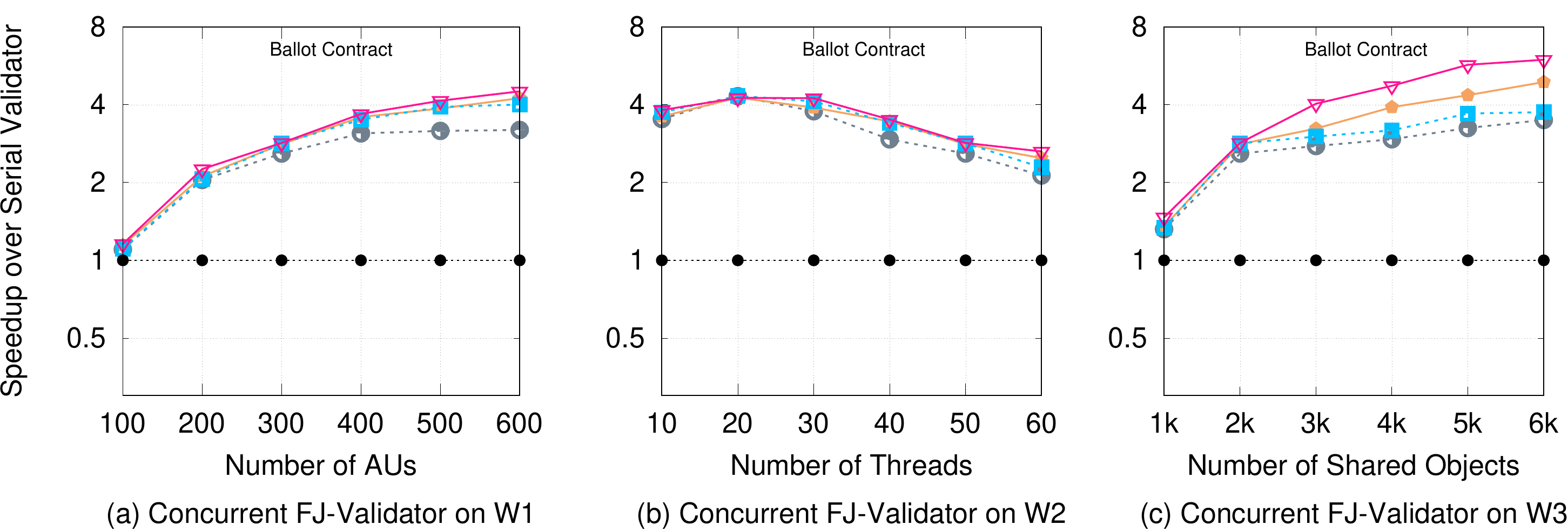}\vspace{-.4cm}
		\caption{Concurrent fork join validator speedup over serial validator for ballot contract.}
		\label{fig:fj-speedup-ballot}
		
		\vspace{.2cm}
		
		\includegraphics[width=1\textwidth]{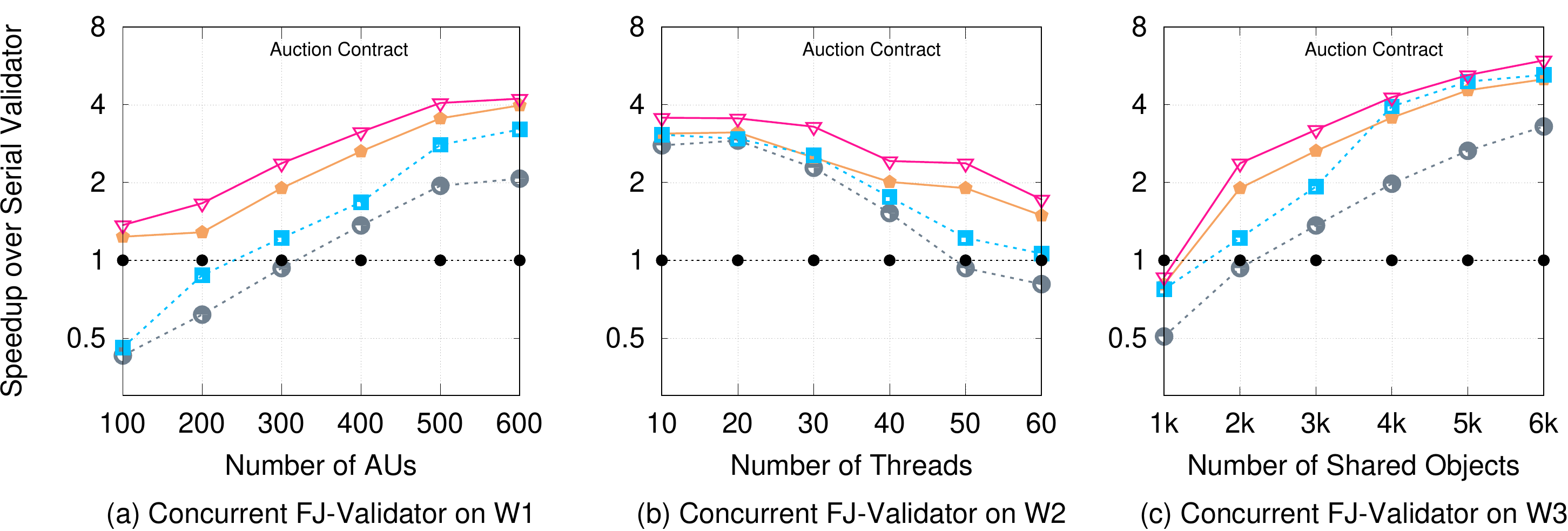}\vspace{-.4cm}
		\caption{Concurrent fork join validator speedup over serial validator for auction contract.}
		\label{fig:fj-speedup-auction}
		
		\vspace{.2cm}

		\includegraphics[width=1\textwidth]{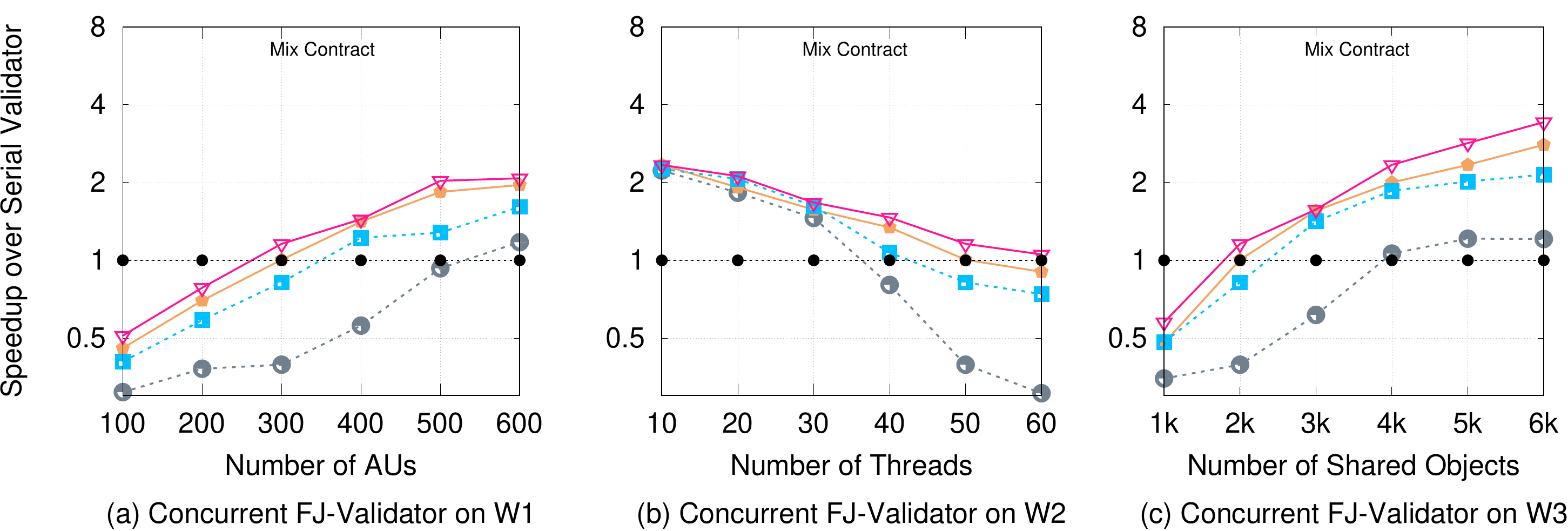}\vspace{-.4cm}
		\caption{Concurrent fork join validator speedup over serial validator for mix contract.}
		\label{fig:fj-speedup-mix}
\end{figure}

\figref{decval-speedup-coin}(a) through \figref{fj-speedup-mix}(a) show the speedup for concurrent validators over serial validators on W1. The speedup for concurrent validators (decentralized and fork-join) increases with an increase in \sctrn{s}. \figref{decval-speedup-coin}(a) to \figref{decval-speedup-mix}(a) demonstrate the speedup achieved by decentralized validator. It can be observed that for read-intensive benchmarks, the optimized MVTO decentralized validator (Opt-MVTO Dec-Validator) outperforms other validators. In contrast, in write-intensive benchmarks, the default MVTO decentralized validator (Def-MVTO Dec-Validator) achieves better speedup over other validators. Due to the overhead of multithreading for the concurrent bin with very fewer \sctrn{s}. We observed that with increasing \sctrn{s} in the blocks, the conflicts also increase. As a result, the number of transactions in the concurrent bin decreases. The speculative bin decentralized validator (Spec Bin Dec-Validator) speedup is quite less over concurrent STM Dec-Validators. Because STM miner precisely determines the dependencies between the \sctrn{s} of the block and harness the maximum concurrency than the bin-based miner. However, suppose the block consists of the \sctrn{s} with very few dependencies. In that case, Spec Bin Dec-Validator is expected to outperform other validators, as shown in the \figref{decval-speedup-coin}(a).


\figref{fj-speedup-coin}(a) to \figref{fj-speedup-mix}(a) show the speedup for fork-join validators on W1 for all the benchmarks. We can observe that the proposed optimized MVTO fork-join validator (Opt-MVTO FJ-Validator) outperforms other validators due to lower overheads at the fork-join master validator thread to allocate independent \sctrn{s} to slave validator threads. We noticed that decentralized concurrent validators speedup is quite high over fork-join concurrent validators because there is no bottleneck in this approach for allocating the \sctrn{s}. All threads in the decentralized approach work independently. It can also be observed that with fewer \sctrn{s} in several benchmarks, the speedup by fork-join validators drops to the point where it is less than the serial validators due to the overhead of thread creation dominate the speedup achieved, as shown in \figref{fj-speedup-ballot}(a), \figref{fj-speedup-auction}(a) and \figref{fj-speedup-mix}(a).

In W1, concurrent miners achieve a minimum of $\approx 2\times$ and maximum up to 10$\times$ speedup over serial miners across the contracts. The concurrent STM decentralized validators achieve speedup minimum $\approx 4\times$ and maximum up to $\approx 14\times$ while Spec Bin Dec-Validator ranges from $\approx 3\times$ to $\approx 9\times$ over serial miner across the contracts. The fork-join concurrent validators achieve a maximum speedup of $\approx 5\times$ over the serial validator.

\figref{miner-speedup-coin}(b) to \figref{fj-speedup-mix}(b) show the speedup on W2. The speedup increases with an increase in the number of threads. However, it is limited by the maximum number of logical threads in the experimental system. Thus, a slight drop in the speedup can be seen from 50 threads to 60 threads because the experimental system has a maximum of 56 logical threads. The reset of the concurrent miner observations is similar to the workload W1 based on read-intensive and write-intensive benchmarks. 

As shown in the \figref{decval-speedup-coin}(b) to \figref{decval-speedup-mix}(b), the concurrent decentralized validators speedup increase with an increase in threads. While as shown in \figref{fj-speedup-coin}(b) to \figref{fj-speedup-mix}(b), the concurrent fork-join validators speedup drops down with an increase in threads. The reason for this drop in the speedup is that the master validator thread in the fork-join approach becomes a bottleneck. The decentralized validator's observation shows that for the read-intensive benchmark, the Opt-MVTO Dec-validator outperforms other validators. While in the write-intensive benchmark, the Def-MVTO Dec-validator outperforms other validators, as shown in \figref{decval-speedup-ballot}(b). However, in the fork-join validator approach, the proposed Opt-MVTO FJ-validator outperforms all other validators due to the optimization benefit of bin based approach inclusion.


In W2, concurrent miners achieve a minimum of $\approx 1.5\times$ and achieves maximum up to $\approx 8\times$ speedup over serial miners across the contracts. The concurrent STM decentralized validators achieve speedup minimum $\approx 4\times$ and maximum up to $\approx 10\times$ while Spec Bin Dec-Validator ranges from $\approx 3\times$ to $\approx 7\times$ over serial miner across the contracts. The fork-join concurrent validators achieve a maximum speedup of $\approx 4.5\times$ over the serial validator.

The plots in \figref{miner-speedup-coin}(c) to \figref{fj-speedup-mix}(c) show the concurrent miners and validators speedup on W3. As shared objects increase, the concurrent miner speedup increases because conflict decreases due to less contention. Additionally, when contention is very low, more \sctrn{s} are added in the concurrent bin. However, it also depends on the contract. If the contract is a write-intensive, fewer \sctrn{s} are added in the concurrent bin. While more \sctrn{s} added in the concurrent bin for read-intensive contracts.

As shown in \figref{miner-speedup-coin}(c) and \figref{miner-speedup-mix}(c), the speculative bin miners surpass STM miners due to read-intensive contracts. While in \figref{miner-speedup-ballot}(c) and  \figref{miner-speedup-auction}(c), the Def-MVTO Miner outperform other miners as shared objects increase. In contrast, Def-BTO Miner performs better over other miners when \sctrn{s} are fewer because search time in write-intensive contracts to determine respective versions is much more in MVTO miner than BTO miner. Although, all concurrent miners performers better than the serial miner. In W3, concurrent miners start at around 1.3$\times$ and archives maximum up to 14$\times$ speedup over serial miners across all the contracts.


The speedup by validators (decentralized and fork-join) increases with shared objects. In \figref{decval-speedup-coin}(c), \figref{decval-speedup-auction}(c), and \figref{decval-speedup-mix}(c), proposed Opt-STM Dec-Validator perform better over other validators. However, for write-intensive contracts, the number of \sctrn{s} in the concurrent bin would be less. Therefore, the speedup by Def-STM Dec-Validators is greater than Opt-STM Dec-Validators, as shown in \figref{decval-speedup-ballot}(c). The Spec Bin Dec-Validator speedup is quite less over concurrent STM Dec-Validators because STM miner precisely determines the dependencies between the \sctrn{s} than the bin based miner.

In fork-join validators, proposed Opt-STM FJ-Validators outperform over all other FJ-Validators, as shown in \figref{fj-speedup-coin}(c) to \figref{fj-speedup-mix}(c) because of less contention at the master validator thread in the proposed approach to allocate independent \sctrn{s} to slave validator threads. We noticed that decentralized concurrent validators speedup is relatively high over fork-join concurrent validators with similar reasoning explained above. In W3, concurrent STM decentralized validators start at around 4$\times$ and achieve a maximum up to 14$\times$ speedup while Spec Bin Dec-Validator ranges from 1$\times$ to 14$\times$ speedup over serial miner across the contracts. The fork-join concurrent validators achieve a maximum speedup of 7$\times$ over the serial validator. The concurrent validators benefited from the work of the concurrent miners and outperformed serial validators.

\begin{figure}[!t]
	\centering
		{\includegraphics[width=1\textwidth]{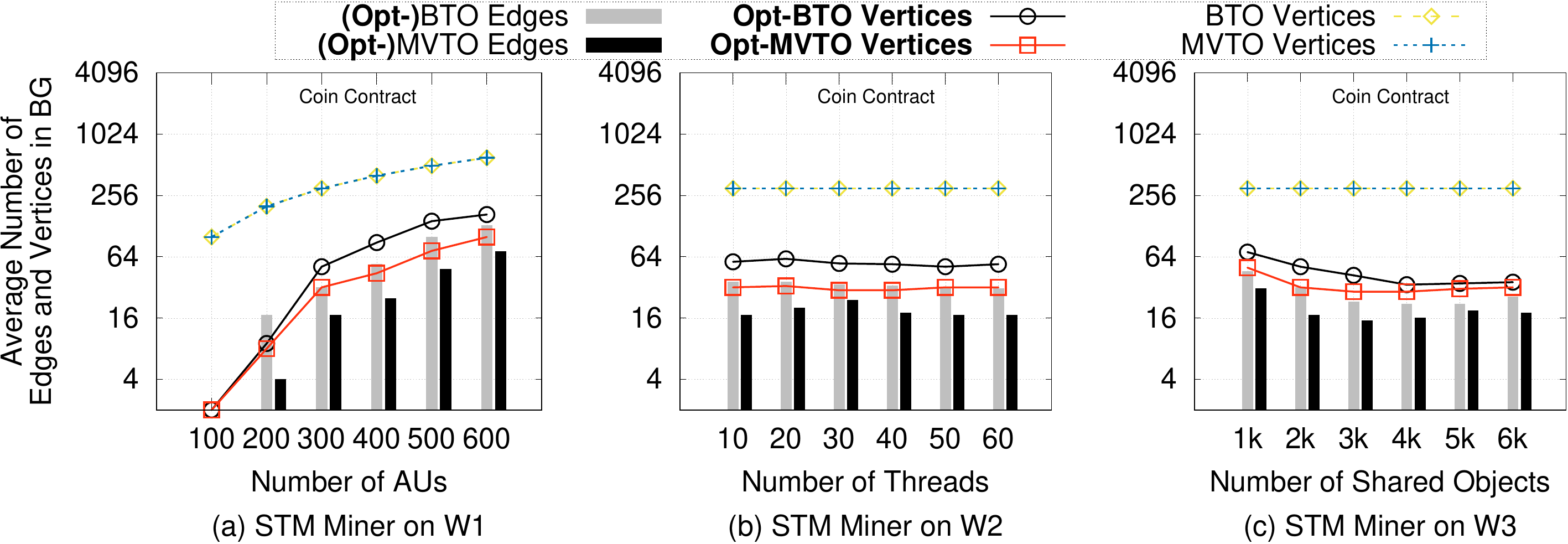}}\vspace{-.4cm}
		\caption{Average number of edges (dependencies) and vertices (\sctrn{s}) in block graph for coin contract.}
		\label{fig:bgCoin}
		
		\vspace{.2cm}    
		
		{\includegraphics[width=1\textwidth]{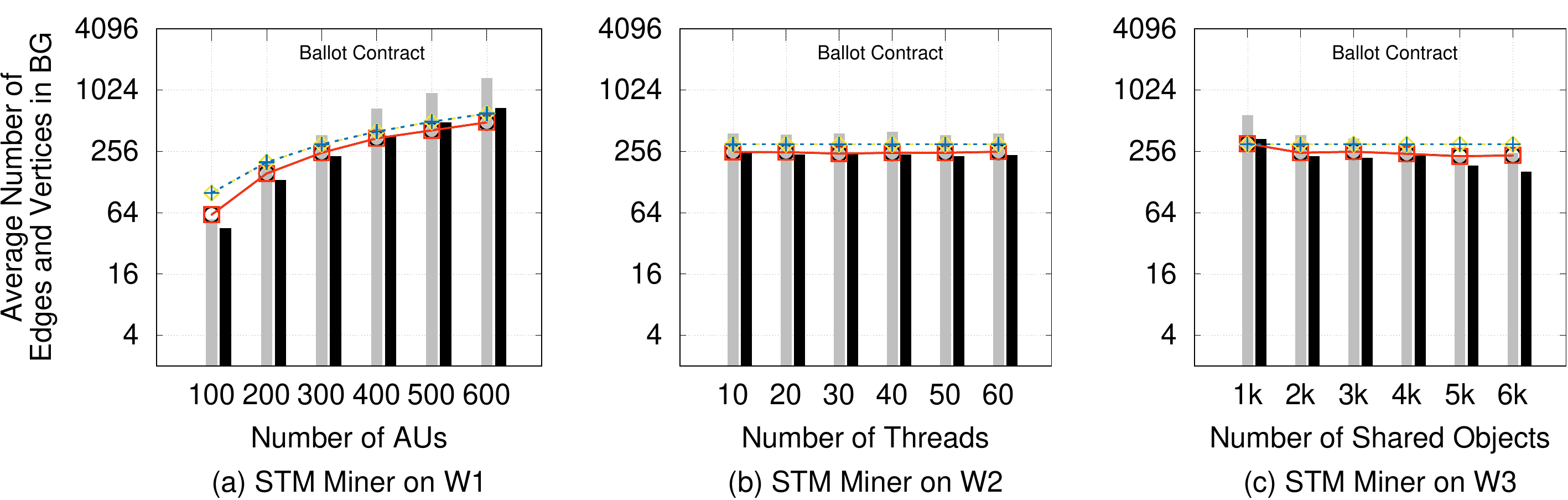}}\vspace{-.4cm}
		\caption{Average number of edges (dependencies) and vertices (\sctrn{s}) in block graph for ballot contract.}
		\label{fig:bgBallot}
		
		\vspace{.2cm}
		
		{\includegraphics[width=1\textwidth]{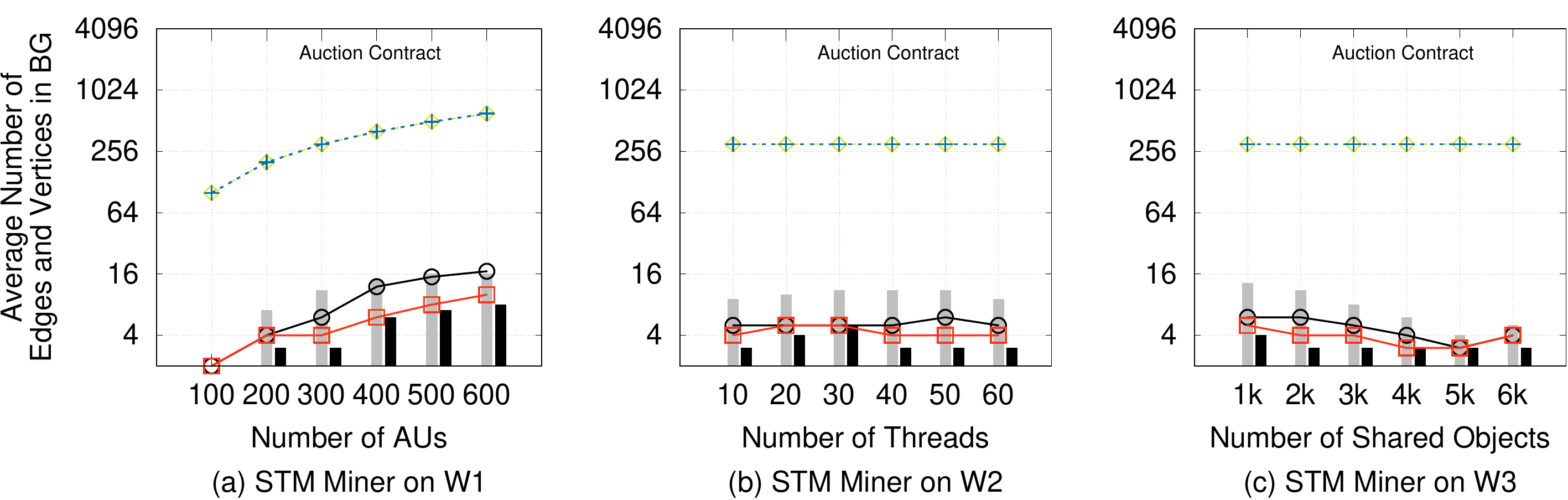}}\vspace{-.4cm}
		\caption{Average number of edges (dependencies) and vertices (\sctrn{s}) in block graph for auction contract.}
		\label{fig:bgAuction}
		
		\vspace{.2cm}

		{\includegraphics[width=1\textwidth]{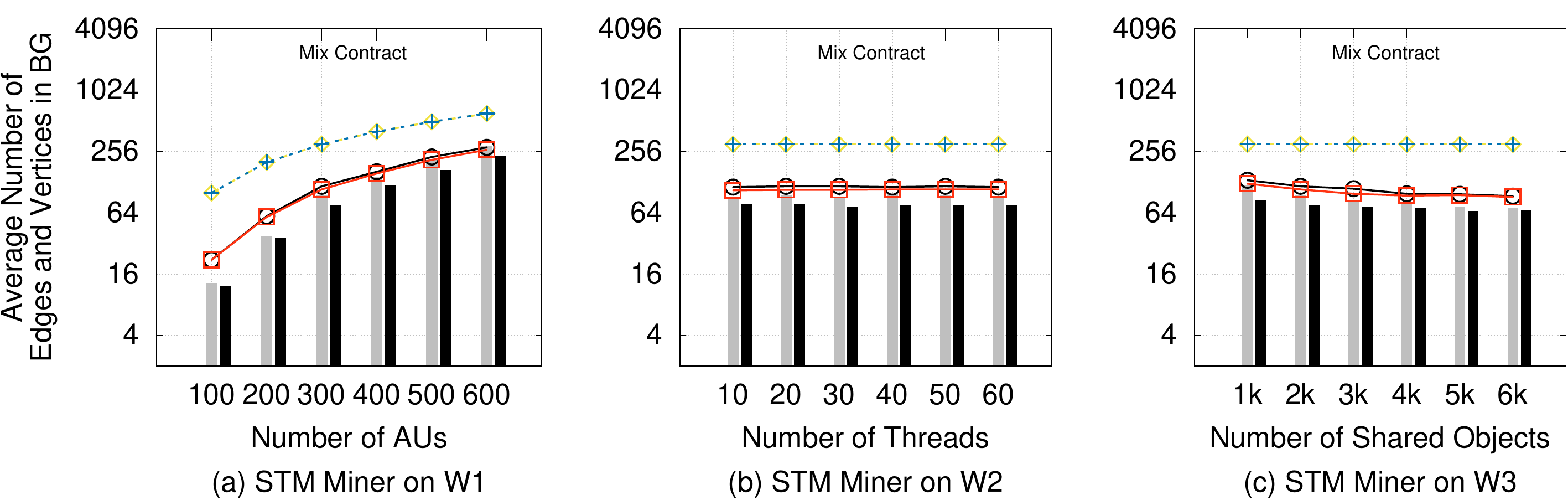}}\vspace{-.4cm}
		\caption{Average number of edges (dependencies) and vertices (\sctrn{s}) in block graph for mix contract.}
		\label{fig:bgMix}
\end{figure}



\figref{bgCoin} to \figref{bgMix} show the average number of edges (dependencies as histograms) and vertices (\sctrn{s} as line chart) in the BG for mix contract on all the workloads\footnote{We used histograms and line chart to differentiate vertices and edges to avoid confusion in comparing the edges and vertices.}. The average number of edges (dependencies) in the BG for both Default and Optimized approach for respective STM protocol remains the same; hence only two histograms are plotted for simplicity. As shown in the \figref{bgCoin}(a) to \figref{bgMix}(a) with increasing \sctrn{s} in W1, the BG edges and vertices also increase. It shows that the contention increases with increasing \sctrn{s} in the blocks. As shown in the \figref{bgCoin}(b) to \figref{bgMix}(b) in W2, the number of vertices and edges does not change much. However, in the W3, the number of vertices and edges decreases, as shown in \figref{bgCoin}(c) to \figref{bgMix}(c).


In our proposed approach, the BG consists of vertices respective to only conflicting \sctrn{s}, and non-conflicting \sctrn{s} are stored in the concurrent bin. While in Anjana et al.~\cite{Anjana+:CESC:PDP:2019} approach, all the \sctrn{s} had corresponding vertex nodes in the BG shown in \figref{bgCoin} to \figref{bgMix}. So, in W1, it will be 100 vertices in the BG if block consists of 100 \sctrn{s} and 200 if block consists of 200 \sctrn{s}. In W2 and W3, it will be 300 vertices. 
Having only conflicting \sctrn{s} vertices in BG saves much space because each vertex node takes 28-byte storage space.


The average block size in the Bitcoin and Ethereum blockchain is $\approx 1200$ KB~\cite{BitcoinAvgBlockSize} and $\approx 20.98$ KB~\cite{EthereumAvgBlockSize}, respectively measured for the interval of Jan 1$^{st}$, 2019 to Dec 31$^{th}$, 2020. Further, the block size keeps on increasing, and so the number of transactions in the block. The average number of transactions in the Ethereum block is $\approx 100$~\cite{EthereumAvgBlockSize}. Therefore, in the Ethereum blockchain, each transaction size is an average $\approx 0.2$ KB ($\approx 200$ bytes). We computed the block size based on these simple calculations when \sctrn{s} vary in the block for W1. The \eqnref{blocksize} is used to compute the block size (B) for the experiments. 
\begin{equation}
	B = 200 * N_{\sctrn{s}}
	\label{eq:blocksize}
\end{equation}
Where, $B$ is block size in bytes, $N_{\sctrn{s}}$ number of \sctrn{s} in block, and $200$ is the average size of an \sctrn{} in bytes.

To store the block graph $BG(V, E)$ in the block, we used \emph{adjacency list}. In the BG, a vertex node $V_s$ takes $28$ bytes storage, which consists of 3 integer variables and 2 pointers. While an edge node $E_{s}$ needs a total of $20$ bytes storage. The \eqnref{BGSize} is used to compute the size of BG ($\beta$ bytes). While \eqnref{perBG} is used to compute the additional space ($\beta_{p}$ percentage) needed to store BG in the block.

\begin{figure}[H]
	\centering
	{\includegraphics[width=1\textwidth]{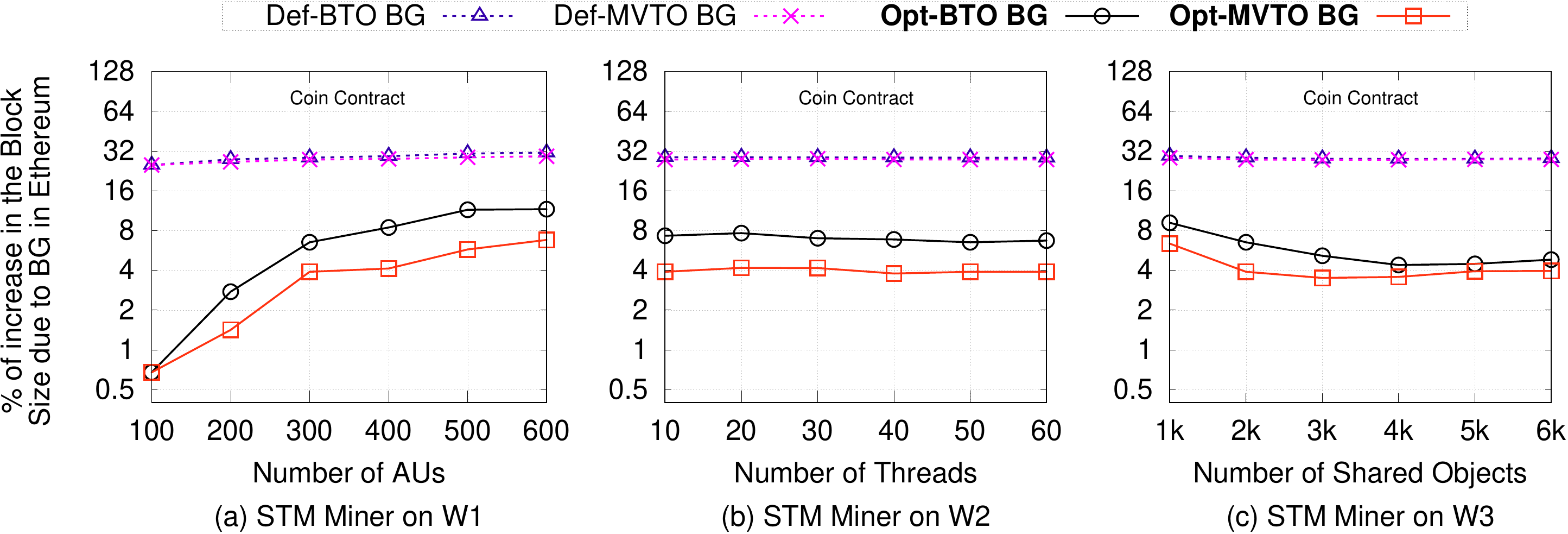}}\vspace{-.4cm}
	\caption{Percentage of additional space to store block graph in Ethereum block for coin contract.}
	\label{fig:incSizeCoin}
	
	\vspace{.22cm}    
	
	{\includegraphics[width=1\textwidth]{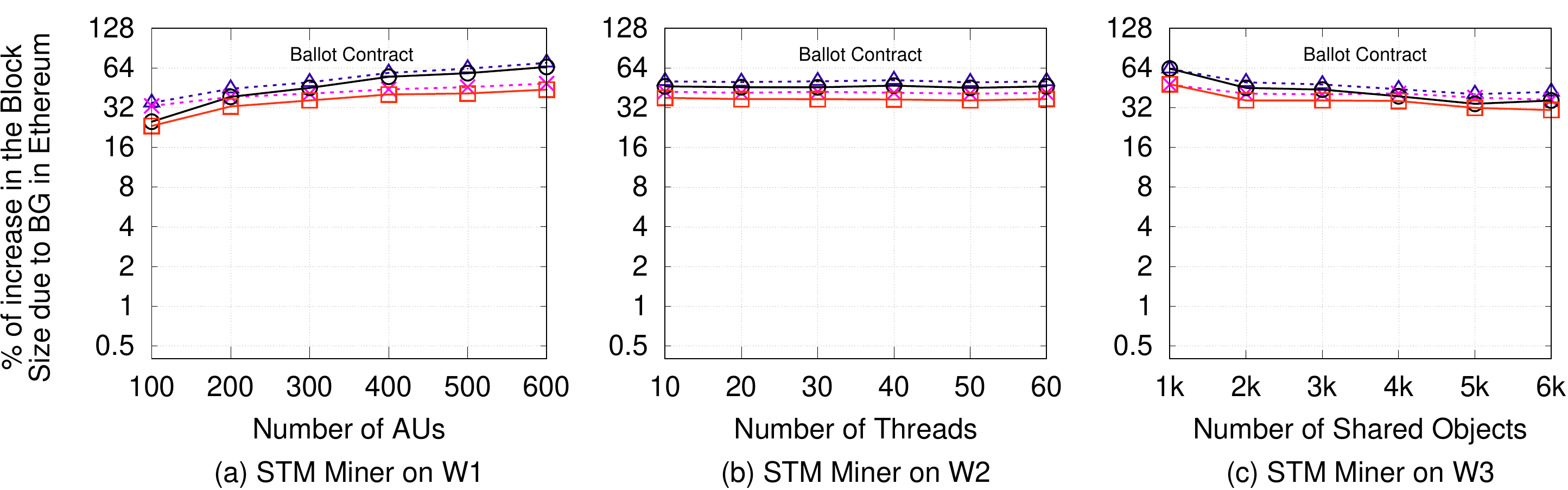}}\vspace{-.4cm}
	\caption{Percentage of additional space to store block graph in Ethereum block for ballot contract.}
	\label{fig:incSizeBallot}
	
	\vspace{.22cm}
	
	{\includegraphics[width=1\textwidth]{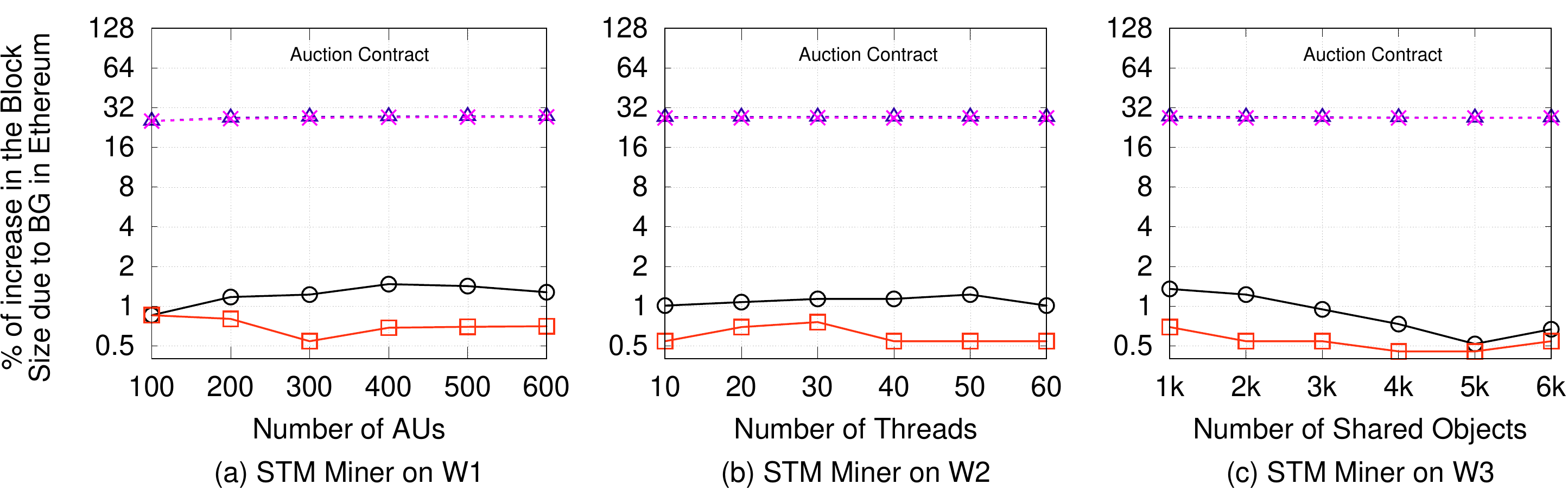}}\vspace{-.4cm}
	\caption{Percentage of additional space to store block graph in Ethereum block for auction contract.}
	\label{fig:incSizeAuction}
	
	\vspace{.22cm}
	
	{\includegraphics[width=1\textwidth]{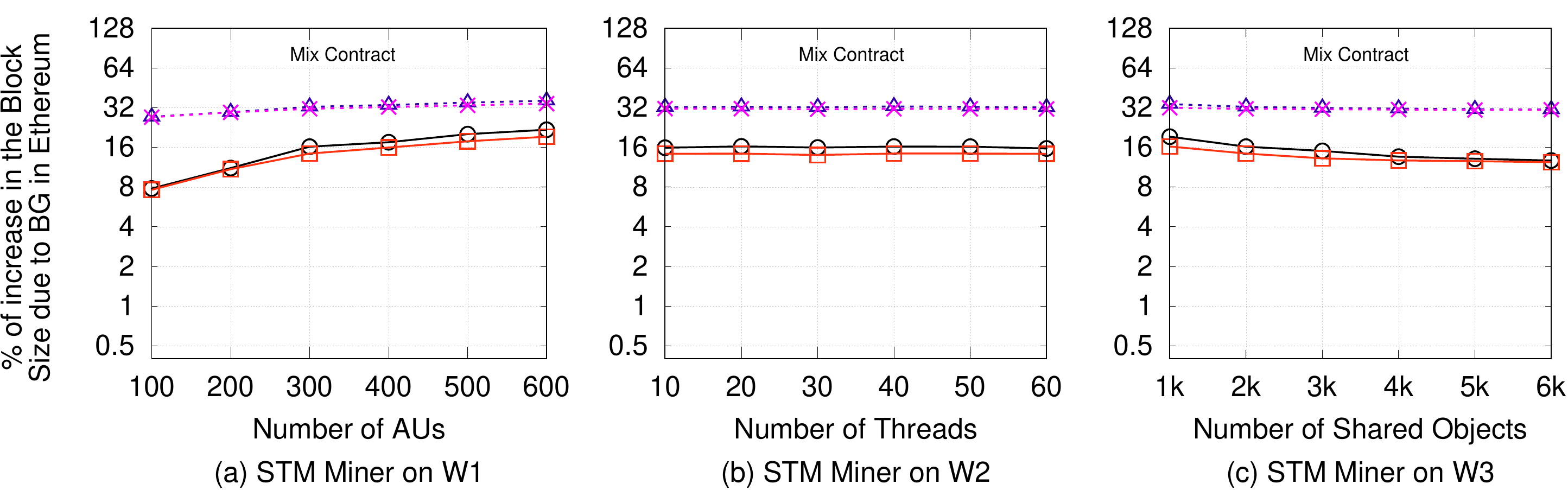}}\vspace{-.4cm}
	\caption{Percentage of additional space to store block graph in Ethereum block for mix contract.}
	\label{fig:incSizeMix}
\end{figure}

\begin{equation}
	\beta = (V_{s} * N_{\sctrn{s}}) + (E_{s} * M_{e})
	\label{eq:BGSize}
\end{equation}
Where, $\beta$ is size of BG in bytes, $V_s$ is size of a vertex node of $BG$ in bytes, $E_{s}$ is size (in bytes) of a edge node of $BG$, and $M_{e}$ is number of edges in $BG$.
\begin{equation}
	\beta_{p} = ({\beta*100})/{B}
	\label{eq:perBG}
\end{equation}


The plots in \figref{incSizeCoin} to \figref{incSizeMix} demonstrate the average percentage of additional storage space required to store BG in the Ethereum block on all benchmarks and workloads. We can observe that the space requirement also increases with an increase in the number of dependencies and vertices in BG. However, the space requirement of our proposed approach is smaller than the existing default approach. As shown in the \figref{bgBallot}, the dependencies and vertices are highest in the ballot contract compared to other contracts, so the space requirement is also high, as shown in \figref{incSizeBallot}. This is because the ballot is a write-intensive benchmark. It can be seen that the space requirements of BG by Opt-BTO BG and Opt-MVTO BG is smaller than Def-BTO BG and Def-MVTO BG miner, respectively.



The proposed approach significantly reduces the BG size for mix contract as shown in \figref{incSizeMix} across all the workloads. Which clearly shows the storage efficiency of the proposed approach. 
The storage advantage comes from using a bin-based approach combined with the STM approach, where concurrent bin information needs to be added into the block, which requires less space than having a corresponding vertex in BG for each \sctrn{s} of the block. So, we combine the advantages of both the approaches (STM and Bin) to get maximum speedup with storage optimal BG. {The average space required for BG in \% w.r.t. block size is $34.55\%$, $31.69\%$, $17.24\%$, and $13.79\%$ by Def-BTO. Def- MVTO, Opt-BTO, and Opt-MVTO approach, respectively. The proposed Opt-BTO and Opt-MVTO BG are $2\times$ (or $200.47\%$) and $2.30\times$ (or $229.80\%$) efficient over Def-BTO and Def-MVTO BG, respectively. With an average speedup of $4.49\times$ and $5.21\times$ for Opt-BTO, Opt-MVTO concurrent miner over serial, respectively. The Opt-BTO and Opt-MVTO decentralized concurrent validator outperform an average of $7.68\times$ and $8.60\times$ than serial validator, respectively.}


\cmnt{
The proposed approach significantly reduces the BG size for mix contract as shown in \figref{incSizeMix} across all the workloads. Which clearly shows the storage efficiency of the proposed approach. 
The storage advantage comes from using a bin-based approach combined with the STM approach, where concurrent bin information needs to be added into the block, which requires less space than having a corresponding vertex in BG for each \sctrn{s} of the block. So, we combine the advantages of both the approaches (STM and Bin) to get maximum speedup with storage optimal BG. {The average space required for BG in \% w.r.t. block size is $34.96\%$, $31.76\%$, $17.71\%$, and $13.84\%$ by Def-BTO. Def- MVTO, Opt-BTO, and Opt-MVTO approach, respectively. The proposed Opt-BTO and Opt-MVTO BG are $1.97\times$ (or $197.40\%$) and $2.29\times$ (or $229.478\%$) efficient over Def-BTO and Def-MVTO BG, respectively. With an average speedup of 4.10$\times$ and 4.55$\times$ for Opt-BTO, Opt-MVTO concurrent miner over serial, respectively. The Opt-BTO and Opt-MVTO decentralized concurrent validator outperform an average of $7.05\times$ and $7.84\times$ than serial validator, respectively.}
\psnote{Need to update statics based on all plots and workloads.}



}

\section{Conclusion}
\label{sec:con}

To exploit the multi-core processors, we have proposed the concurrent execution of \scontract{} by miners and validators, which improves the throughput. Initially, the miner executes the smart contracts concurrently using optimistic STM protocol as BTO. To reduce the number of aborts and further improve efficiency, the concurrent miner uses MVTO protocol, which maintains multiple versions corresponding to each data object. Concurrent miner proposes a block that consists of a set of transactions, concurrent bin, BG, previous block hash, and the final state of each shared data objects. Later, the validators re-execute the same \SContract{} transactions concurrently and deterministically in two-phase using concurrent bin followed by the BG given by miner, which capture the conflicting relations among the transactions to verify the final state. 
{Overall, the proposed Opt-BTO and Opt-MVTO BG are $2\times$ (or $200.47\%$) and $2.30\times$ (or $229.80\%$) efficient over Def-BTO and Def-MVTO BG, respectively. With an average speedup of $4.49\times$ and $5.21\times$ for Opt-BTO, Opt-MVTO concurrent miner over serial, respectively. The Opt-BTO and Opt-MVTO decentralized concurrent validator outperform an average of $7.68\times$ and $8.60\times$ than serial validator, respectively.}

\vspace{1mm}
\noindent
\textbf{Acknowledgements.} This project was partially supported by a research grant from Thynkblynk Technologies Pvt. Ltd, and MEITY project number 4(20)/2019-ITEA. 

\cmnt{
To exploit the multi-core processors, we have proposed the concurrent execution of \scontract{} by miners and validators, which improves the throughput. Initially, the miner executes the smart contracts concurrently using optimistic STM protocol as BTO. To reduce the number of aborts and further improve efficiency, the concurrent miner uses MVTO protocol, which maintains multiple versions corresponding to each data object. Concurrent miner proposes a block that consists of a set of transactions, concurrent bin, BG, previous block hash, and the final state of each shared data objects. Later, the validators re-execute the same \SContract{} transactions concurrently and deterministically in two-phase using concurrent bin followed by the BG given by miner, which capture the conflicting relations among the transactions to verify the final state. 
{Overall, the proposed Opt-BTO and Opt-MVTO BG are $1.97\times$ (or $197.40\%$) and $2.29\times$ (or $229.478\%$) efficient over Def-BTO and Def-MVTO BG, respectively. With an average speedup of 4.10$\times$ and 4.55$\times$ for Opt-BTO, Opt-MVTO concurrent miner over serial, respectively. The Opt-BTO and Opt-MVTO decentralized concurrent validator outperform an average of $7.05\times$ and $7.84\times$ than serial validator, respectively.}

\vspace{1mm}
\noindent
\textbf{Acknowledgements.} This project was partially supported by a research grant from Thynkblynk Technologies Pvt. Ltd, and MEITY project number 4(20)/2019-ITEA. 

}

\bibliography{citations}

\begin{thebibliography}{10}
\expandafter\ifx\csname url\endcsname\relax
  \def\url#1{\texttt{#1}}\fi
\expandafter\ifx\csname urlprefix\endcsname\relax\def\urlprefix{URL }\fi
\expandafter\ifx\csname href\endcsname\relax
  \def\href#1#2{#2} \def\path#1{#1}\fi

\bibitem{Anjana+:CESC:PDP:2019}
P.~S. Anjana, S.~Kumari, S.~Peri, S.~Rathor, A.~Somani, An efficient framework
  for optimistic concurrent execution of smart contracts, in: 2019 27th
  Euromicro International Conference on Parallel, Distributed and Network-Based
  Processing (PDP), IEEE, 2019, pp. 83--92.

\bibitem{Anjana+Poster:ICDCN:2019}
P.~S. Anjana, S.~Kumari, S.~Peri, S.~Rathor, A.~Somani, Entitling concurrency
  to smart contracts using optimistic transactional memory, in: Proceedings of
  the 20th International Conference on Distributed Computing and Networking,
  ICDCN ’19, Association for Computing Machinery, New York, NY, USA, 2019, p.
  508.

\bibitem{Nakamoto:Bitcoin:2009}
S.~Nakamoto, Bitcoin: A peer-to-peer electronic cash system,
  \url{https://bitcoin.org/bitcoin.pdf} (2009).

\bibitem{ethereum:url}
{Ethereum}, \url{http://github.com/ethereum}, [Acessed 26-3-2019].

\bibitem{Dickerson+:ACSC:PODC:2017}
T.~Dickerson, P.~Gazzillo, M.~Herlihy, E.~Koskinen, {Adding Concurrency to
  Smart Contracts}, in: Proceedings of the ACM Symposium on Principles of
  Distributed Computing, PODC '17, ACM, New York, NY, USA, 2017, pp. 303--312.

\bibitem{Solidity}
{Solidity Documentation}, \url{https://solidity.readthedocs.io/}, [Accessed
  15-09-2020].

\bibitem{Papad:1979:JACM}
C.~H. Papadimitriou, {The serializability of concurrent database updates}, J.
  ACM 26~(4) (1979) 631--653.

\bibitem{GuerKap:Opacity:PPoPP:2008}
R.~Guerraoui, M.~Kapalka, On the correctness of transactional memory (2008)
  175--184.

\bibitem{WeiVoss:TIS:2002:Morg}
G.~Weikum, G.~Vossen, {Transactional Info Systems: Theory, Algorithms, and the
  Practice of Concurrency Control and Recovery, 2002}.

\bibitem{Kumar+:MVTO:ICDCN:2014}
P.~Kumar, S.~Peri, K.~Vidyasankar, {A TimeStamp Based Multi-version STM
  Algorithm}, in: ICDCN, Springer, 2014, pp. 212--226.

\bibitem{VikramHerlihy:EmpSdy-Con:Tokenomics:2019}
V.~Saraph, M.~Herlihy, {An Empirical Study of Speculative Concurrency in
  Ethereum Smart Contracts}, in: International Conference on Blockchain
  Economics, Security and Protocols (Tokenomics 2019), Vol.~71 of OASIcs,
  Dagstuhl, Germany, 2020, pp. 4:1--4:15.

\bibitem{HerlihyShavit:Progress:Opodis:2011}
M.~Herlihy, N.~Shavit, On the nature of progress, OPODIS 2011, Springer.

\bibitem{Nick:PublicN:journals:1997}
N.~Szabo, Formalizing and securing relationships on public networks, First
  Monday 2~(9).

\bibitem{Luu+:DIC:CCS:2015}
L.~Luu, J.~Teutsch, R.~Kulkarni, P.~Saxena, Demystifying incentives in the
  consensus computer, in: Proceedings of the 22Nd ACM SIGSAC Conference on
  Computer and Communications Security, CCS '15, ACM, New York, NY, USA, 2015,
  pp. 706--719.

\bibitem{Delmolino+:SSTCSC:FC:2016}
K.~Delmolino, M.~Arnett, A.~E. Kosba, A.~Miller, E.~Shi, Step by step towards
  creating a safe smart contract: Lessons and insights from a cryptocurrency
  lab, in: Financial Cryptography and Data Security - {FC} 2016 International
  Workshops, BITCOIN, VOTING, and WAHC, Christ Church, Barbados, February 26,
  2016., Springer.

\bibitem{Luu+:MSC:CCS:2016}
L.~Luu, D.-H. Chu, H.~Olickel, P.~Saxena, A.~Hobor, Making smart contracts
  smarter, in: CCS '16, ACM.

\bibitem{ZangandZang:ECSC:WBD:2018}
A.~Zhang, K.~Zhang, Enabling concurrency on smart contracts using multiversion
  ordering, in: Web and Big Data, Springer, Cham, 2018, pp. 425--439.

\bibitem{amiri2019parblockchain}
M.~J. Amiri, D.~Agrawal, A.~El~Abbadi, Parblockchain: Leveraging transaction
  parallelism in permissioned blockchain systems, in: 2019 IEEE 39th
  International Conference on Distributed Computing Systems (ICDCS), IEEE,
  2019, pp. 1337--1347.

\bibitem{anjana:ObjSC:Netys:2020}
P.~S. Anjana, H.~Attiya, S.~Kumari, S.~Peri, A.~Somani, Efficient concurrent
  execution of smart contracts in blockchains using object-based transactional
  memory, in: Networked Systems, Springer, Cham, 2021, pp. 77--93.

\bibitem{tm-book}
R.~Guerraoui, M.~Kapalka, {Principles of Transactional Memory, Synthesis
  Lectures on Distributed Computing Theory}, 2010.

\bibitem{KuznetsovPeri:Non-interference:TCS:2017}
P.~Kuznetsov, S.~Peri, {Non-interference and local correctness in transactional
  memory}, Theor. Comput. Sci. 688 (2017) 103--116.

\bibitem{HerlihyandWing:1990:LCC:ACM}
M.~P. Herlihy, J.~M. Wing, Linearizability: A correctness condition for
  concurrent objects, ACM Trans. Program. Lang. Syst., 1990.

\bibitem{Peri+:OSTM:Netys:2018}
S.~Peri, A.~Singh, A.~Somani, Efficient means of achieving composability using
  object based semantics in transactional memory systems, in: Networked
  Systems, Springer, Cham, 2019, pp. 157--174.

\bibitem{EthereumByNumbers}
E.~Muzzy, S.~Rizvi, D.~Sui, Ethereum by the numbers,
  \url{https://media.consensys.net/ethereum-by-the-numbers-3520f44565a9},
  [Accessed 15-09-2020] (2018).

\bibitem{EthereumGasLimit2020}
N.~Shukla, Ethereum’s increased gas limit enables network to hit 44 tps,
  \url{https://eng.ambcrypto.com/ethereums-increased-gas-limit-enables-network-to-hit-44-tps/amp/},
  [Accessed 15-09-2020].

\bibitem{DBLP:journals/corr/abs-1809-01326}
P.~S. Anjana, S.~Kumari, S.~Peri, S.~Rathor, A.~Somani,
  \href{http://arxiv.org/abs/1809.01326}{An efficient framework for concurrent
  execution of smart contracts}, CoRR abs/1809.01326.
\newline\urlprefix\url{http://arxiv.org/abs/1809.01326}

\bibitem{BitcoinAvgBlockSize}
{Bitcoin Block Size}, \url{https://www.blockchain.com/en/charts}, [Accessed
  15-09-2020].

\bibitem{EthereumAvgBlockSize}
{Ethereum Stats}, \url{https://etherscan.io/charts}, [Accessed 15-09-2020].

\end{thebibliography}

\end{document}